\documentclass[11pt,a4paper]{article}

\usepackage[utf8]{inputenc}
\usepackage[T1]{fontenc}
\usepackage{lmodern}

\usepackage{amsmath,amssymb,amsfonts}
\usepackage{graphicx}
\usepackage{booktabs}
\usepackage{multirow}
\usepackage{siunitx}
\usepackage{caption}
\usepackage{subcaption}
\usepackage{float}
\usepackage{geometry}
\usepackage{setspace}
\usepackage{hyperref}
\usepackage{enumitem}
\usepackage{amsthm}
\usepackage[numbers]{natbib}
\usepackage{orcidlink}
\usepackage{authblk}

\newcommand{\cF}{\mathcal{F}}
\newcommand{\cN}{\mathcal{N}}
\newcommand{\cE}{\mathcal{E}}
\newcommand{\cG}{\mathcal{G}}
\newcommand{\tstar}{\theta^{\star}}
\newcommand{\that}{\hat{\theta}}

\newcommand{\Pstar}{\mathbb{P}^{\tstar}}
\newcommand{\Estar}{\mathbb{E}^{\tstar}}
\newcommand{\Et}{\mathbb{E}^{\star}_{t}}
\newcommand{\Emi}{\mathbb{E}^{\star}_{-i,t+1}}
\newcommand{\E}{\mathbb{E}}
\newcommand{\bone}{\mathbf{1}}
\newcommand{\R}{\mathbb{R}}
\newcommand{\abar}{\bar{a}}
\newcommand{\aund}{\underline{a}}
\newcommand{\norminf}[1]{\lVert #1 \rVert_{\infty}}

\theoremstyle{plain}
\newtheorem{theorem}{Theorem}
\newtheorem{lemma}{Lemma}
\newtheorem{proposition}{Proposition}
\newtheorem{corollary}{Corollary}
\theoremstyle{definition}
\newtheorem{hypothesis}{Hypothesis}
\newtheorem{definition}{Definition}
\newtheorem{remark}{Remark}

\hypersetup{
    colorlinks=true,
    linkcolor=blue,
    citecolor=blue,
    urlcolor=blue
}

\title{\textbf{Non-Bayesian Social Learning with Heterogeneous and Time-Varying Aggregation Rules}}

\author[1]{Sammy Oukaci\thanks{Contact: \href{mailto:o.sammy431@gmail.com}{\texttt{o.sammy431@gmail.com}}}}
\author[2]{Pol Rodríguez Farrés\, \orcidlink{0009-0005-3877-9720}\thanks{Contact: \href{mailto:pol.rodriguez@iiia.csic.es}{\texttt{pol.rodriguez@iiia.csic.es}}}}
\author[3]{Jesus Cerquides}

\affil[1]{ENSTA, France}
\affil[2]{IIIA-CSIC, Bellaterra, Spain}
\affil[3]{CSIC, Madrid, Spain}

\date{\today}

\begin{document}

\maketitle

\begin{abstract}
People constantly form opinions by combining what they observe with what friends, collegues, and communities tell them. Following this idea, this article generalizes earlier results in Non-Bayesian social learning on strongly connected social networks, where agents repeatedly update their beliefs by combining their own private signals with the beliefs of their neighbors. Previous work assumed that every agent aggregates information in the same way and often does not change in time, assuming agents never change how they reason and listen to others. We show that agents' beliefs still converge almost surely to the true state when agents use heterogeneous and time-varying (non-stochastic) aggregation rules under certain conditions. That is, even when people change their way to aggregate others opinions, the community can still reach the truth as long as information flows among its members. We also provide convergence rate results under this scenario and show that, for some of these functions, it is at least equal to the best convergence rate we have found in the literature. We present the widely used aggregation functions in the Constant Elasticity of Substitution (CES) family as a specific case of our general model, which include rules such as the log-linear, DeGroot, and the harmonic rules, and provide a simple illustrative example to show the capabilities of the model.
\end{abstract}

\noindent
\textbf{Keywords:} Non-Bayesian social learning, convergence rate, aggregation function, time-varying networks.

\section{Introduction}

In many decision-making situations, a group of people or devices (called agents) are trying to learn the same unknown fact or state about the world. This fact cannot be observed directly and, instead, each agent receives noisy hints or evidence about it. At the same time, agents share what they know with their neighbors and colleagues. This setup appears in sensor networks, distributed detection and control, robotics, and, more recently, in federated and decentralized learning. It also describes how communities and organizations form opinions, as when consumers learn about the quality of a product, investors about the value of an asset, or citizens about the benefits of a policy, by combining their private observations with what they hear from people they know. A central question in these settings is whether a group of agents can collectively learn the true state through repeated interaction, even when no single member has enough information to do so alone. Moreover, the speed of this process matters. If the group learns slowly, it keeps making decisions based on inaccurate beliefs, and the cost of those mistakes grows the longer it takes to correct them.

There are two general approaches to modeling the update and exchange of information and opinions, also called beliefs in such settings. In a fully Bayesian model, each agent keeps track of the complete history of her own observations and of the information received from other agents, and updates her beliefs using Bayes' rule. Although this procedure is statistically appropriate for a fully rational agent, it becomes difficult to implement as the network grows, because the amount of information that must be retained increases rapidly. Non-Bayesian social learning (NBSL) avoids these difficulties by limiting the amount of information that each agent retains. This is also a more realistic description of individuals, who rarely keep track of who told what to whom, and who typically act on the current opinions of those around them. In the standard formulation, agents combine only the last-period beliefs of their neighbors (also called imperfect recall) through an aggregation rule and then incorporate their own newly observed signal through a Bayesian update. Despite its simplicity, this approach can still lead to correct learning under suitable conditions.

The area of social learning builds on a line of work that begins with the linear averaging rule of DeGroot \cite{degroot1974reaching}, in which an agent updates her belief as a weighted average of the beliefs of her neighbors in a graph. The more modern formulation of non-Bayesian social learning, in which agents aggregate their neighbors' current beliefs through a log-linear rule and refine them with new private information, was introduced in \cite{jadbabaie2012non} and given axiomatic foundations in \cite{molavi2017foundations}, later published as \cite{molavi2018theory}. That work shows that log-linear aggregation of beliefs follows from imperfect recall together with additional restrictions on how agents process neighbors' information, and it characterizes the environments in which different classes of learning rules, including DeGroot, successfully aggregate information.

In fact, most non-Bayesian social learning models consider time-invariant aggregation rules, and often assume homogeneity across agents. This assumption is restrictive, since agents may vary the way they evaluate and aggregate information, as well as the weight they give to others. In social environments, people differ in how much and in what way they defer to their peers, some following the most convinced voices around them and others seeking a compromise among all views. The credibility of each source also changes as reputations are built or lost, as individuals gain experience, or as the institutional setting changes, for instance when a group moves from independent work to formal deliberation. More generally, the relative importance assigned to other agents may change throughout the learning process. The main contribution of this work lies in generalizing the results of \cite{molavi2018theory} (specifically the results obtained in their Theorems 4, 5 and 7), by considering time-varying (non-stochastic) and heterogeneous aggregation rules, which makes it possible to model complex and changing interactions between agents. When analyzing such a setting, two main questions arise. The first is whether agents continue to learn the true state i.e., whether all agents' beliefs converge to the true state while rejecting the alternatives.

One important caveat of our model is that, with agents represented in a network, the communication graph remains fixed. In other words, agents may change the weights assigned to their neighbors over time, but the set of neighbors does not change. This corresponds to situations in which social ties are stable over the horizon of learning, such as colleagues within an organization or members of a committee, while the credibility granted to each tie evolves. In more detail, the weights in the graph, i.e. the attention paid to neighbors, are actually defined in terms of the aggregation rule. This, together with the collective informativeness of the observed signals among agents, determines the conditions under which long-run learning and the convergence of beliefs to the true state occurs.

We then provide in this work the conditions under which the agents' beliefs will converge, almost surely, on the true state, when using time-varying (non-stochastic) and heterogeneous aggregation rules. Apart from this, we prove that any differentiable admissible aggregation rule under those conditions can be represented exactly as a log-linear rule with stochastic, time-dependent weights. This representation allows a common description of a broad class of aggregation rules and forms the basis for the analysis of learning rates. We then complement this result by providing a general bound (one obtained under the main assumptions) on the convergence rate. This bound depends only on the informativeness capacity of each agent, and not on the aggregation rule used. Therefore, no form of social interaction, regardless of how well it is structured, can enable a group to learn faster than what the information available to its members permits. Communication can share and reorganize existing information, but it cannot generate new information. Moreover, we tighten this bound by considering the specific aggregation rules used in the scenario, allowing us to provide either a lower or an upper bound depending on whether the agent forms part of what we call a dominated or dominating path. Being on such a path can be understood as following the beliefs of one's neighbors on it, or leading others. With this, one can even relax the assumptions needed for the convergence rates, such as removing the need for independent signals across agents, requiring only the conditions needed for convergence of beliefs to the true state itself. The speed at which an individual learns is therefore determined not by her own information alone, but by the position she occupies in the chains of influence of her society. Being linked through followership to a well-informed individual accelerates her learning, whereas being bound to poorly informed ones slows it down.

In order to demonstrate the importance of this generalization, we show that our model includes many other rules studied in the literature, and we also provide the conditions under which convergence to the true state is satisfied under many aggregation functions. Here we present the specific case of the constant elasticity of substitution (CES) family. Specifically, for the CES family, we provide the parametrization under the general model and convergence rate bounds depending on the function used, recovering the linear and log-linear rules. Moreover, we find that the best lower bound on the rate in that family of functions corresponds to the harmonic rule.

Our main contributions can be summarized as follows.

\begin{itemize}
    \item We extend non-Bayesian social learning to heterogeneous and time-varying aggregation rules in time-varying adjacency weight matrices, on a fixed graph. We give conditions under which all agents' beliefs converge almost surely to the true state (Theorem~\ref{thm:main}).

    \item We show that the belief dynamics for any differentiable admissible aggregation rule under those conditions can be written exactly as a log-linear update rule (and ultimately as DeGroot) (Theorem~\ref{thm:linearrep}).

    \item We derive an upper bound on the exponential rate of convergence that depends only on how informative the agents are (Theorem~\ref{thm:upperbound}). We then provide more accurate lower and upper bounds for each agent in terms of the most and least informed agent one can reach on specific network paths.

    \item We apply our results in the specific case of the CES family, which contains the linear (DeGroot), logarithmic opinion pool and harmonic rules, among others. We identify the harmonic rule as having the best lower convergence rate bound in the family and show that a single uninformed agent can make the whole network learn at a sub-exponential rate, so that the least informed member of a community may set the pace of learning for everyone.
\end{itemize}

Our paper is related to several areas of the literature. As mentioned before, research on how connected agents can combine dispersed information dates back to the linear opinion-pooling model of DeGroot \cite{degroot1974reaching}. There, agents repeatedly replaces her opinion by a weighted average of those of her neighbors. \cite{golub2010naive} used this rule to characterize when a naive updating leads the group to consensus on the truth state, deriving the wisdom of crowds from the network's influence structure. Some non-Bayesian social learning models extend that idea by combining the aggregation of neighbors' beliefs with the incorporation of new private signals using the Bayesian update. This two-step update was introduced by \cite{jadbabaie2012non}, and later placed on axiomatic foundations by \cite{molavi2017foundations}, further extended in \cite{molavi2018theory} by characterizing the log-linear rule and analyzing the model in time-varying networks. Moreover, specific consensus analysis with expanding information observed by agents were studied by \cite{molavi2013reaching}. Broader surveys of these dynamics are presented in \cite{acemoglu2011opinion}, where they also compare with the Bayesian models, and some advances have avoided the complexity that arises there \cite{ortoleva2024alternatives}. Our work builds directly on \cite{molavi2018theory}, generalizing its convergence and rate results to heterogeneous, time-varying aggregation rules.

Turning to fully Bayesian models, this literature studies agents who update their beliefs taking into account the entire history of observed actions or beliefs in the network, and characterizes when such learning asymptotically aggregates dispersed information \cite{gale2003bayesian, acemoglu2011bayesian}. As noted before, the difficulty with these models is tractability. Some work has been done on analyzing when such models become computationally intractable \cite{rahimian2017complexity}, where the model also corrects the double counting of signals that arrive through several paths. Non-Bayesian models avoid the complexity limitations by discarding the history and learning without recall, formalized in \cite{rahimian2014non}, where the agents take the last-period information shared by their neighbors as sufficient statistics, although this carries the limitation of overweighting some signals.

Apart from analyzing convergence of the beliefs to the true state, it is also important to study the speed at which beliefs concentrate, and to design models that improve it. A substantial literature on distributed estimations and hypothesis testing has quantified this rate. For example, \cite{shahrampour2013exponentially} obtain exponentially fast estimation through distributed dual averaging, while \cite{nedic2017fast, nedic2015nonasymptotic} establish fast and non-asymptotic convergence rates, the latter being over time-varying directed graphs. Another work, \cite{mitra2020new}, proposes a min-based aggregation rule that improves the learning rate, also incorporating Byzantine resilience into the model. In our work, the rate obtained for some instances of the harmonic aggregation function of the CES family in Section \ref{subsec:CES} coincides with the one achieved by their rule. Similar to our general bound, these works tend to express the learning rate through the relative entropies that measure how well agents in the network distinguish the states.

Another area of study is the way information is distributed across agents, and how much agents listen to one another, which also shapes the speed of learning. For instance, \cite{jadbabaie2013information} study how information heterogeneity observed by each agent governs the learning rate, while \cite{golub2012homophily} show that homophily, the tendency of agents to interact more with others who hold similar beliefs, can slow down the learning and the convergence of agents' behaviors to a consensus. Our results complement this view by showing that, beyond who listens to whom, the manner in which opinions are combined determines whether the best- or the least-informed members govern the speed of collective learning.

Finally, as we have seen in the different areas, a growing area of study actually focuses on how beliefs should be aggregated. Some works compare different methods, such as arithmetic (linear) and geometric (log-linear) fusion of beliefs \cite{kayaalp2023arithmetic}, showing that both methods learn the truth exponentially fast, with a faster rate under geometric pooling. Indeed, we recover this comparison in the analysis of the CES family. Other work studies optimal aggregation strategies over graphs \cite{hu2023optimal}. Related to the idea of varying influence and topologies, \cite{kayaalp2026causal} study causal influences among agents and \cite{salami2017social} extend social learning to weakly connected topologies. Following a similar idea, \cite{bordignon2021adaptive} presents the adaptive social learning framework, and its doubly adaptive extension in \cite{carpentiero2025doubly}. In these works, the focus is on evolving environments, enabling agents to adapt to drifting states in non-stationary environments. In this sense, our contribution complements these by allowing each agent to change her entire aggregation function over time, proving convergence and relating the resulting convergence rates to the rules and the informativeness of the signals.

The work is organized as follows. In Section \ref{sec:model} we present the mathematical setup of our general model of NBSL, together with the definitions of the main objects that drive the results and the assumptions required. In Section \ref{sec:thm_univ_bound} we present the main results, the convergence to the true state result, the general expression as a log-linear update rule, and the general convergence rate upper bound. Then, Section \ref{sec:dominating} presents what we call the dominated and dominating paths, which provide tighter convergence rate bounds. Section \ref{sec:applications_and_examples} covers the CES family aggregation rules as a specific case and derive the convergence rates (Section \ref{subsec:CES}), and shows an illustrative example of what the theory can cover (Section \ref{subsec:committee_example}). We then discuss limitations and future work, and conclude, in Sections \ref{sec:discus_and_future} and \ref{sec:conclusions}. All the proofs are given in Appendix \ref{app:one} and Appendix \ref{app:two}.

\section{Model}\label{sec:model}

In this Section we present the mathematical model and the assumptions it needs to satisfy in order to ensure correct learning. 

\subsection{Mathematical Setup}

Consider a fixed directed graph $\cG=(\cN,\cE)$ of $n$ agents, with node set $\mathcal{N}=\{1,2,\dots,n\}$ and edge set $\cE$, learning a finite set of hypotheses $\Theta=\{\theta_1,\theta_2,\dots,\theta_K\}$, also referred to as states. From this set, there exists a unique true state of the world $\theta^*$. This learning is represented in each agent by updating her belief in each hypothesis. In detail, we denote by $\mu_{it}(\theta)\in[0,1]$ the belief of agent $i$ on the state $\theta\in\Theta$ at time $t\ge0$, and we consider that the agent has learned the state $\theta$ when $\mu_{it}(\theta)=1$. This is a probability distribution over $\Theta$ ($\mu_{it}\in\Delta\Theta$), that is, $\sum_{\theta\in\Theta}\mu_{i,t}(\theta)=1$. We denote by $\mu_t(\theta)$ the vector $(\mu_{1,t}(\theta),\dots,\mu_{n,t}(\theta))$.

At each discrete time $t\ge0$, each agent $i\in\mathcal{N}$ observes a private signal $\omega_{i,t+1}\in\Omega_i$ from a finite set of signals $\Omega_i$. The observed signal is drawn from a private probability distribution $l_i^{\theta^*}(\omega_{i,t+1})$, conditioned to the true state of the world\footnote{For all $\theta\in\Theta$, we denote by $\mathbb{P}^{\theta}$ the measure generated by $\theta$.}. Agent $i$ knows the distribution $l_i^{\theta^*}(\omega_{i,t+1})$, as well as the distribution of signals she would observe if any other state $\hat{\theta}\in\Theta$ were the true hypothesis, i.e. $l_i^{\hat{\theta}}\in\Delta \Omega_i$. Thus, the agent can learn and identify a state through the signals observed. However, an agent $i$ may not be able to distinguish the true state from another state if the two states induce the same likelihood distribution to her, i.e. there exists $\hat{\theta}\neq\theta^*$ with $l_i^{\hat{\theta}}(\omega)=l_i^{\theta^*}(\omega)$ for all $\omega\in\Omega_i$. 

To circumvent this, agents can also communicate their beliefs on each hypothesis $\mu_{it}(\theta)$ with one another. In this work, we are interested in the process of aggregating the beliefs of the agents neighbors (including her own) $\mathcal{N}_i = \{i\}\cup\{j\in\mathcal{N}:(i,j)\in\mathcal{E}\}$. The process is described with a possibly time-varying aggregation function for each agent. 

\begin{definition}[Aggregation functions]\label{def:agg}
We define the aggregation function of agent $i$ at each time $t\ge0$ as $\psi_{i,t}:(0,\infty)^n\longrightarrow(0,\infty)$\footnote{This definition is generic for any variables in $(0,\infty)$. However, for the case of aggregating beliefs, the function is $\psi_{i,t}(\mu_t):(0,1]^n\longrightarrow(0,1]$}. The function is twice differentiable and the derivatives of order 2 are continuous (i.e. is in the $C^2$ class). Moreover, we do not consider stochastic functions, i.e. the assignment of functions to agents and times is fixed in advance. 
\end{definition}

Throughout, $x=(x_1,\dots,x_n)\in(0,\infty)^n$ denotes a generic argument of $\psi_{i,t}$. In the belief update, $\psi_{i,t}$ is evaluated, for each $\theta\in\Theta$, at $\mu_t(\theta):=(\mu_{1,t}(\theta),\dots,\mu_{n,t}(\theta))$, which lies in $(0,1]^n$.

Considering everything, an agent $i$ updates her beliefs at time $t$ in two steps. First, she aggregates her neighbors' beliefs with the aggregation function $\psi_{i,t}$, creating an interim belief for each hypothesis $\zeta_{it}(\theta)$:

\begin{equation}\label{eq:interim}
    \zeta_{it}(\theta):=\frac{\psi_{i,t}\bigl(\mu_t(\theta)\bigr)}{S_{it}},
    \qquad
    S_{it}:=\sum_{\tilde\theta\in\Theta}\psi_{i,t}\bigl(\mu_t(\tilde\theta)\bigr).
\end{equation}

Second, she observes the signal $w_{i,t+1}$ and updates her belief by Bayes' rule:

\begin{equation}\label{eq:bayes}
    \mu_{i,t+1}(\theta):=\frac{\zeta_{it}(\theta) \ \ell^{\theta}_i(\omega_{i,t+1})}{m_{it}(\omega_{i,t+1})},
    \qquad
    m_{it}(\omega):=\sum_{\tilde\theta\in\Theta}\zeta_{it}(\tilde\theta)\,\ell^{\tilde\theta}_i(\omega).
\end{equation}

Notice that this update is not fully Bayesian. As in \cite{molavi2018theory}, agents respect imperfect recall, meaning that they treat their neighbors' last-period beliefs as sufficient statistics for the entire history. Apart from this, agents only communicate their current beliefs, while signals, likelihoods, and their aggregation functions are kept private.

\begin{remark}
    The standard alternative to NBSL is fully Bayesian learning, in which agents apply the Bayes rule to every signal they observed and on all the history of her neighbors' beliefs since the beginning of the process, thus not respecting the imperfect recall assumption. This method is often intractable in practice and implies redundancy (double counting) of information, because the same signal observed by agent $i$ can reach an agent $j$ through multiple paths in the information sharing process. 
\end{remark}

\subsection{Definitions and Assumptions}

In this Section, we present the main definitions and objects used throughout the article and the assumptions related to them under which the model makes every agent learn the true state almost surely, that is, $\lim_{t\to\infty}\mu_{i,t}(\tstar)=1$ for all $i\in\mathcal{N}$. There are four assumptions, one on the aggregation functions, one on the communication graph, one on the likelihoods and one on the initial beliefs.

\subsubsection{Aggregation Functions and Network Weights}

As mentioned, the focus of this article is on the generalization of the aggregation function to create the interim belief, allowing for heterogeneity and time-varying (non-stochastic) functions. Not every function is valid, however, if we want the agents to learn the true state. To this end, we restrict the functions to a family of them, determined by what we define as logarithmic curvature. Before defining it, we first present other definitions and formalize the trust weights each agent assigns to others' beliefs.

To this end, we introduce the elasticity, a measure of the influence of agent $j$ on agent $i$.
 
\begin{definition}[Influence]\label{def:elasticity}
For $x\in(0,\infty)^n$, the influence of agent $j$ on agent $i$ at time $t$ is the elasticity

\begin{equation*}
    e_{ij,t}(x):=\frac{\partial\log\psi_{i,t}}{\partial\log x_j}(x)
    =\frac{x_j\,\partial_{x_j}\psi_{i,t}(x)}{\psi_{i,t}(x)} .
\end{equation*}
\end{definition}

It is the relative change of the aggregate caused by a relative change in $x_j$: if agent $j$ becomes $1\%$ more confident in state $\theta$ (i.e. $\mu_{j,t}(\theta)$ increases by $1\%$), agent $i$'s aggregate confidence in $\theta$ ($\psi_{i,t}(\mu_t(\theta))$) rises by approximately $e_{ij,t}(x)\%$. It is therefore the effective weight that agent $i$ gives to agent $j$'s belief in $\theta$ at the belief profile $x$. It thus measures social influence in its usual sense: the extent to which a change in one person's opinion moves another's.

Then, evaluating it at the profile where all beliefs coincide $\mathbf{1}=(1,\dots,1)$ yields a baseline measure of trust, which we formally define as the weights of the network.
 
\begin{definition}[Graph weights]\label{def:weights}
The trust weight that agent $i$ places on agent $j$ at time $t$ is

\begin{equation}\label{eq:tangentweights}
    a_{ij,t+1}:=\frac{\partial\psi_{i,t}}{\partial x_j}(\mathbf{1}),
\end{equation}

which coincides with $e_{ij,t}(\mathbf{1})$ when $\psi_{i,t}(\mathbf{1})=1$.
\end{definition}

These weights play the role of the adjacency matrix in most models, except that here they are derived from the aggregation function rather than imposed. For them to be interpretable as weights on the network $\cG$, an agent should place positive weight only on her neighbors.

Trust weights describe the aggregation when beliefs are in consensus. If we move away from consensus, an agent may reallocate influence among her neighbors depending on what they believe. This is captured by the cross-derivatives:

\begin{equation*}
    H^{i,t}_{jk}(x):=\frac{\partial^2\log\psi_{i,t}}{\partial\log x_j\,\partial\log x_k}(x)
    =\frac{\partial e_{ij,t}}{\partial\log x_k}(x),
\end{equation*}

which measure the change in $j$'s influence on $i$ when $k$ revises her belief in $\theta$. We normalize them by the two influences involved and obtain the first main object of the article.
 
\begin{definition}[Logarithmic curvature]\label{def:curv}
For distinct $j,k\in\mathcal{N}_i$ with $e_{ij,t}(x)>0$ and $e_{ik,t}(x)>0$, the \emph{logarithmic curvature} of $\psi_{i,t}$ is:

\begin{equation}\label{eq:curvature}
    \kappa_{ijk,t}(x):=-\frac{H^{i,t}_{jk}(x)}{e_{ij,t}(x)\,e_{ik,t}(x)} .
\end{equation}
\end{definition}

The curvature describes how the agent's trust weights vary when neighbors change their beliefs.\footnote{It also measures how far $\psi_{i,t}$ is from log-linear in $(x_j,x_k)$.} If $\kappa>0$, an increase in agent $k$'s belief in $\theta$ lowers the weight on agent $j$'s belief, so the aggregate tilts toward the neighbors that are more confident in $\theta$. If $\kappa<0$, it tilts toward the less confident ones. If $\kappa=0$, the weight on $j$ does not depend on $k$'s belief. For instance, for a weighted power mean $\big(\sum_j a_j x_j^{\rho}\big)^{1/\rho}$ one has $\kappa\equiv\rho$. Thus the arithmetic mean has $\kappa=1$, the geometric mean has $\kappa=0$, and the harmonic mean has $\kappa=-1$. Thus, positive curvature describes individuals whose confidence in a hypothesis is pulled toward the neighbors most convinced of it. Negative curvature describes individuals who give more weight to the neighbors who most doubt a hypothesis. The log-linear rule, for instance, is the neutral case, in which influence does not depend on the configuration of opinions.

Once we have described the main object and presented the weights in the graph, we now present the assumptions made on the aggregation functions and, equivalently, on the trust weights.

\begin{hypothesis}\label{hyp:H1}
For all $t\ge 0$ and every agent $i\in\mathcal{N}$, the function $\psi_{i,t}$ satisfies:
\begin{enumerate}[label=(\roman*)]
    \item \emph{Monotonicity (only neighbors matter):} $\psi_{i,t}$ is strictly increasing in the argument of each neighbor $j\in\mathcal{N}_i$, i.e.\ $\partial_{x_j}\psi_{i,t}>0$ on $(0,\infty)^n$, and does not depend on agents outside agent $i$'s neighborhood, i.e.\ $\partial_{x_j}\psi_{i,t}\equiv 0$ for $j\notin\mathcal{N}_i$;
    \item \emph{Homogeneity of degree one (no scale effects):} scaling all arguments by the same factor scales the output by that factor, i.e.\ $\psi_{i,t}(\lambda x)=\lambda\,\psi_{i,t}(x)$ for all $x\in(0,\infty)^n$ and $\lambda>0$;
    \item \emph{Normalization:} $\psi_{i,t}(1,\dots,1)=1$;\footnote{In view of \eqref{eq:interim}, the value of $\psi_{i,t}(1,\dots,1)$ does not affect the results, we fix it for convenience.}
    \item \emph{Bounded curvature (limited interaction between neighbors):} for all $j\neq k$ and all $x\in(0,\infty)^n$,    
    \begin{equation}\label{eq:constraint}
        |H^{i,t}_{jk}(x)|\le e_{ij,t}(x)\,e_{ik,t}(x),
    \end{equation}    
    so the cross-effect between any two agents' inputs on agent $i$ cannot exceed the product of the individual influences those agents exert on $i$.
\end{enumerate}
\end{hypothesis}

\begin{remark}
    Condition (i) says that an agent becomes more confident in a state when any of her neighbors does, and ignores non-neighbors. Hence $a_{ij,t+1}>0$ if and only if $j\in\mathcal{N}_i$, i.e. an agent only assigns a positive trust weight to her neighbors on the network $\cG$ and to her own beliefs $a_{ii,t+1}>0$. Condition (ii) says that only relative confidence matters, so that the trust weights sum to one $\sum_j a_{ij,t+1}=1$, $\forall i\in \mathcal{N}_i$. Condition (iv) is equivalent to $|\kappa_{ijk,t}|\le1$ whenever it is defined\footnote{If it is not defined, it means that $j$ or $k$ are not both neighbors of $i$ and therefore the inequality reads as $0\le0$ which is trivially satisfied. In contrast, it holds trivially when $j$ or $k$ is not a neighbor of $i$.}, i.e., whenever both $j$ and $k$ are neighbors of $i$. Without this condition, an agent could follow only the most extreme opinions in her neighborhood, and a single misguided neighbor could dominate her beliefs.\footnote{Condition (iv) directly implies the concavity of $\psi_{i,t}$ and its dual (defined in Remark \ref{rem:dual}), which are important facts, because, combined with homogeneity of degree one and Euler's formula, it implies that $\psi_{i,t}$ and its dual are below their tangent hyperplanes at $\mathbf{1}=(1,\dots,1)\in(0,\infty)^n$ creating some form of linearization, more details are presented in Appendix \ref{app:one}, which is dedicated to the proof of Theorem $\ref{thm:main}$.}
\end{remark}

The second assumption, similar to inequalities (8) and (9) of \citet{molavi2018theory}, ensures that information can travel through the network and that agents keep listening to each other.
 
\begin{hypothesis}\label{hyp:H2}
The network $\cG$ is strongly connected, and there are constants $0<\aund\le\abar<1$ and a sequence $\lambda_t\in(0,1)$ with $t\lambda_t\to\infty$ such that, for all $t\ge0$ and all $i\in\cN$,

\begin{equation*}
    a_{ij,t+1}\ge \aund\,\lambda_t \ \ \text{for } j\in\mathcal{N}_i\setminus\{i\},
    \qquad
    \sum_{k\neq i}a_{ik,t+1}\le \abar\,\lambda_t .
\end{equation*}
\end{hypothesis}

Strong connectivity means that every agent is influenced, directly or indirectly, by every other, so that private information observed anywhere can reach everyone through beliefs. The sequence $\lambda_t$ determines how much an agent listens to others relative to herself\footnote{All trust placed on neighbors is of order $\lambda_t$, while self-trust satisfies $a_{ii,t+1}\ge1-\abar\lambda_t>0$}. This disposition to listen to her neighbors may decrease over time, as agents become more self-reliant as time passes. However, the rate of decrease is limited ($t\lambda_t\to\infty$), since an agent who stops listening too early may lock in before the information in the network has reached her. This effect is not private: an individual who becomes self-reliant too early harms not only her own learning but also that of everyone who listens to her, since she ceases to relay the information held elsewhere in the network. Apart from this, a constant $\lambda_t=\lambda$ is also admissible, as in models with time-invariant adjacency matrices.
 
Together, Hypotheses \ref{hyp:H1} and \ref{hyp:H2} describe agents who learn from each other through a fixed network, by exchanging only their current beliefs. Since aggregation functions may change over time, agents may revise how much they trust their neighbors and themselves, which captures phenomena such as confirmation bias or a growing distrust of the network. However, the network remains fixed, what changes is how much each agent listens to whom.

\subsubsection{Private Information and Entropy}

We now focus on the signals and the informativeness of each agent according to her likelihoods of observing each signal under the different states. First, we provide the assumptions made on the information received by each agent.

\begin{hypothesis}\label{hyp:H3}
The likelihoods have full support, $\ell^{\theta}_i(\omega)>0$ for all $i,\theta,\omega$. Moreover, for every $\theta\neq\theta^*$ there is an agent $i$ with $\ell^{\theta}_i\neq\ell^{\theta^*}_i$. Let $\Theta_i:=\{\theta\in\Theta:\ell^{\theta}_i\neq\ell^{\theta^*}_i\}$ denote the set of states that agent $i$ can distinguish locally from $\theta^*$. Finally, the observations $\omega_{i,t}$ are independent across agents at each fixed $t$ and i.i.d. across time.
\end{hypothesis}

These are standard Hypotheses concerning the likelihoods used by the agents to infer the true state. In practice, since no agent knows which state is the true state, the second condition should read: for all paris of different states $\theta_1$ and $\theta_2\in\Theta$, there is an agent $i\in\mathcal{N}$ such that $\ell^{\theta_1}_i\neq\ell^{\theta_2}_i$. This condition means that, in order to infer the true state, it is necessary, for all the other possible states, that at least one agent must be able to distinguish this state from the real underlying state. Otherwise, it would be impossible to learn. This is the formal counterpart of a division of cognitive labor, in which no individual needs to be able to answer every question, provided that, for each question, someone in the society can. The third Hypothesis states the independence of the received signals across agents, which is a strong assumption that is used only once in the proof of Theorem \ref{thm:main}.

Once we have presented these conditions, we can now define a measure of how much information an agent received after observing a signal. This relates to her informativeness and is given by the relative entropy.

\begin{definition}[Relative entropy]\label{def:entropy}
For $i\in\mathcal{N}$ and $\theta,\that\in\Theta$, we define the relative entropy by

\begin{equation*}
    h_i(\theta,\hat\theta)=\sum_{\omega\in\Omega_i}\ell^{\theta}_i(\omega)\log\frac{\ell^{\theta}_i(\omega)}{\ell^{\hat\theta}_i(\omega)}=\mathbb{E}^{\theta}\Bigl[\log\frac{\ell^{\theta}_i(\omega_{i,t})}{\ell^{\hat\theta}_i(\omega_{i,t})}\Bigr]
\end{equation*}

for all $t>0$ (since the signals are assumed independent and identically distributed across time, we may simply take $t=1$). We write $h_i(\that):=h_i(\tstar,\that)$ for the relative entropy with respect to the true state.
\end{definition}

This term represents the learning capacity of an agent $i$, how efficiently she is capable of distinguishing one state from another. In practice, an agent can be very informative while others are not. This affects the local update rule using the Bayes formula, but learning also occurs through communication and aggregation. However, since relative entropies are finite ($l_i^\theta(w)>0, \forall i,w,\theta$) and the number of agents from which she can aggregate information is also finite, an infinitely informed agent does not exist, and hence there is an upper bound on the speed at which they can learn.

Then, we can also define the following.

\begin{definition}[Maximum informativeness]\label{def:max_inform}
    We define the maximum informativeness among the agents and the informativeness of a hypothetical agent receiving, at each period, the most informative signal among all the agents in the network, respectively, as follows:
    \begin{equation}
        h_\star(\that):=\max_{j\in\cN}h_j(\that)\ ,
        \qquad 
        \bar h(\that):=\E^{\tstar}\Bigl[\max_{j\in\mathcal{N}}\log\frac{\ell^{\tstar}_j(\omega_{j1})}{\ell^{\that}_j(\omega_{j1})}\Bigr].
    \end{equation}
\end{definition}

\subsubsection{Initial Beliefs}

Finally, we assume that the initial beliefs are positive values. In doing so, we ensure that no division by zero occurs when using \eqref{eq:interim} and \eqref{eq:bayes} and no state is rejected from the start.\footnote{Note that, by induction, applying the learning algorithm, beliefs will remain strictly positive for all $t\ge0$, avoiding any division by 0.}

\begin{hypothesis}\label{hyp:H4}
The priors are deterministic with $\mu_{i,0}(\theta)>0$ for all $i,\theta$.
\end{hypothesis}

\section{Convergence, Log-Linearity and General Convergence Rate Bound}\label{sec:thm_univ_bound}

In this Section we present our main result which states the convergence to the true state $\tstar$ for every agent, as well as the possibility of expressing our model in the form of the log-linear learning rule and a bound on the convergence rate of the beliefs. Throughout the remainder of the Section we assume Hypotheses~\ref{hyp:H1} to~\ref{hyp:H4} for clarity, we could obtain, however, the same results for the log-linear expression (Theorem \ref{thm:linearrep}) and the convergence rate bound (Theorem \ref{thm:upperbound}) with less restrictive assumptions.

The main Theorem of the article reads as follows.

\begin{theorem}\label{thm:main}
Under Hypotheses \textup{\ref{hyp:H1}} to \textup{\ref{hyp:H4}}, every agent $i \in \cN$ will learn almost surely the true state:
\begin{equation*}
    \lim_{t\to\infty}\mu_{it}(\theta^*)=1
    \qquad
    \Pstar\text{-almost surely, for every } i\in\cN .
\end{equation*}
\end{theorem}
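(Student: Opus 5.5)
The plan follows the strategy of \cite{molavi2018theory} (their Theorem 4), replacing exact log-linearity by a one-sided linearization that Hypothesis~\ref{hyp:H1} provides.

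\textbf{Step 1: a linear upper bound on the aggregate.} By (iv), $\log\psi_{i,t}$ is concave in $y=\log x$. To see this, note that the Hessian in log-coordinates is $H^{i,t}$. Its row sums vanish by homogeneity, since $\sum_k H_{jk}=\partial_{\log\lambda}e_{ij}=0$. Hence $H_{jj}=-\sum_{k\ne j}H_{jk}\le -\sum_{k\neq j} e_je_k$ in the worst case, and $|H_{jk}|\le e_je_k$ makes $H$ diagonally dominated with nonpositive diagonal, hence negative semidefinite. From this, and from the concavity of $\psi_{i,t}$ itself (as claimed in the footnote), I will use the inequality for homogeneous concave functions,
\[
\psi_{i,t}(x)\le \sum_j a_{ij,t+1}x_j ,
\]
together with the dual statement. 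The dual statement is the analogous inequality for $1/\psi_{i,t}(1/x)$, which gives $1/\psi_{i,t}(x)\le\sum_j a_{ij,t+1}/x_j$.

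\textbf{Step 2: a supermartingale-type inequality for the vector $1/\mu_t(\tstar)$.} I apply this bound to the quantity $\mu_{i,t+1}(\tstar)^{-1}$. Writing $\zeta_{it}(\tstar)=\psi(\mu_t(\tstar))/S_{it}$, homogeneity lets me bound $S_{it}$ by comparing with $\mu_t$. The goal is
\[
\Et\Bigl[\frac{1}{\mu_{i,t+1}(\tstar)}\Bigr]\le \sum_j a_{ij,t+1}\frac{1}{\mu_{j,t}(\tstar)} .
\]
To get this, I first use $\Estar[\ell^{\theta}_i/\ell^{\tstar}_i]=1$ to show $\Et[m_{it}/\ell^{\tstar}_i]=1$, so that $\Et[1/\mu_{i,t+1}(\tstar)]=1/\zeta_{it}(\tstar)$. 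I then bound $1/\zeta_{it}(\tstar)$ by the dual inequality. Independence across agents is used here to decouple the conditional expectations. Because $A_{t+1}$ is row-stochastic and deterministic, $\max_i 1/\mu_{i,t}(\tstar)$ turns out to be a nonnegative supermartingale. It therefore converges almost surely, and $\mu_{i,t}(\tstar)$ stays bounded away from zero along each path.

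\textbf{Step 3: from boundedness to learning.} Bounded-away beliefs on $\tstar$ imply, as in Molavi et al., that the one-step ``surprises'' vanish. Concretely, from the convergence of the supermartingale and the tightness of the inequalities, $\zeta_{it}(\theta)\to \mu$-consistent limits, and the expected log-likelihood ratio drift forces
\[
\mu_{i,t}(\theta)\Bigl(\frac{\ell^{\theta}_i(\omega)}{m_{it}(\omega)}-1\Bigr)\to 0 .
\]
Thus each agent's belief on every $\theta\in\Theta_i$ tends to $0$. Strong connectivity, combined with the lower bound $a_{ij,t+1}\ge\aund\lambda_t$ and the condition $t\lambda_t\to\infty$ (which makes $\sum_t\lambda_t=\infty$), then propagates this to neighbours. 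The mechanism is that if neighbour $j$ persistently puts vanishing mass on $\theta$ while $i$ does not, the gap in the aggregation contributes a non-summable decrease to the supermartingale, which is a contradiction. Iterating along paths covers every $\theta\ne\tstar$ by Hypothesis~\ref{hyp:H3}.

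\textbf{Main obstacle.} I expect the hardest part to be Step 3's propagation argument under decaying $\lambda_t$. There the monotonicity-gap argument must be quantified using the curvature bound, so that the nonlinear $\psi_{i,t}$ still transmits a vanishing neighbour belief at rate comparable to $\lambda_t$. My first attempt would be to lower bound $\log\psi$ by its log-linear tangent, using concavity in log-coordinates, to reduce to the log-linear case.
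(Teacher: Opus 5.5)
Your one-step bound $\Et[\nu_{i,t+1}]\le\sum_j a_{ij,t+1}\nu_{jt}$, with $\nu_{it}:=1/\mu_{it}(\tstar)$, is correct. It needs no independence, since given $\mathcal F_t$ only agent $i$'s fresh signal enters. The conclusion you draw from it is false, however: $\max_i\nu_{it}$ is not a supermartingale. Your bound controls $\max_i\Et[\nu_{i,t+1}]$, but $\Et[\max_i\nu_{i,t+1}]\ge\max_i\Et[\nu_{i,t+1}]$, in general strictly. For example, at a consensus under the log-linear rule the one-step bound is an equality for every agent. Hence $\Et[\max_i\nu_{i,t+1}]>\max_i\nu_{it}$ as soon as some agent's forecast $m_{it}$ differs from $\ell^{\tstar}_i$.

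The missing idea is a \emph{linear} Lyapunov function. The paper uses $Y_t=\sum_i v_{i,t+1}\nu_{it}$, where $(v_t)$ is an absolute probability sequence for the time-varying weights, $\sum_i v_{i,t+2}a_{ij,t+1}=v_{j,t+1}$, with a uniform floor $v_{i,t}\ge c_v>0$. This is Lemma A.1 of \citet{molavi2018theory} and relies on Hypothesis~\ref{hyp:H2}. The floor converts $\sup_tY_t<\infty$ into $\mu_{it}(\tstar)\ge C>0$ for every $i$. The drift $Y_t-\Et[Y_{t+1}]$ yields the summable remainders ($1-S_{it}$, $\delta_{it}$, $\epsilon_{it}$) on which the rest of the proof rests.

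Step 3 asserts the two hard parts rather than proving them.

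\textbf{From forecasts to beliefs.} The paper gets convergence of one-step forecasts from Robbins--Siegmund applied to $1-\sum_i v_{i,t+1}\mu_{it}(\tstar)$, which gives $\sum_t\Estar[\zeta_{it}(\tstar)\chi_i^2(t)]<\infty$. This does not force $\zeta_{it}(\theta)\to0$ for $\theta\in\Theta_i$, because a mixture of wrong-state likelihoods can reproduce $\ell^{\tstar}_i$. Your per-state limit is essentially equivalent to the conclusion itself. The paper needs the induction on $k$-step forecast errors (Proposition~\ref{prop:Zsum}, the one place where independence across agents is used), together with Lemma 4 of \citet{jadbabaie2012non}.

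\textbf{Propagation.} Your mechanism does not close. For the linear rule $S_{it}\equiv1$, so the drift of $Y_t$ depends only on the profile $\mu_t(\tstar)$. It therefore cannot detect an agent whose mass on $\that$ persists while a neighbour's vanishes, as long as the two agree on $\tstar$.

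\textbf{Log-concavity.} The tool you propose is also unavailable. $\log\psi_{i,t}(e^y)$ is not concave in $y$ under (iv): for the linear rule it equals $\log\sum_ja_{ij,t+1}e^{y_j}$, which is convex. Your diagonal-dominance step goes the wrong way, since $|H_{jk}|\le e_je_k$ only gives $H_{jj}\ge-\sum_{k\ne j}e_je_k$. In any case a concave function is bounded \emph{above}, not below, by its tangent. Your Step 1 inequalities survive only because they follow from concavity in $x$ of $\psi_{i,t}$ and of its dual, which is what $|\kappa|\le1$ gives.

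\textbf{What works instead.} Use concavity of $\psi_{i,t}$ in $x$, tangent at $\mu_t(\tstar)$. For an agent who cannot distinguish $\that$, this gives $r_{i,t+1}\le\sum_k e_{ik,t}r_{kt}$, where $r_{it}=\mu_{it}(\that)/\mu_{it}(\tstar)$ and the elasticities are evaluated at $\mu_t(\tstar)$. The curvature bound is then used quantitatively to show $C^2a_{ik,t+1}\le e_{ik,t}\le C^{-2}a_{ik,t+1}$. Finally, a block-contraction argument for products of these elasticity matrices, restricted to the non-distinguishing agents, drives $r_{it}\to0$; it uses strong connectivity and $\sum_t\lambda_t=\infty$.
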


\begin{proof}
See Appendix \ref{app:one}.
\end{proof}

Theorem~\ref{thm:main} states that heterogeneity in the way individuals combine opinions, and changes in them over time, do not by themselves prevent a society from reaching the truth. As long as no one stops listening to others too quickly and no one can be captured by extreme opinions, dispersed private information is eventually aggregated.

\begin{remark}\label{rem:dual}
    A direct corollary is that the dual function of an admissible sequence $\psi_{i,t}$ (i.e., satisfying Hypotheses~\ref{hyp:H1} and~\ref{hyp:H2}) $\phi_{i,t}(x_1,\dots,x_n):=\bigl(\psi_{i,t}(1/x_1,\dots,1/x_n)\bigr)^{-1}$ is also an aggregation function satisfying Hypotheses~\ref{hyp:H1} and~\ref{hyp:H2}. Therefore, the sequence $\phi_{i,t}$ for $i\in\mathcal{N}$ and $t\ge0$ is admissible for Theorem~\ref{thm:main}, where the trust weights are the same and the logarithmic curvature is simply the opposite. We shall see this duality, for instance, in the case of the CES family in Section \ref{subsec:CES}. In social terms, for every norm that leans toward the most confident opinions there is a mirror norm leaning toward the most doubtful ones, and both are equally compatible with collective learning.
\end{remark}

Before presenting the convergence rate bounds, it is interesting and useful for the proof to see that the model we presented, which accepts heterogeneous and time-varying (non-stochastic) aggregation functions, can be expressed as a log-linear learning rule with stochastic adjacency weights, which we call logarithmic weights.

\begin{definition}[Logarithmic weights]\label{def:logweights}
Fix $\that\neq\tstar$ and $t\ge0$. Defining the log-segment

\begin{equation*}
    u(s):=(1-s)\log\mu_t(\that)+s\log\mu_t(\tstar)\in\R^n,\qquad s\in[0,1],
\end{equation*}

we define the logarithmic weights, which are the entries of a stochastic matrix $\tilde{E}_t(\that)$, as:

\begin{equation*}
    \tilde e_{ij,t}:=\int_0^1 e_{ij,t}\bigl(e^{u(s)}\bigr)\,ds .
\end{equation*}
\end{definition}

Then, we define the log ratios of the beliefs and the likelihoods with respect to the true state as follows:

\begin{equation*}
    x_{it}(\that):=\log\frac{\mu_{it}(\tstar)}{\mu_{it}(\that)}\ ,
    \qquad
    y_{i,t+1}(\that):=\log\frac{\ell^{\tstar}_i(\omega_{i,t+1})}{\ell^{\that}_i(\omega_{i,t+1})}.
\end{equation*}

We can then express the model update as follows.

\begin{theorem}\label{thm:linearrep}
$\mathbb{P}^{\tstar}$-almost surely, for all states $\that\ne\tstar$:
\begin{itemize}
\item[(i)] $x_{t+1}(\that)=\tilde{E}_t(\that)x_t(\that)+y_{t+1}(\that)$, exactly.
\item[(ii)] $\tilde{E}_t(\that)$ is non-negative and row-stochastic, and is
$\mathcal F_t$-measurable.
\end{itemize}
\end{theorem}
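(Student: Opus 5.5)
The plan is to take logarithms of the two-step update and apply the fundamental theorem of calculus along the log-segment of Definition~\ref{def:logweights}. First, from \eqref{eq:interim} and \eqref{eq:bayes}, the normalizing constants $S_{it}$ and $m_{it}(\omega_{i,t+1})$ are common to all states. They therefore cancel in the ratio, giving
\begin{equation*}
x_{i,t+1}(\that)=\log\psi_{i,t}\bigl(\mu_t(\tstar)\bigr)-\log\psi_{i,t}\bigl(\mu_t(\that)\bigr)+y_{i,t+1}(\that).
\end{equation*}
This is well defined because all beliefs stay strictly positive. Positivity follows by induction from Hypothesis~\ref{hyp:H4}, the full support in Hypothesis~\ref{hyp:H3}, and the fact that $\psi_{i,t}$ maps $(0,\infty)^n$ into $(0,\infty)$. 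This is the only place where ``almost surely'' enters, and in fact the identity holds for every realization.

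Next, define $g(s):=\log\psi_{i,t}\bigl(e^{u(s)}\bigr)$ for $s\in[0,1]$, where $e^{u(s)}$ is taken componentwise. The endpoints give $g(1)=\log\psi_{i,t}(\mu_t(\tstar))$ and $g(0)=\log\psi_{i,t}(\mu_t(\that))$. Since $\psi_{i,t}\in C^2$ is positive, $g$ is $C^1$, and the chain rule gives
\begin{equation*}
g'(s)=\sum_j \frac{\partial\log\psi_{i,t}}{\partial\log x_j}\bigl(e^{u(s)}\bigr)\,u_j'(s)=\sum_j e_{ij,t}\bigl(e^{u(s)}\bigr)\,x_{jt}(\that),
\end{equation*}
because $u_j'(s)=\log\mu_{jt}(\tstar)-\log\mu_{jt}(\that)=x_{jt}(\that)$ does not depend on $s$. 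Integrating over $[0,1]$ yields $g(1)-g(0)=\sum_j\tilde e_{ij,t}\,x_{jt}(\that)$, which is item~(i) in vector form.

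For item~(ii), non-negativity follows from Hypothesis~\ref{hyp:H1}(i). That hypothesis gives $\partial_{x_j}\psi_{i,t}\ge0$, with strict inequality for neighbours and zero for non-neighbours, so $e_{ij,t}(x)\ge0$ pointwise and hence $\tilde e_{ij,t}\ge0$. The same argument shows that the sparsity pattern of $\tilde E_t$ matches the graph $\cG$.

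Row-stochasticity comes from homogeneity of degree one, Hypothesis~\ref{hyp:H1}(ii). Euler's identity gives $\sum_j x_j\partial_{x_j}\psi_{i,t}(x)=\psi_{i,t}(x)$, so $\sum_j e_{ij,t}(x)=1$ for every $x\in(0,\infty)^n$. Integrating this identity along the segment gives $\sum_j\tilde e_{ij,t}=1$.

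Measurability holds because $\tilde e_{ij,t}$ is a deterministic function of $\mu_t(\tstar)$ and $\mu_t(\that)$. Here we use that $\psi_{i,t}$ is non-stochastic by Definition~\ref{def:agg}, and that $\mu_t$ is $\cF_t$-measurable by induction on the update. The map $(\mu_t(\tstar),\mu_t(\that))\mapsto\tilde e_{ij,t}$ is continuous, since the integrand is jointly continuous and the domain of integration is compact. Hence $\tilde e_{ij,t}$ is Borel measurable in the beliefs and therefore $\cF_t$-measurable.

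The main point requiring care is the chain-rule step, specifically checking that the integrand $e_{ij,t}(e^{u(s)})$ is continuous on $[0,1]$ so the integral exists and the fundamental theorem of calculus applies. This follows because $\psi_{i,t}$ is $C^1$ and positive on the open orthant, and the segment $e^{u(s)}$ is a compact curve inside it. Everything else is direct verification.
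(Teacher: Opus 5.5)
Your proposal is correct and follows essentially the same route as the paper. You cancel the common normalizers, write the log-ratio as the increment of $u\mapsto\log\psi_{i,t}(e^{u})$ along the log-segment, and apply the fundamental theorem of calculus with the chain rule to get the averaged elasticities; non-negativity then comes from monotonicity, row-stochasticity from Euler's identity, and $\mathcal{F}_t$-measurability from the weights being a deterministic function of $\mu_t$. Your extra remarks (positivity of beliefs for every realization, continuity of the integrand on the compact segment, and the support of $\tilde E_t$ matching $\cG$) correctly fill in details the paper leaves implicit.
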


\begin{proof}
See Appendix \ref{app:T2}.
\end{proof}

One can observe that $(i)$ has the same structure as the log-linear learning rule, and we exactly recover it by taking the weights as deterministic instead of stochastic, and equivalently, $\tilde{E}_t$ is equal to the adjacency matrix of the graph when every agent uses the log-linear rule.

The main utility of this result is to prove Theorem \ref{thm:upperbound}. This result is worth mentioning because, on the one hand, the structure of part $(i)$ provides an elegant way to represent how information is distributed in the communication graph, and on the other hand, we believe that the matrix $\tilde{E}_t(\that)$ is an interesting object that has not appeared in the literature.

With this in mind, it is easy to prove the general upper bound on the exponential speed of convergence.

\begin{theorem}\label{thm:upperbound}
For all states different from the true state ($\that\ne\tstar$), every agent's rate of convergence of beliefs $\mu_{it}(\that)$ is almost surely bounded above by the informativeness of the hypothetical agent of Definition \ref{def:max_inform} $\bar h(\that)$:

\begin{equation*}
    \limsup_{t\to\infty}\frac{x_{it}(\that)}{t}\;\le\;\bar h(\that)
    \qquad\text{and}\qquad
    \limsup_{t\to\infty}\frac1t\bigl|\log\mu_{it}(\that)\bigr|\;\le\;\bar h(\that)
    \qquad\mathbb{P}^{\tstar}\text{-a.s.}
\end{equation*}
\end{theorem}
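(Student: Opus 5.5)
We will derive both bounds from the exact log-linear representation of Theorem~\ref{thm:linearrep}. Write $M_t:=\max_{i\in\cN}x_{it}(\that)$ and $Y_{t+1}:=\max_{j\in\cN}y_{j,t+1}(\that)$. Each row of $\tilde E_t(\that)$ is non-negative and sums to one, so every entry of $\tilde E_t(\that)x_t(\that)$ is a convex combination of the $x_{jt}(\that)$ and is therefore at most $M_t$. Part (i) of Theorem~\ref{thm:linearrep} then gives, for every $i$,
\begin{equation*}
    x_{i,t+1}(\that)\le M_t+y_{i,t+1}(\that)\le M_t+Y_{t+1},
\end{equation*}
so $M_{t+1}\le M_t+Y_{t+1}$. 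Iterating this inequality yields $M_t\le M_0+\sum_{s=1}^{t}Y_s$. The initial value $M_0$ is finite and deterministic by Hypothesis~\ref{hyp:H4}.

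Next we apply the strong law of large numbers. By Hypothesis~\ref{hyp:H3} the signal vectors $(\omega_{1,s},\dots,\omega_{n,s})$ are i.i.d.\ across $s$, so the $Y_s$ are i.i.d. They are also bounded, because the signal sets are finite and the likelihoods have full support. Their common mean is $\E^{\tstar}[Y_1]=\bar h(\that)$ by Definition~\ref{def:max_inform}. Hence $\frac1t\sum_{s\le t}Y_s\to\bar h(\that)$ $\Pstar$-a.s. Combining this with the previous paragraph gives $\limsup_t x_{it}(\that)/t\le\limsup_t M_t/t\le\bar h(\that)$ almost surely, which is the first claim. Independence across agents is not needed here; only i.i.d.\ in time matters.

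For the second claim we use $|\log\mu_{it}(\that)|=-\log\mu_{it}(\that)=x_{it}(\that)-\log\mu_{it}(\tstar)$. We control the last term with Theorem~\ref{thm:main}: since $\mu_{it}(\tstar)\to1$ a.s., we get $\log\mu_{it}(\tstar)/t\to0$, and the first bound transfers directly. If we instead want to avoid Theorem~\ref{thm:main}, a cruder argument shows $-\log\mu_{it}(\tstar)$ grows at most linearly. We would then have to check that its rate vanishes, or run the same max-recursion on $\log(1/\mu_{it}(\tstar))$. So we would rely on Theorem~\ref{thm:main}.

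The main obstacle is conceptual rather than computational: checking that $\tilde E_t(\that)$ is genuinely row-stochastic. This is part (ii) of Theorem~\ref{thm:linearrep}, which follows from Euler's identity $\sum_j e_{ij,t}\equiv1$ under homogeneity of degree one. Once that is granted, the argument reduces to the max-recursion and the law of large numbers.
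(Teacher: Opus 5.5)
Your proposal is correct and follows the paper's proof. The paper unrolls Theorem~\ref{thm:linearrep}(i) into products $\tilde\Phi(t,\tau)$ of row-stochastic matrices and bounds each term by $\max_j y_{j\tau}(\that)$, which is your max-recursion $M_{t+1}\le M_t+Y_{t+1}$ written out, and then uses the same strong law and the same appeal to Theorem~\ref{thm:main}. Your exact identity $|\log\mu_{it}(\that)|=x_{it}(\that)-\log\mu_{it}(\tstar)$ finishes slightly more cleanly than the paper's triangle inequality, since it avoids having to note that $x_{it}(\that)$ is eventually positive.
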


\begin{proof}
See Appendix \ref{app:T3}.
\end{proof}

In other words, the rate at which every agent's beliefs in the incorrect states decrease, i.e. the learning rate of the true state, is capped by the informativeness of a hypothetical agent using the most informative signal at each time.\footnote{This is consistent with Theorem 8 of \citet{molavi2018theory}, because the rate they obtain is indeed smaller than the maximum of the relative entropies.} It also draws a direct parallel between the ability to distinguish states and the convergence rate: the better an agent can distinguish the true state from a false one (that is, the higher the relative entropy), the faster her belief in that false state decreases. This provides a benchmark against which any form of social organization can be assessed. Explained differently, the best a group can achieve is to behave as if each member could, at every period, consult the most informative piece of evidence available anywhere in the society. Better communication can bring a group closer to this frontier, but only better information can move the frontier itself.

\section{Convergence rate on dominated and dominating paths}\label{sec:dominating}
In this Section, we present tighter convergence rates when one considers the aggregation functions used by each agent. For clarity, we first introduce the notions of dominated and dominating paths, present the convergence rate induced by the existence of such paths, and then show how to infer the paths for each agent according to the logarithmic curvature of the aggregation function they use. These results will also enable us to draw direct corollaries for the well-known CES family in Section \ref{subsec:CES}, and more generally to a subclass of functions with specific logarithmic curvatures. 

We fix a state $\hat{\theta}\in\Theta$ different from the true state $\tstar$ throughout this Section. We again assume that the agents satisfy Hypotheses~\ref{hyp:H1}--\ref{hyp:H4}.\footnote{We could instead suppose they satisfy a strictly weaker set of constraints (implied by Hypotheses~\ref{hyp:H1}--\ref{hyp:H4}), but we prefer to preserve clarity, more details are presented just before the proofs (see Hypothesis \ref{hyp:R} and the remark that follows).}

To discuss about dominating and dominated paths, we first define dominated and dominating neighbors, with respect to a constant $c>0$.

\begin{definition}[Domination]\label{def:domination}
Let $c>0$, $i\in\cN$, $t\ge0$ and let $j\in\cN$ be a neighbor of $i$. We say that

\begin{align}
  \text{$\psi_{i,t}$ is \emph{$c$-dominating in $j$}}\quad&\text{if}\quad
  \psi_{i,t}(x)\;\le\;a_{ij,t+1}^{-1/c}\;x_j
  \qquad\text{for every }x\in(0,\infty)^n, \label{eq:dominating}\\[2pt]
  \text{$\psi_{i,t}$ is \emph{$c$-dominated in $j$}}\quad&\text{if}\quad
  \psi_{i,t}(x)\;\ge\;a_{ij,t+1}^{\;1/c}\;x_j
  \qquad\text{for every }x\in(0,\infty)^n. \label{eq:dominated}
\end{align}
\end{definition}

When beliefs are involved and the objective is to lower $\psi_{i,t}$ (we have fixed a wrong state $\hat\theta\neq\theta^\star$), \eqref{eq:dominating} can be read as an alignment of agent $i$ with the belief of her neighbor $j$. In the same way, \eqref{eq:dominated} means that the belief of agent $i$ is followed by her neighbor $j$. Moreover, note that, because of homogeneity of degree one, these constraints read the same if we replace $x\in(0,\infty)^n$ by $x\in(0,1]^n$. 

The strongly connected graph links all the agents with each other, and therefore, the definitions can be extended to paths as follows.

\begin{definition}[Chains and reachable sets]\label{def:reachable}
Let $c>0$. An agent $j$ is \emph{$c$-self-dominating} if $\psi_{j,t}$ is $c$-dominating in $j$ for every $t\ge0$\footnote{A posteriori, after $T\ge0$ steps in the NBSL algorithm, Hypothesis \ref{hyp:H4} still holds (deterministic, a posteriori, after observing the $T$ signals). So, the belief sequence starting at $T$ satisfies all the hypotheses of our framework. Therefore, our results concerning convergence and the speed of this convergence still hold, if we replace $t\ge0$ by $t\ge T$ in definition \ref{def:reachable}.}, and a directed edge $(q,q')\in\cE$ is \emph{$c$-dominating} if $\psi_{q,t}$ is $c$-dominating in $q'$ for every $t\ge0$. A directed path is $c$-dominating if all of its edges are. Set

\begin{equation*}
    R_c(i):=\bigl\{\,j\in\cN:\ j\ \text{is $c$-self-dominating, and some $c$-dominating directed path joins $i$ to $j$ in }\cG\,\bigr\},
\end{equation*}

Note that $i\in R_c(i)$ whenever $i$ is $c$-self-dominating. The set $S_c(i)$ is defined identically, with ``dominated'' in place of ``dominating'' in the definition.
\end{definition}

We can then extend our interpretation to the paths. If there exists a dominating path from $i$ to $j$, it means that the belief of $i$ will follow the belief of $j$, and conversely for dominated paths. 

Our next results depend on the following quantity $\beta_{\lambda}:=\limsup_{t\to\infty}\frac1t\sum_{\tau\le t}\bigl|\log(1-\bar a\lambda_\tau)\bigr|$. This can be understood as the long-run attention paid to neighbors. In detail, when $\lambda_t$ is low, the weights assigned to the neighbors (and $\beta_{\lambda}$) are low as well. Conversely, when $\lambda_t$ is high, the long run weights on neighbors and the $\beta_{\lambda}$ are also high.

This machinery enables us to derive convergence rates linking the sets $R_c(i)$ and $S_c(i)$, with the relative entropies of agents in the graph, through the following Theorems.

\begin{theorem}[Lower bound]\label{thm:lower}
Suppose Hypotheses~\ref{hyp:H1} to \ref{hyp:H4} hold. Then, for every agent $i\in\cN$, and every $\that\neq\tstar$,

\begin{equation}\label{eq:lower}
  \liminf_{t\to\infty}\frac1t\bigl|\log\mu_{it}(\that)\bigr|
  \;\ge\;\sup_{c>0}\bigl(\max_{j\in R_c(i)}h_j(\that)\;-\;\frac{\beta_{\lambda}}{c}\bigr)
  \qquad\Pstar\text{-almost surely},
\end{equation}

with the convention $\max_{j\in\emptyset}h_j(\that)=-\infty$, so that the bound is void when $R_c(i)=\emptyset$.
\end{theorem}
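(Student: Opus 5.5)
Fix $c>0$ with $R_c(i)\neq\emptyset$ and pick $j\in R_c(i)$ attaining $\max_{j\in R_c(i)}h_j(\that)$. Let $i=q_0\to q_1\to\dots\to q_m=j$ be a $c$-dominating directed path. Since $\sup_c$ of the bound follows once it holds for each fixed $c$, it suffices to show, $\Pstar$-a.s.,
\begin{equation*}
  \liminf_{t\to\infty}\frac1t\bigl(-\log\mu_{it}(\that)\bigr)\;\ge\;h_j(\that)-\frac{\beta_\lambda}{c}.
\end{equation*}
The approach works directly with unnormalized quantities. It uses the fact that $\mu_{it}(\that)=\psi_{i,t}(\mu_t(\that))\ell_i^{\that}/(S_{it}m_{it})$. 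By Theorem~\ref{thm:main}, $\mu_{kt}(\tstar)\to1$ a.s. for every $k$, so $S_{it}\ge\psi_{i,t}(\mu_t(\tstar))\to1$ (using concavity/monotonicity and normalization). Also, $m_{it}(\omega)\ge \zeta_{it}(\tstar)\ell_i^{\tstar}(\omega)$ is bounded below by a positive constant eventually. Hence the normalizing factors contribute $O(1)$ per step, and only $o(t)$ overall once they converge. I would check that $S_{it}\to1$ and $m_{it}(\omega_{i,t+1})/\ell_i^{\tstar}(\omega_{i,t+1})\to1$, so that $\frac1t\sum\log$ of them vanishes.

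\textbf{Step 1 (self-dominating agent $j$).} Since $\psi_{j,t}$ is $c$-dominating in $j$, we have $\psi_{j,t}(\mu_t(\that))\le a_{jj,t+1}^{-1/c}\mu_{jt}(\that)$. By Hypothesis~\ref{hyp:H2}, $a_{jj,t+1}\ge1-\abar\lambda_t$. Thus
\begin{equation*}
  \log\mu_{j,t+1}(\that)\le\log\mu_{jt}(\that)+\tfrac1c|\log(1-\abar\lambda_t)|-y_{j,t+1}(\that)+\epsilon_{jt},
\end{equation*}
where $\epsilon_{jt}$ collects the normalization terms ($\log S^{-1}$ and the log of the ratio $\ell^{\tstar}_j/m_{jt}$). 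Summing telescopically gives the following. The strong law of large numbers gives $\frac1t\sum y_{j,\tau}\to h_j(\that)$, using only i.i.d. in time for agent $j$. By definition of $\beta_\lambda$, $\limsup\frac1t\sum_\tau\frac1c|\log(1-\abar\lambda_\tau)|=\beta_\lambda/c$. Finally $\frac1t\sum\epsilon_{j\tau}\to0$ by Cesàro. Together these yield the claimed bound for agent $j$.

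\textbf{Step 2 (propagation along the path).} Next I induct backward along the path, from $q_m$ to $q_0$. For an edge $(q,q')$, domination gives $\psi_{q,t}(\mu_t(\that))\le a_{qq',t+1}^{-1/c}\mu_{q't}(\that)$. However, $a_{qq',t+1}$ may be as small as $\aund\lambda_t$, which is far worse than $1-\abar\lambda_t$. So I do not iterate this edge bound in time. Instead I use it once per edge at the final time, giving
\begin{equation*}
  \log\mu_{q,t+1}(\that)\le\log\mu_{q',t}(\that)+\tfrac1c|\log(\aund\lambda_t)|+O(1)+|y_{q,t+1}|.
\end{equation*}
This suffices provided $\frac1t\log\lambda_t\to0$ and $\frac1t|y_{q,t+1}|\to0$. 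The latter is bounded since the likelihoods have full support and the signal sets are finite. The one-step time lag costs nothing asymptotically. Chaining the $m$ edges transfers the rate of $j$ to $i$ with $o(t)$ loss.

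\textbf{Main obstacle.} The hard point is the term $\frac1t\log\lambda_t$. Hypothesis~\ref{hyp:H2} only gives $t\lambda_t\to\infty$, hence $\lambda_t\ge 1/t$ eventually, so $\frac1t|\log\lambda_t|\le\frac{\log t}{t}\to0$. This resolves it. A secondary check is that $m_{it}$ and $S_{it}$ tend to the right limits, which needs Theorem~\ref{thm:main} for all agents along the path.
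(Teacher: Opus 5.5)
Your proposal is correct and follows essentially the paper's argument: you iterate the self-domination bound for $j$ in time and close it with the strong law and Ces\`aro, then transfer the rate back along the dominating path with a one-step (non-iterated) edge inequality, where $\frac1t\log(\aund\lambda_t)\to0$ because $\lambda_t\ge 1/t$ eventually. The only difference is that the paper works with the ratio $r_{it}=\mu_{it}(\that)/\mu_{it}(\tstar)$, in which $S_{it}$ and $m_{it}$ cancel and only $\psi_{i,t}(\mu_t(\tstar))\to1$ must be controlled, whereas you track $\mu_{it}(\that)$ directly and bound $S_{it}$ and $m_{it}$ separately, which you do correctly.
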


We prove this Theorem in Appendix \ref{app:T4}. It tells us that each agent aligns with the best agent she can reach with a dominating path. The term $\beta_{\lambda}$ diminishes the rate because the best agent in $R_c(i)$ will be influenced by her neighbors, and this influence is captured by the term $\beta_{\lambda}/c$. An individual can thus learn as fast as the best expert she is linked to through a chain of followership, even if she is not directly connected to that expert and is herself uninformed. The term $\beta_\lambda/c$ is the cost of the expert's own attention to others, which dilutes the value of her private evidence.

\begin{theorem}[Upper bound]\label{thm:upper}
Suppose Hypotheses~\ref{hyp:H1} to \ref{hyp:H4} hold. Then, for every agent $i\in\cN$, and every $\that\neq\tstar$,

\begin{equation}\label{eq:upper}
  \limsup_{t\to\infty}\frac1t\bigl|\log\mu_{it}(\that)\bigr|
  \;\le\;\inf_{c>0}\bigl(\min_{j\in S_c(i)}h_j(\that)\;+\;\frac{\beta_{\lambda}}{c}\bigr)
  \qquad\Pstar\text{-almost surely},
\end{equation}

with the convention $\min_{j\in\emptyset}h_j(\that)=+\infty$, so that the bound is void when $S_c(i)=\emptyset$.
\end{theorem}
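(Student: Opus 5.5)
Fix $c>0$ with $S_c(i)\neq\emptyset$ and $j\in S_c(i)$. It suffices to prove
\[
\limsup_{t\to\infty}\tfrac1t|\log\mu_{it}(\that)|\le h_j(\that)+\beta_\lambda/c ,
\]
since taking the infimum over $j$ and $c$ then gives \eqref{eq:upper}. The plan is to mirror the proof of Theorem~\ref{thm:lower}, reversing every inequality. The relevant quantity is the unnormalized log-belief in $\that$ along the dominated chain. By Theorem~\ref{thm:main}, $\mu_{kt}(\tstar)\to1$ for every $k$. Hence the normalizers $S_{kt}$ and $m_{kt}(\omega_{k,t+1})$ tend to $1$ almost surely: by Hypothesis~\ref{hyp:H1}(ii)--(iii), $\psi_{k,t}(\mu_t(\tstar))\to1$, the sums over false states vanish, and $m_{kt}\to \ell^{\tstar}_k(\omega_{k,t+1})$. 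Thus $\log\mu_{k,t+1}(\that)=\log\psi_{k,t}(\mu_t(\that))+\log\ell^{\that}_k(\omega_{k,t+1})-\log\ell^{\tstar}_k(\omega_{k,t+1})+o(1)$. Equivalently,
\[
-\log\mu_{k,t+1}(\that)=-\log\psi_{k,t}(\mu_t(\that))+y_{k,t+1}(\that)+o(1),
\]
where $y_{k,t+1}(\that)$ is the signal log-likelihood ratio from Theorem~\ref{thm:linearrep}, and the $o(1)$ terms are negligible in Cesàro mean. Write $z_{kt}:=-\log\mu_{kt}(\that)\ge0$.

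\emph{Step 1 (self-dominated agent).} Since $j$ is $c$-self-dominated, $\psi_{j,t}(x)\ge a_{jj,t+1}^{1/c}x_j$. Hypothesis~\ref{hyp:H2} gives $a_{jj,t+1}\ge 1-\abar\lambda_t$. Therefore
\[
z_{j,t+1}\le z_{jt}+\tfrac1c|\log(1-\abar\lambda_t)|+y_{j,t+1}(\that)+o(1).
\]
Summing and applying the strong law of large numbers to the i.i.d.\ variables $y_{j,t}(\that)$ (Hypothesis~\ref{hyp:H3}, mean $h_j(\that)$) yields $\limsup_t z_{jt}/t\le h_j(\that)+\beta_\lambda/c$ almost surely. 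Only a one-agent SLLN is needed here, so independence across agents is not used.

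\emph{Step 2 (propagation along the chain).} Take a $c$-dominated path $i=q_0,q_1,\dots,q_m=j$. For each edge, $\psi_{q_r,t}(x)\ge a_{q_rq_{r+1},t+1}^{1/c}x_{q_{r+1}}$, so
\[
z_{q_r,t+1}\le z_{q_{r+1},t}+\tfrac1c|\log a_{q_rq_{r+1},t+1}|+y_{q_r,t+1}(\that)+o(1).
\]
Iterating $m$ times gives $z_{i,t+m}\le z_{j,t}+\sum_{r<m}\bigl(\tfrac1c|\log a_{q_rq_{r+1},t+r+1}|+y_{q_r,t+r+1}\bigr)+o(1)$. Dividing by $t$, the finitely many $y$ terms are $o(t)$ almost surely, because they take finitely many values by the finiteness of $\Omega$ and full support. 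The transferred bound then gives the claim for $i$, provided $\tfrac1t|\log a_{q_rq_{r+1},t}|\to0$.

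\emph{Main obstacle.} The delicate step is controlling the edge-weight cost $\tfrac1{ct}|\log a_{q_rq_{r+1},t}|$. Hypothesis~\ref{hyp:H2} only gives $a\ge\aund\lambda_t$, so this cost is at most $\tfrac1{ct}\bigl(|\log\aund|+|\log\lambda_t|\bigr)$. The quantity $\tfrac1t|\log\lambda_t|\to0$ follows from $t\lambda_t\to\infty$, since that forces $\lambda_t\ge 1/t$ eventually, giving $|\log\lambda_t|\le\log t=o(t)$. A second point to check is that the $o(1)$ normalizer terms hold uniformly enough; this follows from almost-sure convergence and Cesàro averaging. Finally, the path length $m$ is a fixed time shift, so $z_{i,t+m}/(t+m)$ and $z_{i,t}/t$ have the same $\limsup$.
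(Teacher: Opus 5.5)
Your proposal is correct and follows essentially the same route as the paper. You bound the self-dominated endpoint $j$ via $a_{jj,t+1}\ge 1-\abar\lambda_t$, the one-agent strong law for $y_{j,t}(\that)$, and Ces\`aro control of the normalizers using convergence to $\tstar$; you then propagate along the dominated path using $a_{q_rq_{r+1},t+1}\ge\aund\lambda_t$ and $\lambda_t\ge 1/t$ eventually. The differences are cosmetic: you absorb $S_{kt}$ into the $o(1)$ term where the paper drops it via $S_{kt}\le1$, and you iterate all $m$ edges at once with a time shift, where your indices should read $t+m-r$ rather than $t+r+1$, which is harmless.
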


We prove this Theorem in the Appendix \ref{app:T5}. The interpretation of this Theorem is similar to the interpretation of Theorem~\ref{thm:lower}, it means that an agent $i$, giving importance to all the agents in $S_c(i)$ ends up with a rate lower than or equal to that of the least informed agent in $S_c(i)$ up to an additional term. This agent also captures information from her neighbors, which is represented by the additional term $\beta_{\lambda}/c$. Symmetrically, the least informed person an individual cannot discount sets a ceiling on how fast she can learn.

\begin{remark}
The reader may wonder whether we could combine these two results to obtain upper and lower bounds on the convergence rate. This is the object of the following Proposition, stating that if there exist $c_1>0$ and $c_2>0$ such that $R_{c_1}(i)\neq \emptyset$ and $S_{c_2}(i)\neq \emptyset$, then $R_{c_1}(i)=S_{c_2}(i)=\{r\}$ with $r$ some agent in the graph. 

\begin{proposition}[Equal sets]\label{prop:S=R} 
    Under hypotheses \ref{hyp:H1} to \ref{hyp:H4}, if there exist $c_1>0$ and $c_2>0$ and an agent $i\in\mathcal{N}$, such that both $R_{c_1}(i)\neq \emptyset$ and $S_{c_2}(i)\neq \emptyset$, then, 
    
    \begin{itemize}
        \item there exists an agent $r\in\mathcal{N}$ which is both $c_1$-self-dominating and $c_2$-self-dominated such that $R_{c_1}(i)=S_{c_2}(i)=\{r\}$.
        \item $\Pstar$-almost surely
            \begin{equation*}
                h_r(\that)-\frac{\beta_\lambda}{c_1}
                \;\le\;\liminf_{t\to\infty}\frac1t\bigl|\log\mu_{it}(\that)\bigr|
                \;\le\;\limsup_{t\to\infty}\frac1t\bigl|\log\mu_{it}(\that)\bigr|
                \;\le\;h_r(\that)+\frac{\beta_\lambda}{c_2},
            \end{equation*}
            
            and the rate is exactly $h_r(\that)$ when $\beta_\lambda=0$.
    \end{itemize}
\end{proposition}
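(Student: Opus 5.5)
The plan is to reduce everything to a pointwise incompatibility lemma for a single aggregation function, and then propagate it along the two paths. The lemma is as follows. Fix an agent $q$, a time $t$, and constants $c,c'>0$. Suppose $\psi_{q,t}$ is $c$-dominating in a neighbor $p$ and $c'$-dominated in a neighbor $p'$ (here $p$ or $p'$ may equal $q$ itself). Then $p=p'$.

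To prove the lemma, suppose $p\neq p'$. Take $x\in(0,\infty)^n$ with $x_{p'}=1$, all other coordinates except $x_p$ equal to $1$, and let $x_p=\varepsilon\to0$. Definition~\ref{def:domination} then gives
\begin{equation*}
a_{qp',t+1}^{1/c'}\;\le\;\psi_{q,t}(x)\;\le\;a_{qp,t+1}^{-1/c}\,\varepsilon .
\end{equation*}
The left-hand side is strictly positive, because $a_{qp',t+1}>0$ by Hypothesis~\ref{hyp:H1}(i) (or by Hypothesis~\ref{hyp:H2}). The right-hand side tends to $0$, which is a contradiction. Only one time, say $t=0$, is needed, since in Definition~\ref{def:reachable} domination is required for every $t\ge0$.

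Next, let $r\in R_{c_1}(i)$ and $r'\in S_{c_2}(i)$. They come with a $c_1$-dominating path $i=p_0,p_1,\dots,p_m=r$ and a $c_2$-dominated path $i=q_0,\dots,q_l=r'$. Without loss of generality both paths are simple: deleting a cycle keeps every edge dominating (respectively dominated), and the endpoint is unchanged. Extend each path by a constant tail, setting $p_s=r$ for $s\ge m$ and $q_s=r'$ for $s\ge l$. With this convention the statement ``$\psi_{p_s,t}$ is $c_1$-dominating in $p_{s+1}$'' holds for every $s\ge0$: along the path it is the dominating-edge property, and on the tail it is self-domination of $r$. The analogous statement holds for the $q_s$.

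Now argue by induction on $s$. If $p_s=q_s$, then $\psi_{p_s,t}$ is $c_1$-dominating in $p_{s+1}$ and $c_2$-dominated in $q_{s+1}$, so the lemma forces $p_{s+1}=q_{s+1}$. Hence the two extended sequences coincide. Suppose $m<l$. Then $q_{m+1}=p_{m+1}=r=q_m$, which is impossible for a simple path, since edges join distinct vertices. So $m=l$, and $r=r'$.

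Since $r\in R_{c_1}(i)$ and $r'\in S_{c_2}(i)$ were arbitrary, every element of $R_{c_1}(i)$ equals every element of $S_{c_2}(i)$. Both sets are nonempty, so $R_{c_1}(i)=S_{c_2}(i)=\{r\}$. The agent $r$ is $c_1$-self-dominating because $r\in R_{c_1}(i)$, and $c_2$-self-dominated because $r\in S_{c_2}(i)$. This proves the first item.

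For the second item, apply Theorem~\ref{thm:lower} with the particular choice $c=c_1$, noting that $\max_{j\in R_{c_1}(i)}h_j(\that)=h_r(\that)$. This gives
\begin{equation*}
\liminf_{t\to\infty}\tfrac1t\bigl|\log\mu_{it}(\that)\bigr|\;\ge\;h_r(\that)-\frac{\beta_\lambda}{c_1}.
\end{equation*}
Symmetrically, Theorem~\ref{thm:upper} with $c=c_2$ gives
\begin{equation*}
\limsup_{t\to\infty}\tfrac1t\bigl|\log\mu_{it}(\that)\bigr|\;\le\;h_r(\that)+\frac{\beta_\lambda}{c_2},
\end{equation*}
both $\Pstar$-almost surely. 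The middle inequality $\liminf\le\limsup$ is trivial. When $\beta_\lambda=0$ the two bounds coincide, so the limit exists and equals $h_r(\that)$.

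The main obstacle I expect is the bookkeeping in the path-propagation step. One must handle paths of unequal length and possible revisits of nodes. One must also treat self-domination and edge-domination uniformly, which is why I use the constant-tail convention together with the reduction to simple paths. The analytic content, by contrast, is just the one-line limit $x_p\to0$ in the lemma.
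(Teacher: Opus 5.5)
Your proposal is correct and follows the paper's proof. It uses the same pointwise incompatibility (send $x_p\to0$ with $x_{p'}=1$), the same induction along the two paths forcing $p_s=q_s$, and the same application of Theorems~\ref{thm:lower} and~\ref{thm:upper} at $c=c_1$ and $c=c_2$. Your constant-tail convention handles the endpoint bookkeeping more cleanly than the paper's appeal to ``smallest'' paths. In fact, once the extended sequences coincide, $r=r'$ follows directly by taking $s$ large, so the reduction to simple paths is not even needed.
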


The proof is presented in the Appendix \ref{app:P1}.
\end{remark}

Under which conditions on the aggregation functions do dominating or dominated paths exist? Our goal, to finish this Section, is to link the existence of these paths to the logarithmic curvatures of the aggregation functions, by presenting a sufficient condition to the existence of a path from an agent $i$ to an agent $j$. By doing so, it will make the Theorem more convenient to use in some cases, for example it will give us direct corollaries for the CES family in the next Section.

\begin{proposition}[Curvature]\label{prop:curv}
Suppose Hypothesis~\ref{hyp:H1} holds and let $j$ be a neighbor of $i$.

\begin{itemize}
    \item[(i)] If $H^{i,t}_{jl}(x)\ge c\,e_{ij,t}(x)e_{il,t}(x)$ for all $l\neq j$ and all
    $x>0$, (equivalently $\kappa_{i,j,l,t}\le-c$ when $l$ is also a neighbor of $i$) then \text{$\psi_{i,t}$ is \emph{$c$-dominating in $j$}}.
    \item[(ii)] If $H^{i,t}_{jl}(x)\le-c\,e_{ij,t}(x)e_{il,t}(x)$ for all $l\neq j$ and all
    $x>0$ (equivalently $\kappa_{i,j,l,t}\ge c$ when $l$ is also a neighbor of $i$) then \text{$\psi_{i,t}$ is \emph{$c$-dominated in $j$}}.
\end{itemize}
\end{proposition}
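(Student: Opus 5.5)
We work in logarithmic coordinates and reduce both statements to a monotonicity property of an auxiliary function along coordinate directions. Set $f(u):=\log\psi_{i,t}(e^{u})$ for $u\in\R^n$, so that $\partial_{u_l}f(u)=e_{il,t}(e^u)$ and $\partial_{u_l}\partial_{u_j}f=H^{i,t}_{jl}(e^u)$. By Hypothesis~\ref{hyp:H1}(ii), $f(u+s\mathbf 1)=f(u)+s$, hence $f(u)-u_j$ and all elasticities are invariant under shifts along $\mathbf 1$. Euler's formula gives $\sum_l e_{il,t}\equiv 1$ with every $e_{il,t}\ge 0$, so $0<e_{ij,t}\le 1$. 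By Hypothesis~\ref{hyp:H1}(iii), $e_{ij,t}(\mathbf 1)=a_{ij,t+1}\le 1$. In these coordinates, $c$-domination in $j$ reads $f(u)-u_j\le -\tfrac1c\log a_{ij,t+1}$. By shift invariance it suffices to prove this on the hyperplane $\{u_j=0\}$.

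\emph{Part (i).} The hypothesis $H^{i,t}_{jl}\ge c\,e_{ij,t}e_{il,t}$ becomes, after dividing by $e_{ij,t}>0$,
\[
\partial_{u_l}\log e_{ij,t}(e^u)\;\ge\;c\,\partial_{u_l} f(u)\qquad (l\neq j).
\]
So $G(u):=\log e_{ij,t}(e^u)-c f(u)$ is nondecreasing in each coordinate $u_l$ with $l\ne j$. Fix $u$ with $u_j=0$ and let $v:=\max(u,0)$ componentwise, so that $v_j=0$ and $v\ge 0$. We then chain three facts:
\begin{enumerate}[label=(\alph*)]
\item By monotonicity (Hypothesis~\ref{hyp:H1}(i)), $f(u)\le f(v)$.
\item Integrating $\nabla G$ along the segment from $0$ to $v$, which only increases coordinates $l\ne j$, gives $G(v)\ge G(0)=\log a_{ij,t+1}-c f(0)=\log a_{ij,t+1}$.
\item Since $\log e_{ij,t}(e^v)\le 0$, we get $-c f(v)\ge G(v)\ge \log a_{ij,t+1}$.
\end{enumerate}
Hence $f(u)\le f(v)\le -\tfrac1c\log a_{ij,t+1}$, which is \eqref{eq:dominating}.

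\emph{Part (ii).} The argument mirrors part (i). The hypothesis now says that $K(u):=\log e_{ij,t}(e^u)+c f(u)$ is nonincreasing in each $u_l$ with $l\ne j$. Take $v:=\min(u,0)\le 0$ with $v_j=0$, and run the same three steps with the inequalities reversed:
\begin{enumerate}[label=(\alph*)]
\item Monotonicity gives $f(u)\ge f(v)$.
\item Moving from $0$ down to $v$ gives $K(v)\ge K(0)=\log a_{ij,t+1}$.
\item Since $\log e_{ij,t}(e^v)\le 0$, we get $c f(v)\ge\log a_{ij,t+1}$.
\end{enumerate}
Hence $f(u)\ge \tfrac1c\log a_{ij,t+1}$, which is \eqref{eq:dominated}. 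Alternatively, (ii) follows from (i) applied to the dual $\phi_{i,t}$ of Remark~\ref{rem:dual}, whose elasticities coincide and whose cross-derivatives change sign.

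The main obstacle is that the monotonicity of $G$ (resp.\ $K$) only controls one orthant relative to the reference point $\mathbf 1$. The key step is the truncation $v=\max(u,0)$ (resp.\ $\min(u,0)$). It moves the point into the good orthant while only helping the inequality, thanks to the monotonicity of $\psi_{i,t}$ itself. The other ingredient is the bound $e_{ij,t}\le 1$ coming from Euler's formula, which is what lets us discard the $\log e_{ij,t}$ term. The parenthetical equivalence with $\kappa_{ijl,t}$ is immediate from Definition~\ref{def:curv}. For non-neighbors $l$ both sides of the hypothesis vanish, so those coordinates impose no constraint.
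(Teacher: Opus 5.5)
Your proof is correct and is essentially the paper's argument: your truncation $v=\max(u,0)$ with $u_j=0$ is the paper's point $x'_l=\max(x_j,x_l)$ in log coordinates. Your monotonicity of $G=\log e_{ij,t}-cf$ along the segment from $0$ to $v$ is exactly the paper's integrated inequality $c\int_0^1 D(s)\,ds\le\log e_{ij,t}(x')-\log a_{ij,t+1}$, and you then use $e_{ij,t}\le 1$ in the same way. The only minor difference is that you prove (ii) directly via the mirrored truncation $\min(u,0)$, whereas the paper gets (ii) by applying (i) to the dual $\phi_{i,t}$.
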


We prove this Proposition in the Appendix \ref{app:P2}. We can then interpret the logarithmic curvature as a relative attention to a neighbor of $i$ with respect to the others. If $(i)$ or $(ii)$ hold for all $t$, in a path from an agent $i$ to an agent $j$, then we can conclude, by definition, that $j$ belongs to $R_c(i)$ or $S_c(i)$ respectively. Moreover, a direct corollary is that taking the dual swaps $R_c(i)$ and $S_c(i)$. Curvature therefore summarizes a behavioral trait, namely the extent to which an individual lets the most skeptical or the most convinced among her contacts drive her opinion. Proposition~\ref{prop:curv} shows that such traits, and not the network structure alone, determine who effectively leads and who effectively lags in collective learning.

\section{Applications and Examples}\label{sec:applications_and_examples}

\subsection{Constant Elasticity of Substitution family}\label{subsec:CES}

This Section focuses on the well-known CES family. We show that it can be treated under the framework of the preceding Section, and therefore derive bounds on the convergence rates for this family. This family includes the log-linear, linear, and harmonic rules. We assume Hypotheses \ref{hyp:H2}, \ref{hyp:H3} and \ref{hyp:H4}, and show that this family satisfies Hypothesis \ref{hyp:H1}, so Theorem \ref{thm:main} applies.

These are the aggregators of the following form, for a parameter $\rho\in[-1,1]$ and weights $b_{i,j,t}$ such that $\sum_{j=1}^nb_{i,j,t}=1$:

\[
    \psi(x_1,\ldots,x_n)=
    \begin{cases}
        \left(\displaystyle\sum_{j=1}^{n}b_{i,j,t} x_j^{\rho}\right)^{1/\rho},
        & \text{if } \rho\neq 0,\\[2ex]
        \displaystyle\prod_{j=1}^{n}x_j^{b_{i,j,t}},
        & \text{if } \rho=0.
    \end{cases}
\]

As explained by \citet{molavi2018theory}, this class of functions satisfies Hypothesis~\ref{hyp:H1} because its logarithmic curvature is always equal to $\rho$. Moreover, for this family we have $a_{i,j,t+1}=b_{i,j,t}$, meaning that the weights involved in Hypothesis~\ref{hyp:H2} are the same as the weights defining the function. We may then fix the weights so that $\psi_{i,t}$ satisfies Hypothesis~\ref{hyp:H2}. Note that for $\rho=1$ we obtain the linear rule and for $\rho=0$ we obtain the log-linear rule. These are, respectively, the objects of Theorems 5 and 4 of \citet{molavi2018theory}, showing that we indeed generalize their results.

We now present two convergence rate results, one for $\rho<0$ and another for $\rho>0$. Our results depend on the quantity $\beta_{\lambda}:=\limsup_{t\to\infty}\frac1t\sum_{\tau\le t}\bigl|\log(1-\bar a\lambda_\tau)\bigr|$.

\begin{corollary}\label{cor:rateneg}
Suppose there exists a fixed $\rho\in[-1,0)$, such that, for all agent $i$ and $t\ge0$, one has $\psi_{i,t}$ belongs to the CES family, with the parameter $\rho_{i,t}$ such that $-1\le\rho_{i,t}\le\rho$. Then, $\mathbb{P}^{\tstar}$-almost surely, for every agent $i$ and every $\that\neq\tstar$,
\begin{equation*}
    \liminf_{t\to\infty}\frac1t\bigl|\log\mu_{it}(\that)\bigr|\;\ge\;h_\star(\that)-\frac{\beta_{\lambda}}{|\rho|} \;
    \qquad\Pstar\text{-almost surely},
\end{equation*}

with $h_\star(\that)=\max_{j\in\mathcal{N}}h_j(\tstar,\that)$.
\end{corollary}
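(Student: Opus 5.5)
The plan is to deduce the corollary from Theorem~\ref{thm:lower} with the single choice $c=|\rho|$. For this choice we will show that $R_{|\rho|}(i)=\cN$ for every $i$. Then $\max_{j\in R_{|\rho|}(i)}h_j(\that)=h_\star(\that)$, and taking this $c$ in the supremum of \eqref{eq:lower} gives the claimed bound $h_\star(\that)-\beta_\lambda/|\rho|$.

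The key step is to show that every $\psi_{i,t}$ is $|\rho|$-dominating in every neighbor $j\in\cN_i$, including $j=i$. We will use Proposition~\ref{prop:curv}(i) with $c=|\rho|$. For a CES aggregator with parameter $\rho_{i,t}$, the cross-derivatives are
\[
H^{i,t}_{jl}(x)=-\rho_{i,t}\,e_{ij,t}(x)e_{il,t}(x)
\]
for all $j\neq l$; this identity is just $\kappa\equiv\rho_{i,t}$ written out. When $l\notin\cN_i$, both sides vanish, so the condition of Proposition~\ref{prop:curv}(i) reads $0\ge 0$.

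When $l\in\cN_i$, the elasticities are positive. The assumption $\rho_{i,t}\le\rho<0$ gives $-\rho_{i,t}\ge|\rho|$, hence
\[
H^{i,t}_{jl}\ge|\rho|\,e_{ij,t}e_{il,t}.
\]
Thus Proposition~\ref{prop:curv}(i) applies for every $t$ and every neighbor $j$. Hypothesis~\ref{hyp:H1} holds for the CES family, as recalled in the text, so the Proposition can be used. The lower constraint $\rho_{i,t}\ge-1$ only guarantees the bounded-curvature condition (iv), so that Theorem~\ref{thm:lower} applies. As a sanity check, one can verify the domination directly: for $\rho_{i,t}<0$,
\[
\Bigl(\sum_k b_{ik,t}x_k^{\rho_{i,t}}\Bigr)^{1/\rho_{i,t}}\le (b_{ij,t}x_j^{\rho_{i,t}})^{1/\rho_{i,t}}=b_{ij,t}^{-1/|\rho_{i,t}|}x_j\le a_{ij,t+1}^{-1/|\rho|}x_j,
\]
using $b_{ij,t}=a_{ij,t+1}\in(0,1]$ and $|\rho_{i,t}|\ge|\rho|$.

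With domination in hand, each agent $j$ is $|\rho|$-self-dominating, because $j\in\cN_j$. Every edge of $\cG$ is $|\rho|$-dominating, so every directed path is too. Since $\cG$ is strongly connected by Hypothesis~\ref{hyp:H2}, every $j$ is joined to $i$ by a dominating path, which gives $R_{|\rho|}(i)=\cN$.

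The main obstacle is checking that the definition of $R_c(i)$ is satisfied uniformly in $t$, since the parameters $\rho_{i,t}$ vary over time. This is exactly why the hypothesis imposes a fixed $\rho$ with $\rho_{i,t}\le\rho$, and why we take $c=|\rho|$ rather than $|\rho_{i,t}|$: the bound above is uniform in $t$. The final step, applying Theorem~\ref{thm:lower}, then yields the almost sure inequality for all $i$ and all $\that\neq\tstar$.
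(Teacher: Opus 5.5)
Your proposal is correct and follows the paper's own argument. Proposition~\ref{prop:curv}(i) with $c=|\rho|$ (using $\kappa\equiv\rho_{i,t}\le\rho<0$) makes every edge $|\rho|$-dominating, and since $i\in\cN_i$ it makes every agent $|\rho|$-self-dominating; strong connectivity then gives $R_{|\rho|}(i)=\cN$, and Theorem~\ref{thm:lower} with this single $c$ yields $h_\star(\that)-\beta_\lambda/|\rho|$. Your direct check $\psi_{i,t}(x)\le b_{ij,t}^{-1/|\rho_{i,t}|}x_j\le a_{ij,t+1}^{-1/|\rho|}x_j$ is a useful addition that the paper omits, because it shows explicitly that the uniform bound $\rho_{i,t}\le\rho$ is what allows one time-independent constant $c$.
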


The proof is a direct corollary of Theorem \ref{thm:lower}, indeed, for all $i$ and $j$ in $\mathcal{N}$, there is a path by strong connectivity, since $|\rho|$ is fixed with respect to agent and to time, and different from 0, Proposition \ref{prop:curv} gives us that the path will be $|\rho|$-dominating, then $R_{-\rho}(i)=\mathcal{N}$ for all agent $i$. We note that this bound is better than the one obtained by \citet{molavi2018theory} when $\lambda_t$ converges to $0$. The interpretation is to say that each agent follows the best one, giving a similar aggregation structure than the min rule presented by \cite{mitra2020new}, we indeed obtain a convergence rate which is equivalent. We also observe that, the harmonic rule ($\rho=-1)$ gives the best lower bound for a given fixed decaying sequence $\lambda_t$.

We now present another result, this time for $\rho>0$.

\begin{corollary}\label{cor:ratepos}
Suppose there exist a fixed $\rho\in(0,1]$, such that, for all agent $i$ and $t\ge0$, one has $\psi_{i,t}$ belongs to the CES family, with the parameter $\rho_{i,t}$ such that $\rho\le\rho_{i,t}\le1$. Then, for every agent $i$ and $\hat\theta \neq \theta^\star$,
\begin{equation*}
    \limsup_{t\to\infty}\frac1t\bigl|\log\mu_{it}(\that)\bigr|\;\le\;\underline{h}(\that)+\frac{\beta_{\lambda}}{\rho} \; 
    \qquad\Pstar\text{-almost surely}
\end{equation*}

with $\underline{h}(\that):=\min_{j\in\cN}h_j(\that)$.
\end{corollary}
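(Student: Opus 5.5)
The plan is to mirror the argument given for Corollary~\ref{cor:rateneg}, replacing Theorem~\ref{thm:lower} by Theorem~\ref{thm:upper} and dominating paths by dominated ones. I will show that every agent is $\rho$-self-dominated and every edge of $\cG$ is $\rho$-dominated. Then strong connectivity gives $S_\rho(i)=\cN$ for every $i$. Taking $c=\rho$ in the infimum of \eqref{eq:upper} gives
\[
\limsup_{t\to\infty}\frac1t\bigl|\log\mu_{it}(\that)\bigr|\le \min_{j\in\cN}h_j(\that)+\frac{\beta_\lambda}{\rho}=\underline{h}(\that)+\frac{\beta_\lambda}{\rho}.
\]
The hypotheses of Theorem~\ref{thm:upper} are available: Hypotheses~\ref{hyp:H2}--\ref{hyp:H4} are assumed, and Hypothesis~\ref{hyp:H1} holds for the CES family with $\rho_{i,t}\in(0,1]$, as recalled above.

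The key step is the domination property for each $\psi_{i,t}$ with parameter $\rho_{i,t}\in[\rho,1]$. I will use Proposition~\ref{prop:curv}(ii). For a CES function one computes $e_{ij,t}(x)=b_{ij,t}x_j^{\rho_{i,t}}/\sum_k b_{ik,t}x_k^{\rho_{i,t}}$ and $H^{i,t}_{jl}=-\rho_{i,t}\,e_{ij,t}e_{il,t}$ for $j\neq l$. Hence $H^{i,t}_{jl}\le-\rho\, e_{ij,t}e_{il,t}$ for all $l\neq j$ and all $x>0$, since $e\ge0$ and $\rho_{i,t}\ge\rho$; this includes non-neighbors $l$, for which both sides vanish. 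So $\psi_{i,t}$ is $\rho$-dominated in every neighbor $j\in\cN_i$, including $j=i$, at every $t$.

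A direct check is also easy and guards against subtleties in Proposition~\ref{prop:curv}. Using $\sum_k b_{ik}x_k^{\rho_{i,t}}\ge b_{ij}x_j^{\rho_{i,t}}$ we get $\psi_{i,t}(x)\ge b_{ij,t}^{1/\rho_{i,t}}x_j\ge b_{ij,t}^{1/\rho}x_j$. The last inequality holds because $b_{ij,t}=a_{ij,t+1}\in(0,1]$ and $1/\rho_{i,t}\le1/\rho$. This is exactly \eqref{eq:dominated} with $c=\rho$.

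Since this holds for every $i$, every $j\in\cN_i$ and every $t\ge0$, every agent is $\rho$-self-dominated and every edge is $\rho$-dominated. By strong connectivity, any $i$ is joined to any $j$ by a directed path, all of whose edges are $\rho$-dominated, and the endpoint $j$ is $\rho$-self-dominated; therefore $S_\rho(i)=\cN$. The main point needing care is that the constant $c$ must be uniform in $i$ and $t$, which is why the uniform lower bound $\rho_{i,t}\ge\rho>0$ is needed; this is handled by the monotonicity in the exponent noted above. One should also check that the direction convention for paths in Definition~\ref{def:reachable} matches edges $(i,j)\in\cE$ with $j\in\cN_i$. Because every edge qualifies, this convention does not affect the conclusion.
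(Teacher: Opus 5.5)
Your proposal is correct and follows the paper's own argument. The paper likewise uses Proposition~\ref{prop:curv}(ii), since the CES curvature is $\rho_{i,t}\ge\rho$, to make every agent and edge $\rho$-dominated, gets $S_\rho(i)=\cN$ from strong connectivity, and applies Theorem~\ref{thm:upper} with $c=\rho$; your direct check $\psi_{i,t}(x)\ge b_{ij,t}^{1/\rho_{i,t}}x_j\ge b_{ij,t}^{1/\rho}x_j$ is a valid, more elementary substitute for Proposition~\ref{prop:curv} but does not change the route.
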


This is also a direct corollary of Theorem \ref{thm:upper}: this time, one has $S_{\rho}(i)=\mathcal{N}$ for all agents $i$. This time, our result can be interpreted by saying that a blind agent, who cannot distinguish the true state from $\that$, slowly recovers the true state if she does not trust the network sufficiently to rely on the best agents and prefers to balance her update choices on all her neighbors. The result is that all the agents minimize their learning speed, moreover, if there is a blind agent, ($\underline{h}(\that)=0$), and $\beta_\lambda=0$, all the agents in the network learn with a sub-exponential rate of convergence.

We then observe that the convergence rate at which agents recover the true state depends on the quantity of information each agent is capable of providing to herself and the degree of confidence she gives to her neighbors. It is translated in the CES case by the sign of the coefficient $\rho$, and the degree at which $\lambda_t$ decreases. 

\subsection{An illustrative example: a committee assessing a project}
\label{subsec:committee_example}

Consider a committee of $n=6$ members deciding whether a project is viable. There are various roles in the committee, such as the engineer, the economist, and the manager. Each one may learn and aggregate information from others differently. During the decision process, one can expect a first stage consisting of an individual assessment of viability, so each role can decide how she wants to aggregate the information the others are providing. However, after some research, the committee must meet and decide the verdict. Then, during the meeting process, everyone should aggregate others' information in the same way, although some agents may be more influential than others and their opinions are more important. During both stages, every member keeps receiving private evidence in each period.

Following this idea, we compare four scenarios in which members use either homogeneous or heterogeneous rules and either meet or do not meet, so that each ingredient of the model can be assessed separately.

\subsubsection{Experimental setup}

We define $K=3$ hypotheses. The project is either viable ($\theta_V=\tstar$, which is the true state), not viable for technological reasons ($\theta_T$), or not viable for cost reasons ($\theta_C$). The committee is formed by six individuals, two for each of the mentioned roles. In every period each member sees a binary signal (favorable or unfavorable), independent across members and i.i.d.\ over time, coming from the true state distribution $l^{\theta^*}_i$. Table~\ref{tab:committee_signals} gives the signal structure and the relative entropies $h_i(\hat\theta)$. Engineers can tell $\theta_V$ from $\theta_T$ but cannot see cost at all ($h_i(\theta_C)=0$), economists are the mirror image, while managers can simply identify whether it is viable or not. Only the manager can identify $\tstar$ alone and every false state can be told apart from $\tstar$ by someone, so Hypothesis~\ref{hyp:H3} holds. We set each individual's prior beliefs as uniform for each hypothesis (Hypothesis~\ref{hyp:H4}).

\begin{table}[ht!]
    \centering
    \begin{tabular}{lccccccc}
        \toprule
        Role & $\ell_i(\text{fav}\mid\theta_V)$ & $\ell_i(\text{fav}\mid\theta_T)$ & $\ell_i(\text{fav}\mid\theta_C)$ & $h_i(\theta_T)$ & $h_i(\theta_C)$ & $\rho_{\text{role}}$ \\
        \midrule
        Engineer  & 0.70 & 0.30 & 0.70 & 0.339 & 0     & $-1$ \\
        Economist & 0.70 & 0.70 & 0.30 & 0     & 0.339 & $0$  \\
        Manager   & 0.55 & 0.45 & 0.45 & 0.020 & 0.020 & $+1$ \\
        \bottomrule
    \end{tabular}
    \caption{Signal structure, informativeness, and aggregation parameters by role in the committee example. For each role, $\ell_i(\text{fav}\mid\theta)$ is the probability that a member observes a favorable signal when the true state is $\theta$, where $\theta_V$ denotes a viable project, $\theta_T$ a project that is not viable for technological reasons, and $\theta_C$ a project that is not viable for cost reasons. The relative entropies $h_i(\theta_T)$ and $h_i(\theta_C)$ measure how well a member of each role distinguishes the true state $\theta_V$ from $\theta_T$ and $\theta_C$, respectively. The parameter $\rho_{\text{role}}$ is the CES aggregation parameter used by each role in the heterogeneous scenarios.}
    \label{tab:committee_signals}
\end{table}

The graph formed by the agents when they communicate is complete. We fix $\aund=1/n$, $\bar a=(n-1)/n=5/6$ and $\lambda_t=\max\{t,1.01\}^{-0.3}$. The value $1.01$ is used at $t\in\{0,1\}$, so that it satisfies the definition of $\lambda_t\in(0,1)$. Then, Hypothesis \ref{hyp:H2} holds since $t\lambda_t=t^{0.7}\to\infty$. We then define the weights of the adjacency matrix as:

\begin{equation*}
    a_{ij,t+1}=\frac{\lambda_t}{n}\ (j\neq i),\qquad a_{ii,t+1}=1-\frac{n-1}{n}\lambda_t ,
\end{equation*}

so that $b_{ij,t}=a_{ij,t+1}$ for all $i,j$, as in Section~\ref{subsec:CES}.

At $t=0$ each member weights herself and every colleague equally, at $1/n$. As $\lambda_t$ decays, each member stops listening to others and the weight drifts toward herself. For example, at $t=10$ the weight assigned to her own belief is $a_{ii,10}=0.57$, and the remaining attention is spread across her neighbors.

\subsubsection{Scenarios}

As explained, we can divide the decision process into two regimes if the agents meet at a specified time period, set at $t_c=10$. In every scenario, every aggregation rule $\psi_{i,t}$ used belongs to the CES family, $\psi_{i,t}(x)=\bigl(\sum_j b_{ij,t}x_j^{\rho_{i,t}}\bigr)^{1/\rho_{i,t}}$, so Hypothesis \ref{hyp:H1} holds. In Table~\ref{tab:scenarios} we show the four scenarios we compare, all covered by our theory, which covers homogeneous and heterogeneous aggregation rules between agents and the possibility of changing that rule over time in a non-stochastic way. In the scenarios where rules are homogeneous, we specify that agents use the log-linear rule of \cite{molavi2018theory} ($\rho_{i,t}\equiv0$), and if they meet, it changes at $t_c$ to the harmonic rule ($\rho_{i,t\geq t_c} = \bar{\rho} = -1$). In the heterogeneous cases, each role aggregates information differently:  the engineer aggregates following the harmonic rule ($\rho_{\text{eng}}=-1$), the economist uses the log-linear ($\rho_{\text{econ}}=0$), while the manager aggregates linearly ($\rho_{\text{manag}}=1$). If they meet, their aggregation rules will switch to the harmonic rule ($\bar{\rho}=-1$). For each scenario, we run $300$ simulations with different signal realizations.

\begin{table}[ht!]
    \centering
    \begin{tabular}{lcc}
        \toprule
         & Fixed rules & Meeting at $t_c=10$ \\
        \midrule
        Homogeneous   & $\rho_{i,t}\equiv 0$ & $0$ for $t<t_c$, $\bar\rho=-1$ for $t\ge t_c$ \\
        Heterogeneous & $\rho_{i,t}\equiv\rho_{\text{role}(i)}$ & $\rho_{\text{role}(i)}$ for $t<t_c$, $\bar\rho=-1$ for $t\ge t_c$ \\
        \bottomrule
    \end{tabular}
    \caption{Description of the four scenarios run, where we fix either heterogeneous or homogeneous aggregation rules and if the committee meets or not.}
    \label{tab:scenarios}
\end{table}

\subsubsection{Results}

We show in Figure~\ref{fig:belief_evolution} the evolution over time of the mean belief in $\tstar$, over agents and over the different runs, for the different scenarios considered, as well as the $10-90\%$ quantile band. In Figure \ref{fig:remaining_doubt}, we show the remaining doubt in log scale, computed as the logarithm of the sum of beliefs assigned to hypotheses other than the true state, i.e. $\log(1-\mu_{i,t}(\tstar))$. 

\begin{figure}[ht!]
    \centering
    \begin{subfigure}[t]{0.49\textwidth}
        \centering
        \includegraphics[width=\textwidth]{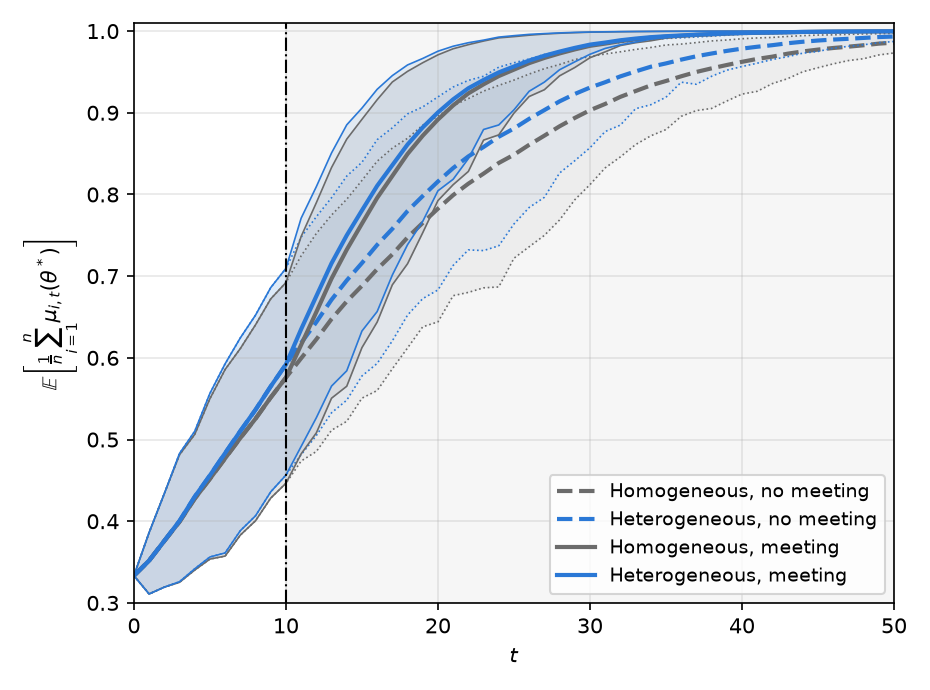}
        \caption{Belief evolution}
        \label{fig:belief_evolution}
    \end{subfigure}
    \hfill
    \begin{subfigure}[t]{0.49\textwidth}
        \centering
        \includegraphics[width=\textwidth]{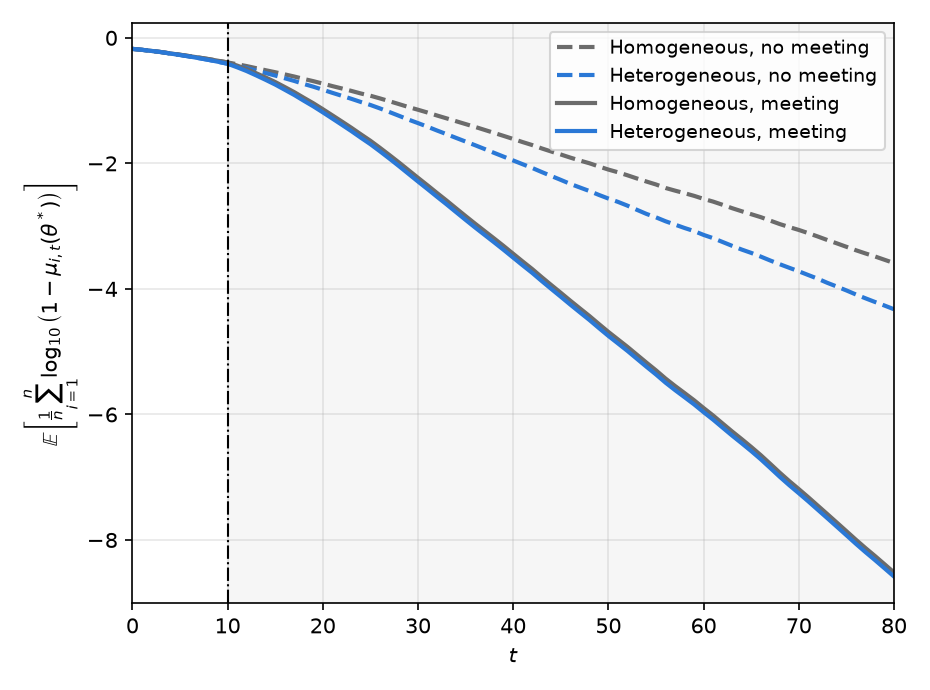}
        \caption{Remaining doubt}
        \label{fig:remaining_doubt}
    \end{subfigure}

    \caption{Evolution of beliefs and remaining doubt over time in the committee example. In gray, we show the homogeneous aggregation rules scenarios, in blue the heterogeneous ones. Dashed lines consider the scenario where there is no meeting, while solid means a meeting starts at $t_c=10$ indicated by the dash-dotted vertical line. (a) shows the belief in the true state averaged over members and across the $300$ runs, where bands are the $10-90\%$ quantile bands (delimited with tighter lines). (b) shows the remaining doubt (log of the belief assigned to other hypotheses $\log(1-\mu_{i,t}(\tstar))$) averaged over members and runs.}
    \label{fig:committee}
\end{figure}

In all four scenarios the Hypotheses presented in the article hold, and the switch of the aggregation rules at $t_c$ is deterministic. Theorem~\ref{thm:main} therefore gives that all the members beliefs in the committee will converge almost surely to $\mu_{i,t}(\tstar) \to 1 \ , \ \forall i\in n$. Figure \ref{fig:belief_evolution} is consistent with this, since the mean belief goes from the uniform prior $\frac{1}{3}$ towards $1$ in every scenario, and the quantile bands narrow around $1$ as $t$ grows, and never approach $0$. Figure \ref{fig:remaining_doubt} shows the same, since the log remaining doubt keeps decreasing as time increases. What separates the scenarios is how fast this happens, and that is where the CES-specific results of Section \ref{subsec:CES} apply. 

When analyzing the effects of the meeting, i.e. the change to the common harmonic aggregation rule, we see that after $t_c$ the belief curves for the meeting scenarios lie above those without a meeting. Moreover, their doubt is lower by a margin that grows with $t$. This holds in both the homogeneous and heterogeneous cases. The explanation lies in the sign of $\rho$. After $t_c$ every member uses the harmonic rule on a complete graph, and Corollary \ref{cor:rateneg} says that under negative curvature, each member follows the best-informed one, i.e. the engineers for identifying $\theta_T$ and the economists for $\theta_C$. Moreover, recall that $\rho=-1$ gives the best bound among the negative values for a given $\lambda_t$. The meeting acts as an institutional device that, without providing any new evidence, changes how information is pooled and lets each member benefit from the expertise of others on the dimension where they are competent.

If instead we focus on heterogeneity, we see that when the committee meets, the two meeting curves are nearly indistinguishable after $t_c$, since the common rule removes the pre-meeting differences between the members, and we set the adjacency matrix weights equal across individuals, i.e. no role is more influential during the decision process. In contrast, when they do not meet, we observe in Figure \ref{fig:remaining_doubt} that the remaining doubt under heterogeneity decreases faster over time than in the homogeneous case. This is because the homogeneous log-linear rule ($\rho=0$) lies between the harmonic and the linear rules. Its convergence rate bound benchmark, shown in \cite{molavi2018theory}, is the average informativeness of the committee, which is much lower than that of the best-informed member given by the harmonic rule because the managers are nearly uninformative. Therefore, the presence of agents with the harmonic rule in the heterogeneous case increases the speed of convergence.

Heterogeneous and time-varying aggregation does not affect almost sure convergence (Theorem~\ref{thm:main}), but it does affect the speed of convergence, i.e. how quickly doubt disappears. As explained in CES family Section \ref{subsec:CES} and suggested by Section \ref{sec:dominating}, rules with more negative curvature let members rely on the most informative colleagues, and rules with positive curvature are slowed by the least informed members.

\section{Discussion and Future Work}\label{sec:discus_and_future}
Our framework enables us to study a broader set of aggregation functions, as well as heterogeneous, time-varying rules. However, there is still room for improvement, and several questions remain open. 

\paragraph{Correlated observations.} Hypothesis \ref{hyp:H3} states independence of the received signals across agents. Even though it is a standard assumption, it is still a strong Hypothesis, which is used only once in the proof. This is why we believe that it is possible to obtain Theorem \ref{thm:main} without this supposition. By contrast, we do not require this constraint for Theorems \ref{thm:lower} and \ref{thm:upper}, making them applicable to most settings, especially more realistic ones. Indeed, in practice, the independence assumption across agents is very unrealistic, because, often, the signals originate from a single cause, and therefore, are very likely to be dependent.

\paragraph{Tightness of the convergence rate bounds.} One can wonder whether this bound is tight, we were not able to prove it. To arrive at Theorem \ref{thm:upperbound}, we applied a straightforward upper bounding on the equation (i) obtained in Theorem \ref{thm:linearrep}, therefore, we make the conjecture that this bound might be improved. Concerning the bounds obtained by Theorems \ref{thm:lower} and \ref{thm:upper}, In particular, in the case of Theorem \ref{thm:upper}, when the bound is equal to exactly $0$ (existence of a blind agent and $\lambda_t$ decaying to 0, then the bound is tight) for Theorem $\ref{thm:lower}$, in some cases, where there is a concentration of very informed agents in the graph, the bound might be surpassed. Moreover, Proposition \ref{prop:S=R} gives an exact rate, but holds for very specific combinations of graphs and aggregators. 

\paragraph{Misspecification.} We placed ourselves in a very convenient setting where there is a true unknown state $\tstar\in\Theta$, with $\Theta$ being a finite set known by the agents. Furthermore, they know the likelihoods linking received signals to these states, where the distribution of the signals they observe is exactly the same as the likelihood on the true state. Do these results hold when this equality no longer holds but the likelihood under the true state remains the closest to the real signal distribution? 

\paragraph{Time-varying neighborhoods.} Another strong constraint on our framework is that neighborhoods are fixed for all agents. This is less restrictive than the other points, however, it could be useful to allow agents to change the set of people with whom we communicate.

\paragraph{Optimal configuration under heterogeneous time varying rules.} As we have shown in Section \ref{subsec:CES}, from our setting we conjecture that any network should use the harmonic rule with weights decaying to 0 (having the best lower bound) , but this is only proved for the CES family case and under homogeneity among agents. Would this remain true in a more realistic and complex setup, such as heterogeneous families of aggregation functions?

\section{Conclusion}\label{sec:conclusions}

We have studied non-Bayesian social learning when agents combine the beliefs of their neighbors through heterogeneous aggregation rules that may change over time in a predetermined way. The framework rests on two primitives of each rule. First, the trust weights describe how much an individual listens to each of her neighbors. Second, the logarithmic curvature describes how she listens to others, from deferring to the most convinced voices ($\kappa>0$) to deferring to the most skeptical ones ($\kappa<0$). Under bounded curvature, strong connectivity, a sufficiently slow decay of attention to neighbors, and sufficient capacity among the agents in the network to distinguish the true state from the incorrect ones, every agent learns the true state almost surely (Theorem~\ref{thm:main}). Thus, heterogeneity and time variation in how people weigh each other's opinions do not prevent a society from aggregating its dispersed information. Learning can fail if individuals can be captured by extreme opinions or if they stop listening to others too early, which can also affect those who listen to them.

We showed that any admissible rule can be written exactly as a log-linear rule with stochastic, belief-dependent weights (Theorem~\ref{thm:linearrep}). Behind any norm of opinion pooling, then, lies an effective structure of influence that emerges from the current distribution of opinions and the actual aggregation functions. This representation yields a general upper bound on the learning rate that does not depend on the aggregation rules (Theorem~\ref{thm:upperbound}). In other words, communication can bring a group closer to this bound, but only better information and the ability to distinguish states.

By defining dominating and dominated paths, linked to the sign of the curvature (Proposition~\ref{prop:curv}), we obtained individual bounds on the convergence rate that do not require independent signals across agents (Theorems~\ref{thm:lower} and~\ref{thm:upper}). An agent learns at least as fast as the best-informed agent she reaches through a chain of followership, and no faster than the least-informed agent she cannot discount, up to the addition of the term proportional to $\beta_\lambda$. How fast an individual learns thus depends less on her own information than on her position in the chains of influence of her society. For the CES family, negative curvature makes every agent learn at least at the rate of the best-informed agent when $\beta_\lambda=0$, with the harmonic rule giving the best lower bound (Corollary~\ref{cor:rateneg}). Positive curvature lets the least-informed agent bound the rate of all others, so that a single uninformed member can slow the whole network to a sub-exponential rate (Corollary~\ref{cor:ratepos}). Two societies with identical networks and information may therefore remain mistaken for very different lengths of time, solely because of how they decide to weight each other's opinions. The committee example shows that a deliberation stage that changes how opinions are pooled, without adding evidence, can shorten this period.

The main limitations and open directions for future work are as follows. Almost sure convergence relies on independent signals across agents, which common sources of information often violate. Agents are assumed correctly specified, whereas norms deferring to the least informed could amplify the errors of a misinformed or manipulative individual. Neighborhoods in our model are fixed, which is not realistic in dynamic environments. Finally, the tightness of our bounds and the optimal configuration of heterogeneous rules beyond the CES family remain open, and their answers would inform the design of deliberation procedures.

\newpage

\appendix

\section{Appendix}\label{app:one}

In Appendix \ref{app:one} we prove Theorem~\ref{thm:main}.

We denote by $\mathcal{F}_t:=\sigma\bigl(\omega_{i,s}: i\in\cN,\ s\le t\bigr)$ the filtration generated by $\mathbb{P}^{\theta^*}$ at time $t$, and by $\mathbb{E}_t[\,\cdot\,]:=\mathbb{E}^{\tstar}[\,\cdot\mid\mathcal{F}_t]$ the conditional expectation with respect to that filtration. The proof is composed of two parts. In the first one, we prove the following Proposition.

\begin{proposition}[Vanishing distinguishable interim beliefs]\label{prop:vanish}
For every agent $i$ and every $\theta\in\Theta_i$,

\begin{equation*}
    \sum_{t=0}^{\infty}\zeta_{it}(\theta)^2<\infty
    \quad\Pstar\text{-a.s.},
    \qquad\text{and in particular}\qquad
    \zeta_{it}(\theta)\xrightarrow[t\to\infty]{}0
    \quad\Pstar\text{-a.s.}
\end{equation*}
\end{proposition}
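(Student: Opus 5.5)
Following the strategy of \citet{molavi2018theory} (and of \cite{jadbabaie2012non}), I plan to build a nonnegative supermartingale-type process from the beliefs in $\tstar$ and to show that its one-step expected increments control $\zeta_{it}(\theta)^2$ for $\theta\in\Theta_i$.

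The first step is the Bayes step. From \eqref{eq:bayes}, $\mu_{i,t+1}(\tstar)=\zeta_{it}(\tstar)\ell_i^{\tstar}(\omega_{i,t+1})/m_{it}(\omega_{i,t+1})$. Taking $\E_t$ and using that $\omega_{i,t+1}$ is independent of $\cF_t$ with law $\ell_i^{\tstar}$, I get
\begin{equation*}
\E_t\Bigl[\frac{\mu_{i,t+1}(\tstar)}{\zeta_{it}(\tstar)}\Bigr]=\sum_{\omega}\frac{\ell_i^{\tstar}(\omega)^2}{m_{it}(\omega)}\ge 1 .
\end{equation*}
Here $m_{it}$ is a probability on $\Omega_i$ (an $\cF_t$-measurable mixture of the $\ell_i^{\theta}$), and the $\chi^2$ identity $\sum_\omega \ell^2/m-1=\sum_\omega(\ell-m)^2/m$ holds. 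By full support and finiteness of $\Omega_i$, $\sum_\omega(\ell_i^{\tstar}-m_{it})^2/m_{it}\ge C_i\norminf{\ell_i^{\tstar}-m_{it}}^2$. Writing $\ell_i^{\tstar}-m_{it}=\sum_{\theta\ne\tstar}\zeta_{it}(\theta)(\ell_i^{\tstar}-\ell_i^{\theta})$, I need a lower bound of order $\zeta_{it}(\theta)^2$ for each $\theta\in\Theta_i$. A single norm does not give this, since cancellations between states can occur. I will therefore use the standard trick of working with the log or ratio form of $\mu_{i,t+1}(\tstar)/\mu_{i,t+1}(\theta)$ for each fixed $\theta$. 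Alternatively, I can use strict concavity of $\log$ to show $\E_t[\log\mu_{i,t+1}(\tstar)]-\log\zeta_{it}(\tstar)\ge c\,\zeta_{it}(\theta)^2$, with $c>0$ depending only on $h$-type quantities of the likelihoods on $\Theta_i$.

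The second step is the aggregation step, and this is where Hypothesis~\ref{hyp:H1} enters. I need $\log\zeta_{it}(\tstar)$ to dominate a weighted average of $\log\mu_{jt}(\tstar)$ up to a controlled error. By the footnote after Hypothesis~\ref{hyp:H1}, the dual of $\psi_{i,t}$ is concave, so the dual lies below its tangent hyperplane at $\mathbf 1$. Combined with homogeneity, this yields $\psi_{i,t}(x)\ge(\sum_j a_{ij,t+1}x_j^{-1})^{-1}$ and $S_{it}\le\sum_{\theta}\sum_j a_{ij,t+1}\mu_{jt}(\theta)=1$, using concavity of $\psi_{i,t}$ below its tangent. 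Hence $\zeta_{it}(\tstar)\ge\psi_{i,t}(\mu_t(\tstar))\ge\prod_j\mu_{jt}(\tstar)^{a_{ij,t+1}}$ after a harmonic-to-geometric comparison (the weighted harmonic mean is at most the weighted geometric mean, so I will instead use the concavity of $\psi$ in log coordinates directly; this is where the precise inequality needs care). Summing over $i$ with a positive left eigenvector $v_t$ of $A_{t+1}$ does not work because the matrices vary in time. Instead I will use the minimum $\min_i\log\mu_{it}(\tstar)$, or a process $\sum_i\log\mu_{it}(\tstar)$ weighted through the stationary-like vector. I will first try the simplest route: $\log\zeta_{it}(\tstar)\ge\min_j\log\mu_{jt}(\tstar)$, which holds since $\psi$ is increasing and homogeneous and $S_{it}\le1$.

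The third step combines the two. Set $Z_t:=\sum_i\log\mu_{it}(\tstar)\le0$, or a weighted variant. The previous bounds give $\E_t[Z_{t+1}]\ge Z_t-(\text{aggregation loss})+c\sum_i\sum_{\theta\in\Theta_i}\zeta_{it}(\theta)^2$. If the aggregation loss is nonpositive, with the Jensen inequality in the right direction, then $Z_t$ is a submartingale bounded above by $0$. It therefore converges a.s., and summing the expected increments gives $\sum_t\zeta_{it}(\theta)^2<\infty$ a.s. via the supermartingale convergence theorem of Robbins--Siegmund. Independence across agents (Hypothesis~\ref{hyp:H3}) is used so that the conditional expectation of the sum factorizes agent by agent. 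I expect the main obstacle to be the second step: obtaining a nonpositive aggregation loss for general $\psi_{i,t}$ with time-varying weights. Neither a time-invariant eigenvector nor plain log-linearity is available, so I will rely on the tangent-hyperplane inequalities for $\psi_{i,t}$ and its dual, together with $S_{it}\le1$, to show $\sum_i\log\zeta_{it}(\tstar)\ge\sum_i\sum_j a_{ij,t+1}\log\mu_{jt}(\tstar)$. I will then control the column-sum defect using the bounds of Hypothesis~\ref{hyp:H2}.
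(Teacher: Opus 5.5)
There are genuine gaps, and the decisive one is in your first step.

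\emph{The one-step divergence cannot isolate $\zeta_{it}(\theta)$.} Both quantities you want to bound below, $\sum_\omega(\ell_i^{\tstar}-m_{it})^2/m_{it}$ and $\Et[\log\mu_{i,t+1}(\tstar)]-\log\zeta_{it}(\tstar)=\sum_\omega\ell_i^{\tstar}(\omega)\log\bigl(\ell_i^{\tstar}(\omega)/m_{it}(\omega)\bigr)$, depend on $\zeta_{it}$ only through the mixture $m_{it}$. So strict concavity of $\log$ suffers exactly the cancellation you noticed for the norm. Take binary signals with $\ell_i^{\theta_1}(\mathrm{fav})=0.3$, $\ell_i^{\tstar}(\mathrm{fav})=0.5$, $\ell_i^{\theta_2}(\mathrm{fav})=0.7$. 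Every $\zeta_{it}$ with $\zeta_{it}(\theta_1)=\zeta_{it}(\theta_2)$ gives $m_{it}=\ell_i^{\tstar}$ and zero divergence, so no bound of the form $c\,\zeta_{it}(\theta)^2$ exists. The per-state log-ratio route does not rescue this, because the aggregation step for $\mu_{it}(\theta)/\mu_{it}(\tstar)$ brings in neighbors' ratios that you do not control. One-step divergences give at best the analogue of the base case $\sum_t\Estar[\zeta_{it}(\tstar)\chi^2_i(t)]<\infty$.

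The missing idea is the multi-step argument. Lemma~4 of \citet{jadbabaie2012non} gives a tuple $(\omega_1,\dots,\omega_k)$ with $\Pi^{(k)}_i(\theta)=\prod_r\ell_i^{\theta}(\omega_r)/\ell_i^{\tstar}(\omega_r)<1-\beta$ simultaneously for all $\theta\in\Theta_i$; in the example, $(\mathrm{fav},\mathrm{unfav})$ gives $0.84$ for both states. Then $\sum_\theta\zeta_{it}(\theta)\bigl(\Pi^{(k)}_i(\theta)-1\bigr)\le-\beta\sum_{\theta\in\Theta_i}\zeta_{it}(\theta)$, with no cancellation. To exploit this you must prove
\[
\sum_t\Estar\Bigl[\zeta_{it}(\tstar)\Bigl(\sum_\theta\zeta_{it}(\theta)\Pi^{(k)}_i(\theta)-1\Bigr)^2\Bigr]<\infty
\]
for every $k$ and every tuple. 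This goes by induction from $k$ to $k+1$ through one aggregation step, and uses three ingredients:
\begin{enumerate}
\item the self-weight floor $a_{ii,t+2}\ge1-\abar$;
\item summability of the linearization error $\zeta_{it}-\sum_ja_{ij,t+1}\mu_{jt}$;
\item independence across agents, when conditioning on the other agents' fresh signals to isolate $\omega_{i,t+1}$.
\end{enumerate}
Independence is not needed to split $\Et$ of a sum, which requires only linearity. Finally, you need an a.s.\ floor $\zeta_{it}(\tstar)\ge C>0$ to remove the weight $\zeta_{it}(\tstar)$.

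\emph{Your potential also fails in the aggregation step.} The inequality $\log\zeta_{it}(\tstar)\ge\sum_ja_{ij,t+1}\log\mu_{jt}(\tstar)$ holds under nonnegative curvature, since then $u\mapsto\log\psi_{i,t}(e^u)$ is convex. It is false for admissible rules with $\kappa<0$. Take an agent using the equal-weight harmonic mean of her belief and one neighbor's, with beliefs $(\epsilon,\epsilon^2,1-\epsilon-\epsilon^2)$ and $(\epsilon^2,\epsilon,1-\epsilon-\epsilon^2)$, the first coordinate being $\tstar$. Then $S_{it}\to1$ and $\zeta_{it}(\tstar)\approx2\epsilon^2<\epsilon^{3/2}$. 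The min potential fails as well, since $\Et[\min_i\cdot]\le\min_i\Et[\cdot]$ points the wrong way.

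The harmonic-mean bound you derived from the concave dual is the right inequality, but for the reciprocal $\nu_{it}=1/\mu_{it}(\tstar)$. The Bayes step gives $\Et[\nu_{i,t+1}]=1/\zeta_{it}(\tstar)$ exactly, and $1/\zeta_{it}(\tstar)\le1/\psi_{i,t}(\mu_t(\tstar))\le\sum_ja_{ij,t+1}\nu_{jt}$.

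Time variation is not an obstacle either. Lemma~A.1 of \citet{molavi2018theory} supplies probability vectors with $\sum_iv_{i,t+2}a_{ij,t+1}=v_{j,t+1}$ and $v_{i,t}\ge c_v>0$. With these, $\sum_iv_{i,t+1}\nu_{it}$ is a nonnegative supermartingale, which yields the floor $C$ and the summable remainders needed above. With an unweighted sum, the column-sum defect is of order $\lambda_t$. That is not summable, since $t\lambda_t\to\infty$, so Hypothesis~\ref{hyp:H2} cannot control it.
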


In the second part, we show the propagation of this result through the communication graph and we prove Theorem~\ref{thm:main}.

We define the following terms:

\begin{equation}\label{eq:deltaeps}
    \delta_{it}(\theta):=\sum_{j=1}^n a_{ij,t+1}\,\mu_{jt}(\theta)-\psi_{i,t}\bigl(\mu_t(\theta)\bigr),
    \qquad
    \epsilon_{it}(\theta):=\zeta_{it}(\theta)-\sum_{j=1}^n a_{ij,t+1}\,\mu_{jt}(\theta),
\end{equation}

and we set $c_\ell:=\min_{i,\omega,\theta}\ell^{\theta}_i(\omega)>0$.

Moreover, we use Lemma A.1 of \citet{molavi2018theory}: under Hypotheses~\ref{hyp:H1} and~\ref{hyp:H2}, there exists a sequence of probability vectors $(v_t)_{t\ge1}$, $v_t=(v_{1,t},\dots,v_{n,t})$ with $v_{i,t}>0$ and $\sum_{i=1}^n v_{i,t}=1$, satisfying the left-eigenvector identity

\begin{equation}\label{eq:tracking}
    \sum_{i=1}^{n} v_{i,t+2}\,a_{ij,t+1}=v_{j,t+1}
    \qquad\text{for all } j\in\cN,\ t\ge 0,
\end{equation} 

and admitting a uniform floor $v_{i,t}\ge c_v>0$ for all $i,t$. We call these weights \emph{centrality weights}.

\begin{definition}\label{def:nuphi}
We define

\begin{equation*}
    \nu_{it}:=\frac{1}{\mu_{it}(\theta^*)}
    \qquad\text{and}\qquad
    \phi_{it}(x):=\frac{1}{\psi_{i,t}\bigl(1/x_1,\dots,1/x_n\bigr)}.
\end{equation*}
\end{definition}

\begin{lemma}\label{lem:phi}
The function $\phi_{it}$ is concave and homogeneous of degree $1$, and its logarithmic curvature is less than or equal to $1$. Moreover,

\begin{equation*}
    \frac{\partial \phi_{it}}{\partial x_j}(1,\dots,1)=a_{ij,t+1}
    \qquad\text{and}\qquad
    \phi_{it}(x)\le\sum_{j=1}^{n}a_{ij,t+1}\,x_j.
\end{equation*}
\end{lemma}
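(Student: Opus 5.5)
The plan is to work in logarithmic coordinates, where all the structure of $\phi_{it}$ is inherited from $\psi_{i,t}$ through a change of variables. Write $\Psi(u):=\log\psi_{i,t}(e^{u})$ for $u\in\R^n$. Then $\log\phi_{it}(e^{u})=-\Psi(-u)$. From this one gets the following.

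The log-gradient of $\phi_{it}$ at $e^u$ is $\nabla\Psi(-u)$. So the elasticities of $\phi_{it}$ at $x$ equal those of $\psi_{i,t}$ at $1/x$. In particular they are positive for neighbors, zero otherwise, and sum to one by Euler's formula applied to the homogeneity of $\psi_{i,t}$.

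The log-Hessian of $\phi_{it}$ at $e^u$ is $-\nabla^2\Psi(-u)$. Hence the logarithmic curvature of $\phi_{it}$ at $x$ is $-\kappa_{ijk,t}(1/x)$. Hypothesis~\ref{hyp:H1}(iv) then gives $|\kappa^{\phi}|\le1$, and in particular the claimed bound $\le1$.

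Homogeneity of degree one is immediate: $\phi_{it}(\lambda x)=1/\psi_{i,t}(\lambda^{-1}/x)=\lambda\phi_{it}(x)$. The derivative at $\mathbf 1$ follows from the chain rule:
\[
\partial_{x_j}\phi_{it}(x)=\psi_{i,t}(1/x)^{-2}\,\partial_j\psi_{i,t}(1/x)\,x_j^{-2}.
\]
At $x=\mathbf 1$ this equals $a_{ij,t+1}$, using $\psi_{i,t}(\mathbf 1)=1$.

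The main step is concavity, and I would prove it as a general fact. Let $f$ be $C^2$, positive, and homogeneous of degree one, with elasticities $e_j\ge0$ summing to $1$ and log-curvature bounded by $1$, i.e.\ $H_{jk}\ge -e_je_k$ for $j\ne k$. Then $f$ is concave.

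To see this, compute the ordinary Hessian in terms of log-derivatives:
\[
x_jx_k\,\partial^2_{jk}f = f\bigl(H_{jk}+e_je_k-\delta_{jk}e_j\bigr).
\]
For $j\ne k$ the bracket equals $H_{jk}+e_je_k\ge0$.

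Euler's formula for the degree-zero functions $e_j$ gives $\sum_k H_{jk}=0$. Therefore $H_{jj}=-\sum_{k\ne j}H_{jk}$, and the diagonal bracket is
\[
-\sum_{k\ne j}(H_{jk}+e_je_k),
\]
using $e_j^2-e_j=-e_j\sum_{k\ne j}e_k$.

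So the matrix $M_{jk}:=H_{jk}+e_je_k-\delta_{jk}e_j$ has nonnegative off-diagonal entries and zero row sums, i.e.\ it is minus a graph Laplacian. It is therefore negative semidefinite. Conjugating by $\mathrm{diag}(1/x)$ preserves this, so $\nabla^2 f\preceq0$ and $f$ is concave on the convex cone $(0,\infty)^n$.

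Apply this general fact with $f=\phi_{it}$, whose curvature is $-\kappa(1/x)\le1$. The same argument applied to $\psi_{i,t}$ gives its concavity too, as claimed in the footnote.

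The tangent-plane inequality comes last. A concave function lies below its tangent plane at $\mathbf 1$:
\[
\phi_{it}(x)\le\phi_{it}(\mathbf 1)+\sum_j a_{ij,t+1}(x_j-1).
\]
Since $\phi_{it}(\mathbf 1)=1=\sum_j a_{ij,t+1}$, this reduces to $\phi_{it}(x)\le\sum_j a_{ij,t+1}x_j$.

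The only delicate point is the Laplacian identity. It needs $\sum_kH_{jk}=0$, which comes from homogeneity of $e_j$, and it needs the correct sign convention for $\kappa$ when passing to the dual. Non-neighbors cause no trouble: there $e_j=0$ and $H_{jk}=0$, so the corresponding rows vanish.
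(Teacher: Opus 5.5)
Your proof is correct, but it takes a different route from the paper's. The paper proves very little of this lemma itself. It invokes Lemmas A.9 and A.10 of \citet{molavi2018theory} to get, all at once, that $\phi_{it}$ is concave, homogeneous of degree one, has logarithmic curvature at most $1$, and satisfies $\phi_{it}(x)\le\sum_j\partial_{x_j}\phi_{it}(\mathbf{1})\,x_j$. The only computation it carries out is the chain-rule evaluation $\partial_{x_j}\phi_{it}(\mathbf{1})=a_{ij,t+1}$, which matches yours.

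You make the lemma self-contained instead, in two steps:
\begin{itemize}
\item The identity $\log\phi_{it}(e^{u})=-\Psi(-u)$ transfers elasticities and curvatures through the duality, $e^{\phi}_{ij,t}(x)=e_{ij,t}(1/x)$ and $\kappa^{\phi}_{ijk,t}(x)=-\kappa_{ijk,t}(1/x)$.
\item The identity $x_jx_k\,\partial^2_{jk}f=f\,(H_{jk}+e_je_k-\delta_{jk}e_j)$, together with $\sum_kH_{jk}=0$, writes the Hessian as a diagonal congruence of minus a weighted Laplacian. Hence $w^{\top}\nabla^2 f\,w=-\tfrac12 f\sum_{j\neq k}M_{jk}\bigl(w_j/x_j-w_k/x_k\bigr)^2\le 0$.
\end{itemize}

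This buys several things the citation hides. It shows that concavity needs only the one-sided bound $H_{jk}\ge -e_je_k$. It gives the dual the stronger two-sided bound $|\kappa^{\phi}|\le 1$, not merely $\le 1$. It actually proves the curvature-reversal relations $e^{\phi}(x)=e(1/x)$ and $H^{\phi}(x)=-H(1/x)$. The paper later attributes these to this lemma in Remark~\ref{rem:dual} and in part (ii) of Proposition~\ref{prop:curv}, although the lemma as stated does not contain them. Finally, the same computation yields the concavity of $\psi_{i,t}$ asserted in the footnote.

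One point deserves an explicit sentence in your write-up. The Laplacian argument, in particular the step from zero row sums to the sum-of-squares form, uses the symmetry of $M$. This holds because $H$ is the Hessian of the $C^2$ function $u\mapsto\log f(e^{u})$.
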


\begin{proof}
Using Lemmas~A.9 and~A.10 of \citet{molavi2018theory}, we obtain that $\phi_{it}$ is concave and homogeneous of degree $1$, with a logarithmic curvature that is less than or equal to $1$, and that

\begin{equation*}
    \phi_{it}(x)\le\sum_{j=1}^{n}\phi_{it}^{(j)}(1,\dots,1)\,x_j, \qquad \phi_{it}^{(j)}=\frac{\partial \phi_{it}}{\partial x_j}.
\end{equation*}

For all $x\in (0,\infty)^n$, one has $\phi_{it}(x)=1/\psi_{i,t}(y)$, with $y_j=1/x_j$. The chain rule gives

\begin{equation*}
    \frac{\partial \phi_{it}}{\partial x_k}(x) =-\frac{1}{\psi_{i,t}(y)^2}\,\frac{\partial \psi_{i,t}(y)}{\partial x_k}.
\end{equation*}

Since

\begin{equation*}
    \frac{\partial y_j}{\partial x_k}=
    \begin{cases}-\dfrac{1}{x_k^2} & \text{if } j=k,\\[1ex] 0 & \text{if } j\neq k,\end{cases}
    \qquad\text{then}\qquad
    \frac{\partial \psi_{i,t}(y)}{\partial x_k}=-\frac{1}{x_k^2}\,\frac{\partial \psi_{i,t}}{\partial y_k}(y).
\end{equation*}

Then

\begin{equation*}
    \frac{\partial \phi_{it}}{\partial x_k}(x) =\frac{1}{x_k^2\,\psi_{i,t}(1/x_1,\dots,1/x_n)^2}\,\frac{\partial \psi_{i,t}}{\partial y_k}(y).
\end{equation*}

Evaluated at $(1,\dots,1)$, this gives

\begin{equation*}
    \frac{\partial \phi_{it}}{\partial x_k}(1,\dots,1)=\frac{\partial \psi_{i,t}}{\partial y_k}(1,\dots,1)=a_{ik,t+1},
\end{equation*}
and therefore

\begin{equation*}
    \phi_{it}(x)\le\sum_{j}a_{ij,t+1}x_j. \qedhere
\end{equation*}
\end{proof}

\begin{lemma}\label{lem:delta}
For every $\hat\theta\in\Theta$ we have $\delta_{it}(\hat\theta)\ge0$. Moreover,

\begin{equation*}
    \sum_{\tilde\theta\in\Theta}\delta_{it}(\tilde\theta)=1-S_{it}, \qquad\text{so that}\qquad 0\le\delta_{it}(\tilde\theta)\le1-S_{it},
\end{equation*}

and $\bigl|\epsilon_{it}(\tilde\theta)\bigr|\le\delta_{it}(\tilde\theta)+(1-S_{it})$.
\end{lemma}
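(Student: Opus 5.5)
The whole lemma reduces to one inequality: $\psi_{i,t}$ lies below its tangent hyperplane at $\mathbf{1}$. Everything else follows by summing over states and using the normalization of beliefs.

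\textbf{Step 1: $\delta_{it}\ge0$.} By Hypothesis~\ref{hyp:H1}(iv) and the footnote to the remark after Hypothesis~\ref{hyp:H1}, $\psi_{i,t}$ is concave on $(0,\infty)^n$; this can also be taken from Lemmas A.9--A.10 of \citet{molavi2018theory}, as in the proof of Lemma~\ref{lem:phi}. By homogeneity of degree one, Euler's formula gives $\psi_{i,t}(\mathbf{1})=\sum_j \partial_{x_j}\psi_{i,t}(\mathbf{1})=\sum_j a_{ij,t+1}$. Concavity at the point $\mathbf{1}$ then yields
\[
\psi_{i,t}(x)\le \psi_{i,t}(\mathbf{1})+\sum_j a_{ij,t+1}(x_j-1)=\sum_j a_{ij,t+1}x_j .
\]
Evaluating at $x=\mu_t(\hat\theta)$ gives $\delta_{it}(\hat\theta)\ge0$.

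\textbf{Step 2: the sum identity.} Sum the definition of $\delta_{it}$ over $\tilde\theta\in\Theta$. Each $\mu_{jt}$ is a probability vector, so $\sum_{\tilde\theta}\mu_{jt}(\tilde\theta)=1$. Normalization gives $\sum_j a_{ij,t+1}=\psi_{i,t}(\mathbf{1})=1$. Hence
\[
\sum_{\tilde\theta}\delta_{it}(\tilde\theta)=\sum_j a_{ij,t+1}-S_{it}=1-S_{it}.
\]
Since every term is non-negative by Step 1, each single term satisfies $0\le\delta_{it}(\tilde\theta)\le 1-S_{it}$. In particular $S_{it}\le1$.

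\textbf{Step 3: the bound on $\epsilon_{it}$.} Write $\psi=\psi_{i,t}(\mu_t(\tilde\theta))$ and $A=\sum_j a_{ij,t+1}\mu_{jt}(\tilde\theta)$, so that $\delta_{it}(\tilde\theta)=A-\psi$. Then
\[
\epsilon_{it}(\tilde\theta)=\frac{\psi}{S_{it}}-A=(\psi-A)+\psi\Bigl(\frac1{S_{it}}-1\Bigr)=-\delta_{it}(\tilde\theta)+\psi\,\frac{1-S_{it}}{S_{it}}.
\]
The triangle inequality gives $|\epsilon_{it}(\tilde\theta)|\le\delta_{it}(\tilde\theta)+\zeta_{it}(\tilde\theta)(1-S_{it})$. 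Finally, $\zeta_{it}(\tilde\theta)\le1$ because $\zeta_{it}$ is a probability vector, which gives the claimed bound.

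The only delicate point is Step 1, namely justifying concavity of $\psi_{i,t}$ from the bounded-curvature condition. I would cite \citet{molavi2018theory}'s lemma for this, exactly as Lemma~\ref{lem:phi} does for $\phi_{it}$, rather than reprove it.
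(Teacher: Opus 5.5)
Your proof is correct and follows the paper's route. The paper likewise gets $\delta_{it}\ge0$ from the tangent-hyperplane inequality $\psi_{i,t}(x)\le\sum_j a_{ij,t+1}x_j$ (citing Lemma A.9 of \citet{molavi2018theory}, which is your concavity-plus-Euler argument) and then sums over states using $\sum_j a_{ij,t+1}=1$; for the last bound, where the paper only says it follows from the definition of $\epsilon_{it}$, your identity $\epsilon_{it}(\tilde\theta)=\zeta_{it}(\tilde\theta)(1-S_{it})-\delta_{it}(\tilde\theta)$ is exactly the one the paper writes out later in the proof of Lemma~\ref{lem:remainders}.
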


\begin{proof}
We simply use Lemma~A.9 of \citet{molavi2018theory} applied to $\psi_{i,t}$:

\begin{equation*}
    \psi_{i,t}\bigl(\mu_t(\theta)\bigr)\le\sum_{j}a_{ij,t+1}\mu_{jt}(\theta), \qquad\text{so}\qquad \delta_{it}(\theta)\ge0.
\end{equation*}

Summing the definition \eqref{eq:deltaeps} of $\delta_{it}$ over $\tilde\theta\in\Theta$ and using $\sum_{\tilde\theta}\mu_{jt}(\tilde\theta)=1$ together with $\sum_ja_{ij,t+1}=1$ gives $\sum_{\tilde\theta}\delta_{it}(\tilde\theta)=1-S_{it}$; since each term is non-negative, each is bounded by the sum. The last inequality follows from the definition of $\epsilon_{it}$ in \eqref{eq:deltaeps}.
\end{proof}

\begin{lemma}\label{lem:onestep}
For every $i\in\mathcal N$, one has $S_{i,t}\le1$  and

\begin{equation*}
    \mathbb E_t\bigl[\nu_{i,t+1}\bigr]
    =\frac{1}{\zeta_{it}(\theta^*)}
    \le\sum_{j=1}^{n}a_{ij,t+1}\,\nu_{jt}.
\end{equation*}
\end{lemma}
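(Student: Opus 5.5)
The first claim, $S_{i,t}\le1$, follows from Lemma~\ref{lem:delta}. That lemma gives $\delta_{it}(\tilde\theta)\ge0$ for every $\tilde\theta$ and $\sum_{\tilde\theta}\delta_{it}(\tilde\theta)=1-S_{it}$. Hence $1-S_{it}\ge0$. Equivalently, summing the tangent-plane inequality $\psi_{i,t}(\mu_t(\tilde\theta))\le\sum_j a_{ij,t+1}\mu_{jt}(\tilde\theta)$ over $\tilde\theta$ gives $S_{it}\le\sum_j a_{ij,t+1}=1$.

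For the equality, I will compute the conditional expectation directly from the Bayes step \eqref{eq:bayes}. By definition,
\begin{equation*}
\nu_{i,t+1}=\frac{m_{it}(\omega_{i,t+1})}{\zeta_{it}(\theta^*)\,\ell^{\theta^*}_i(\omega_{i,t+1})}.
\end{equation*}
The interim belief $\zeta_{it}$ is $\mathcal F_t$-measurable. Under $\Pstar$, the signal $\omega_{i,t+1}$ is independent of $\mathcal F_t$ with law $\ell^{\theta^*}_i$, by the i.i.d.\ part of Hypothesis~\ref{hyp:H3}. Therefore
\begin{equation*}
\mathbb E_t[\nu_{i,t+1}]=\frac{1}{\zeta_{it}(\theta^*)}\sum_{\omega}\ell^{\theta^*}_i(\omega)\frac{m_{it}(\omega)}{\ell^{\theta^*}_i(\omega)}=\frac{1}{\zeta_{it}(\theta^*)}\sum_{\tilde\theta}\zeta_{it}(\tilde\theta)\sum_\omega\ell^{\tilde\theta}_i(\omega)=\frac{1}{\zeta_{it}(\theta^*)}.
\end{equation*}
Full support guarantees that no division by zero occurs.

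For the inequality, I write
\begin{equation*}
\frac{1}{\zeta_{it}(\theta^*)}=\frac{S_{it}}{\psi_{i,t}(\mu_t(\theta^*))}\le\frac{1}{\psi_{i,t}(\mu_t(\theta^*))},
\end{equation*}
using $S_{it}\le1$. Then I use the dual function of Definition~\ref{def:nuphi}. Since $\nu_{jt}=1/\mu_{jt}(\theta^*)$, we have $1/\psi_{i,t}(\mu_t(\theta^*))=\phi_{it}(\nu_t)$, where $\nu_t=(\nu_{1t},\dots,\nu_{nt})$. Lemma~\ref{lem:phi} gives $\phi_{it}(\nu_t)\le\sum_j a_{ij,t+1}\nu_{jt}$, which completes the chain.

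The only delicate point is the second step: checking that conditioning legitimately treats $\zeta_{it}$ as fixed and integrates only over $\omega_{i,t+1}$. Here the measurability of $\mu_t$ with respect to $\mathcal F_t$ (beliefs are deterministic functions of past signals and deterministic priors, Hypothesis~\ref{hyp:H4}) and the independence over time do all the work. Everything else is a direct application of Lemmas~\ref{lem:phi} and~\ref{lem:delta}.
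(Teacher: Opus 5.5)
Your proposal is correct and follows the paper's proof essentially step for step. You compute $\mathbb E_t[\nu_{i,t+1}]$ by integrating the Bayes ratio against $\ell_i^{\theta^*}$, rewrite $1/\zeta_{it}(\theta^*)=S_{it}/\psi_{i,t}(\mu_t(\theta^*))$, drop the factor $S_{it}\le1$ from Lemma~\ref{lem:delta}, and conclude with $\phi_{it}(\nu_t)\le\sum_j a_{ij,t+1}\nu_{jt}$ from Lemma~\ref{lem:phi}. The only difference is cosmetic: you evaluate the double sum to $1$ explicitly, whereas the paper writes the result directly as $S_{it}/\psi_{i,t}(\mu_t(\theta^*))$.
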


\begin{proof}
\begin{equation*}
    \mathbb E_t(\nu_{i,t+1})
    =\mathbb E_t\!\left[\frac{m_{it}(\omega_{i,t+1})}{\zeta_{it}(\theta^*)\,\ell_i^{\theta^*}(\omega_{i,t+1})}\right]
    =\sum_{\omega\in\Omega_i}\frac{m_{it}(\omega)}{\zeta_{it}(\theta^*)}
    =\frac{1}{\zeta_{it}(\theta^*)}\sum_{\theta\in\Theta}\sum_{\omega\in\Omega_i}\zeta_{it}(\theta)\ell_i^{\theta}(\omega)
    =\frac{S_{it}}{\psi_{i,t}(\mu_t(\theta^*))}.
\end{equation*}

Moreover, by Lemma~\ref{lem:delta}, it is immediate that $S_{i,t}\le1$

Therefore, by Definition~\ref{def:nuphi} and then Lemma~\ref{lem:phi},

\begin{equation*}
    \mathbb E_t\bigl[\nu_{i,t+1}\bigr]\le\phi_{it}(\nu_t)\le\sum_{j=1}^{n}a_{ij,t+1}\nu_{jt}. \qedhere
\end{equation*}
\end{proof}

We then define

\begin{equation*}
    Y_t:=\sum_{i=1}^{n}v_{i,t+1}\,\nu_{it}\ \ge\ 0.
\end{equation*}

\begin{proposition}\label{prop:smg}
The process $(Y_t)_{t\ge0}$ is a $\mathbb{P}^{\theta^*}$-supermartingale, positive, and therefore converges $\mathbb{P}^{\theta^*}$-almost surely to a random variable which we denote by $Y_\infty$. Moreover, there exists a random variable $C>0$ such that, for all $i$ and $t$,

\begin{equation*}
    \mu_{it}(\theta^*)\ge C>0\qquad\mathbb{P}^{\theta^*}\text{-almost surely.}
\end{equation*}
\end{proposition}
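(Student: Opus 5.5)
The plan is to combine the one-step inequality of Lemma~\ref{lem:onestep} with the left-eigenvector identity \eqref{eq:tracking} for the centrality weights. First, $Y_t$ is $\cF_t$-measurable and integrable. The priors are deterministic (Hypothesis~\ref{hyp:H4}), so $Y_0$ is a finite constant. Moreover $\nu_{i,t+1}$ is obtained from $\nu_t$ through one Bayes step, and by full support it is bounded by $\nu_{i,t}$-type quantities times $1/c_\ell$; hence each $Y_t$ is bounded by a deterministic constant. Integrability is therefore automatic. Positivity is immediate, since $v_{i,t+1}>0$ and $\nu_{it}\ge 1$.

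The supermartingale step is a direct computation. The weights $v_{i,t+2}$ are deterministic, because the aggregation functions, and hence the $a_{ij,t+1}$, are fixed in advance. Lemma~\ref{lem:onestep} then gives
\begin{equation*}
\E_t[Y_{t+1}]=\sum_{i}v_{i,t+2}\,\E_t[\nu_{i,t+1}]\le\sum_i v_{i,t+2}\sum_j a_{ij,t+1}\nu_{jt}=\sum_j\Bigl(\sum_i v_{i,t+2}a_{ij,t+1}\Bigr)\nu_{jt}=\sum_j v_{j,t+1}\nu_{jt}=Y_t,
\end{equation*}
where the second equality is \eqref{eq:tracking}. This is exactly why the index shift $v_{i,t+1}$ was built into $Y_t$. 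Doob's convergence theorem for nonnegative supermartingales then gives $Y_t\to Y_\infty$ $\Pstar$-a.s., with $Y_\infty<\infty$ a.s.; indeed $\Estar[Y_\infty]\le Y_0$ by Fatou.

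For the lower bound, I use the uniform floor $v_{i,t}\ge c_v$. It gives $c_v\,\nu_{it}\le Y_t$ for every $i$ and $t$, so $\mu_{it}(\tstar)=1/\nu_{it}\ge c_v/Y_t$. A convergent real sequence is bounded, so on the full-measure event where $Y_t\to Y_\infty<\infty$ we have $\sup_t Y_t<\infty$. Setting $C:=c_v/\sup_{t\ge0}Y_t$, we get $C>0$ a.s. and $\mu_{it}(\tstar)\ge C$ for all $i,t$.

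The main point to check is the supermartingale inequality, specifically that the conditional expectation commutes with the coefficients $v_{i,t+2}$. This holds only because the weights are non-stochastic, which is where the non-stochasticity assumption of Definition~\ref{def:agg} enters. The other ingredients — Lemma~\ref{lem:onestep}, the existence and floor of $v_t$ from Lemma~A.1 of \citet{molavi2018theory}, and integrability — are either already available or routine.
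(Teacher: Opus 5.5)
Your proposal is correct and follows the paper's proof step for step. Lemma~\ref{lem:onestep} together with the left-eigenvector identity \eqref{eq:tracking} gives the supermartingale inequality, Doob's theorem gives almost sure convergence, and the floor $v_{i,t}\ge c_v$ yields $\mu_{it}(\tstar)\ge c_v/\sup_s Y_s=:C$. You also make explicit two points the paper uses only implicitly: that the centrality weights $v_{i,t+2}$ are deterministic, so they can be pulled out of $\E_t$, and that each $Y_t$ is bounded by a deterministic constant (e.g.\ via $\nu_{i,t+1}\le c_\ell^{-1}\max_j\nu_{jt}$).
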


\begin{proof}
By Lemma~\ref{lem:onestep} and then by the identity \eqref{eq:tracking},

\begin{equation*}
    \mathbb E_t\bigl[Y_{t+1}\bigr] \le\sum_i v_{i,t+2}\Bigl(\sum_{j=1}^{n}a_{ij,t+1}\nu_{jt}\Bigr) =\sum_{j=1}^{n}v_{j,t+1}\nu_{jt}=Y_t.
\end{equation*}

Moreover, $Y_t$ is $\mathcal{F}_t$-measurable, and $\mathbb E^{\theta^*}[Y_t]\le\mathbb E^{\theta^*}[Y_0]=Y_0<\infty$ by Hypothesis~\ref{hyp:H4}. Then $(Y_t)_{t\ge0}$ is a non-negative supermartingale, hence converges to $Y_\infty$ $\mathbb{P}^{\theta^*}$-almost surely. Moreover, using the uniform floor $v_{i,t}\ge c_v$,

\begin{equation*}
    \nu_{it}\le\frac{Y_t}{c_v}\le\frac{\sup_s Y_s}{c_v}<\infty\qquad\mathbb{P}^{\theta^*}\text{-a.s.},
\end{equation*}

so that

\begin{equation*}
    \mu_{it}(\theta^*)\ge\frac{c_v}{\sup_s Y_s}=:C>0\qquad\mathbb{P}^{\theta^*}\text{-a.s.} \qedhere
\end{equation*}
\end{proof}

The random variable $C$ is defined on the set of sequences of signals, like $\sup_sY_s$.

\begin{lemma}\label{lem:drift}
Define

\begin{equation*}
    \Delta_{1,it}:=\sum_{j}a_{ij,t+1}\nu_{jt}-\frac{1}{\psi_{i,t}(\mu_t(\theta^*))}\ \ge\ 0, \qquad \Delta_{2,it}:=\frac{1-S_{it}}{\psi_{i,t}(\mu_t(\theta^*))}\ \ge\ 0.
\end{equation*}

Then

\begin{equation}\label{eq:Ddecomp}
    D_t:=Y_t-\mathbb E_t[Y_{t+1}]=\sum_{i=1}^{n}v_{i,t+2}\bigl(\Delta_{1,it}+\Delta_{2,it}\bigr).
\end{equation}
\end{lemma}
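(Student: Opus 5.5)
The plan is to compute $\mathbb E_t[Y_{t+1}]$ exactly and subtract it from $Y_t$. Nothing beyond the identities already established is needed. Since $v_{i,t+2}$ is deterministic, linearity of conditional expectation gives
\begin{equation*}
\mathbb E_t[Y_{t+1}]=\sum_{i=1}^n v_{i,t+2}\,\mathbb E_t[\nu_{i,t+1}].
\end{equation*}
The computation in the proof of Lemma~\ref{lem:onestep} gives the exact value
\begin{equation*}
\mathbb E_t[\nu_{i,t+1}]=\frac{S_{it}}{\psi_{i,t}(\mu_t(\theta^*))}.
\end{equation*}
This follows by summing $m_{it}(\omega)$ over $\omega$ and then using $\sum_\omega\ell_i^\theta(\omega)=1$ and $\sum_\theta\zeta_{it}(\theta)=1$.

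Next I decompose this value. I write
\begin{equation*}
\frac{S_{it}}{\psi_{i,t}(\mu_t(\theta^*))}=\frac{1}{\psi_{i,t}(\mu_t(\theta^*))}-\frac{1-S_{it}}{\psi_{i,t}(\mu_t(\theta^*))}=\sum_j a_{ij,t+1}\nu_{jt}-\Delta_{1,it}-\Delta_{2,it},
\end{equation*}
where the last equality is simply the definitions of $\Delta_{1,it}$ and $\Delta_{2,it}$. Multiplying by $v_{i,t+2}$ and summing over $i$, the tracking identity \eqref{eq:tracking} turns the first term into $\sum_j v_{j,t+1}\nu_{jt}=Y_t$. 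Hence
\begin{equation*}
\mathbb E_t[Y_{t+1}]=Y_t-\sum_i v_{i,t+2}\bigl(\Delta_{1,it}+\Delta_{2,it}\bigr),
\end{equation*}
which is \eqref{eq:Ddecomp}.

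It remains to check the two nonnegativity claims. For $\Delta_{2,it}$, nonnegativity follows from $S_{it}\le1$ (Lemma~\ref{lem:onestep}) together with $\psi_{i,t}>0$. For $\Delta_{1,it}$, I use Definition~\ref{def:nuphi}, which gives $1/\psi_{i,t}(\mu_t(\theta^*))=\phi_{it}(\nu_t)$ because $\mu_{jt}(\theta^*)=1/\nu_{jt}$. Lemma~\ref{lem:phi} then gives $\phi_{it}(\nu_t)\le\sum_j a_{ij,t+1}\nu_{jt}$, so $\Delta_{1,it}\ge0$.

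The only step requiring care is that the earlier lemma was stated as an inequality, while here I need its exact equality $\mathbb E_t[\nu_{i,t+1}]=S_{it}/\psi_{i,t}$ before any bounding. I also need the identification $1/\psi_{i,t}(\mu_t(\theta^*))=\phi_{it}(\nu_t)$, which is immediate from Definition~\ref{def:nuphi}. Everything else is algebra.
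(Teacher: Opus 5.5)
Your proof is correct and follows the paper's argument. Both compute the exact value $\mathbb E_t[\nu_{i,t+1}]=S_{it}/\psi_{i,t}(\mu_t(\theta^*))$, use the tracking identity \eqref{eq:tracking} to rewrite $Y_t$ as $\sum_i v_{i,t+2}\sum_j a_{ij,t+1}\nu_{jt}$, and regroup; nonnegativity then comes from $S_{it}\le 1$ and the bound $\phi_{it}(\nu_t)\le\sum_j a_{ij,t+1}\nu_{jt}$ of Lemma~\ref{lem:phi}.
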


\begin{proof}
Euler's inequality, as given by Lemma~\ref{lem:phi}, implies $\Delta_{1,it}\ge0$. By Lemma~\ref{lem:onestep} we have $S_{it}\le1$, and $\psi_{i,t}\ge0$, so $\Delta_{2,it}\ge0$. One has

\begin{equation*}
    Y_t-\mathbb E_t[Y_{t+1}] =\sum_{j}v_{j,t+1}\nu_{jt}-\sum_i v_{i,t+2}\,\mathbb E_t[\nu_{i,t+1}]
\end{equation*}

\begin{equation*}
    =\sum_j\Bigl(\sum_{i}v_{i,t+2}a_{ij,t+1}\Bigr)\nu_{jt}-\sum_i v_{i,t+2}\frac{S_{it}}{\psi_{i,t}(\mu_t(\theta^*))}
\end{equation*}

\begin{equation*}
    =\sum_i v_{i,t+2}\Bigl(\sum_{j}a_{ij,t+1}\nu_{jt}-\frac{S_{it}}{\psi_{i,t}(\mu_t(\theta^*))}\Bigr) =\sum_{i=1}^{n}v_{i,t+2}\bigl(\Delta_{1,it}+\Delta_{2,it}\bigr),
\end{equation*}

where the second equality uses the identity \eqref{eq:tracking} together with Lemma~\ref{lem:onestep}.
\end{proof}

\begin{lemma}\label{lem:sumD}
\begin{equation*}
    \sum_{t=0}^{\infty}\mathbb E^{\theta^*}[D_t]\le Y_0<+\infty.
\end{equation*}
\end{lemma}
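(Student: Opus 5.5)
This follows from the supermartingale structure of Proposition~\ref{prop:smg} together with the decomposition of Lemma~\ref{lem:drift}, by telescoping. By definition, $D_t=Y_t-\mathbb E_t[Y_{t+1}]$, and $D_t\ge0$ because $\Delta_{1,it},\Delta_{2,it}\ge0$ and $v_{i,t+2}>0$. We first check integrability. Since $Y_t\ge0$ and $\mathbb E^{\theta^*}[Y_t]\le Y_0<\infty$ (Proposition~\ref{prop:smg}), and since $\mathbb E_t[Y_{t+1}]\ge0$, we get $0\le D_t\le Y_t$. Hence every $D_t$ is integrable, and all the expectations below are finite.

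Taking expectations and using the tower property, $\mathbb E^{\theta^*}\bigl[\mathbb E_t[Y_{t+1}]\bigr]=\mathbb E^{\theta^*}[Y_{t+1}]$, so
\begin{equation*}
    \mathbb E^{\theta^*}[D_t]=\mathbb E^{\theta^*}[Y_t]-\mathbb E^{\theta^*}[Y_{t+1}].
\end{equation*}
Summing from $t=0$ to $T$ telescopes to
\begin{equation*}
    \sum_{t=0}^{T}\mathbb E^{\theta^*}[D_t]=\mathbb E^{\theta^*}[Y_0]-\mathbb E^{\theta^*}[Y_{T+1}]\le Y_0,
\end{equation*}
where we used $Y_{T+1}\ge0$. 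By Hypothesis~\ref{hyp:H4} the priors are deterministic, so $Y_0=\sum_i v_{i,1}/\mu_{i0}(\theta^*)$ is a finite constant.

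Finally, the partial sums are nondecreasing in $T$, because each $\mathbb E^{\theta^*}[D_t]\ge0$, and they are bounded by $Y_0$. Letting $T\to\infty$ gives $\sum_{t=0}^\infty\mathbb E^{\theta^*}[D_t]\le Y_0<\infty$. By monotone convergence this is the same as $\mathbb E^{\theta^*}\bigl[\sum_t D_t\bigr]\le Y_0$, so $\sum_t D_t<\infty$ almost surely; this is the form that will presumably be used afterwards for Proposition~\ref{prop:vanish}.

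No step here is a real obstacle. The only point needing care is integrability of $Y_t$ and $D_t$, which the supermartingale bound and the non-negativity of $Y$ already provide.
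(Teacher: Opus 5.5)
Your proposal is correct and follows the paper's proof: take expectations of $D_t=Y_t-\mathbb{E}_t[Y_{t+1}]$ via the tower property, telescope to $\mathbb{E}^{\theta^*}[Y_0]-\mathbb{E}^{\theta^*}[Y_{N}]\le Y_0$, and let $N\to\infty$ using $D_t\ge 0$. Your explicit integrability check ($0\le D_t\le Y_t$) and the almost-sure consequence $\sum_t D_t<\infty$ are harmless additions that the paper leaves implicit.
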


\begin{proof}
By Lemma~\ref{lem:drift}, $D_t=Y_t-\mathbb E_t[Y_{t+1}]$, hence $\mathbb E^{\theta^*}[D_t]=\mathbb E^{\theta^*}[Y_t]-\mathbb E^{\theta^*}[Y_{t+1}]$. For $N\in\mathbb N$,

\begin{equation*}
    \sum_{t=0}^{N-1}\mathbb E^{\theta^*}[D_t]=\mathbb E^{\theta^*}[Y_0]-\mathbb E^{\theta^*}[Y_N]\le Y_0.
\end{equation*}

Since $D_t\ge0$ by Lemma~\ref{lem:drift}, we obtain the result by letting $N$ tend to infinity.
\end{proof}

\begin{lemma}[Summable expected remainders]\label{lem:remainders}
For every agent $i$ and every state $\theta\in\Theta$,

\begin{align}
    \sum_{t=0}^{\infty}\Estar[1-S_{it}]&\;\le\;\frac{Y_0}{c_v}, \label{eq:sumS}\\
    \sum_{t=0}^{\infty}\Estar[\delta_{it}(\theta)]&\;\le\;\frac{Y_0}{c_v}, \label{eq:sumdelta}\\
    \sum_{t=0}^{\infty}\Estar\bigl[|\epsilon_{it}(\theta)|\bigr]&\;\le\;\frac{Y_0}{c_v}. \label{eq:sumeps}
\end{align}
\end{lemma}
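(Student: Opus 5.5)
We derive all three bounds from Lemma~\ref{lem:sumD}, which gives $\sum_t \Estar[D_t]\le Y_0$, together with the decomposition \eqref{eq:Ddecomp} of $D_t$ in Lemma~\ref{lem:drift}. The idea is to bound each remainder from above by a single summand of $D_t$ and then sum. Three facts are used. First, each $\Delta_{1,it}$ and $\Delta_{2,it}$ is non-negative. Second, the centrality weights satisfy the uniform floor $v_{i,t+2}\ge c_v$. Third, $\psi_{i,t}(\mu_t(\tstar))\le 1$. The third fact holds because $\psi_{i,t}(\mu_t(\tstar))\le\sum_j a_{ij,t+1}\mu_{jt}(\tstar)\le 1$ by Lemma~A.9 of \citet{molavi2018theory}, which is the inequality underlying $\delta_{it}\ge0$ in Lemma~\ref{lem:delta}.

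\emph{Step 1 (bound \eqref{eq:sumS}).} Since $\psi_{i,t}(\mu_t(\tstar))\in(0,1]$,
\begin{equation*}
1-S_{it}\le \frac{1-S_{it}}{\psi_{i,t}(\mu_t(\tstar))}=\Delta_{2,it}\le \frac{1}{c_v}\,v_{i,t+2}\Delta_{2,it}\le \frac{D_t}{c_v},
\end{equation*}
where the last inequality drops the other non-negative summands of \eqref{eq:Ddecomp}. Taking expectations and summing over $t$ with Lemma~\ref{lem:sumD} gives \eqref{eq:sumS}.

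\emph{Step 2 (bound \eqref{eq:sumdelta}).} Lemma~\ref{lem:delta} gives $0\le\delta_{it}(\theta)\le 1-S_{it}$, so \eqref{eq:sumdelta} follows from Step 1. The one place needing care is that this must hold for every $\theta$, including $\tstar$. It does, because Lemma~\ref{lem:delta} is stated for every state.

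\emph{Step 3 (bound \eqref{eq:sumeps}) and the main obstacle.} Lemma~\ref{lem:delta} only gives $|\epsilon_{it}|\le\delta_{it}+(1-S_{it})\le 2(1-S_{it})$, which loses a factor $2$. The main obstacle is to remove this factor by computing $\epsilon_{it}$ directly. From the definitions,
\begin{equation*}
\epsilon_{it}(\theta)=\frac{\psi_{i,t}(\mu_t(\theta))}{S_{it}}-\psi_{i,t}(\mu_t(\theta))-\delta_{it}(\theta)=\psi_{i,t}(\mu_t(\theta))\frac{1-S_{it}}{S_{it}}-\delta_{it}(\theta).
\end{equation*}
This is a difference of two non-negative terms, so $|\epsilon_{it}|$ is at most the larger of them. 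The second term is handled by Step 2. For the first, note that $\psi_{i,t}(\mu_t(\theta))/S_{it}=\zeta_{it}(\theta)\le 1$, so it equals $\zeta_{it}(\theta)(1-S_{it})\le 1-S_{it}$. Hence $|\epsilon_{it}(\theta)|\le 1-S_{it}$, and Step 1 yields \eqref{eq:sumeps} with the constant $Y_0/c_v$. If this sharpening failed somewhere, the fallback is to accept the constant $2Y_0/c_v$, which suffices for the later summability arguments.
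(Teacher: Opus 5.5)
Your proposal is correct and follows the paper's proof essentially step for step. You bound $1-S_{it}\le v_{i,t+2}\Delta_{2,it}/c_v\le D_t/c_v$, deduce the $\delta$ bound from $0\le\delta_{it}(\theta)\le 1-S_{it}$, and use the identity $\epsilon_{it}(\theta)=\zeta_{it}(\theta)(1-S_{it})-\delta_{it}(\theta)$ to get $|\epsilon_{it}(\theta)|\le 1-S_{it}$ with no factor $2$. The only cosmetic difference is that you justify $\psi_{i,t}(\mu_t(\tstar))\le 1$ through the tangent-plane inequality, whereas the paper uses monotonicity together with $\psi_{i,t}(\bone)=1$; both arguments are valid.
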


\begin{proof}
The idea is the following: each of these remainders is dominated, pointwise and up to the constant $c_v$, by $D_t=Y_t-\mathbb E_t[Y_{t+1}]$ whose expectation is summable with $\sum_{t\ge0}\Estar[D_t]\le Y_0$ by Lemma~\ref{lem:sumD}. It therefore suffices to establish the pointwise domination in each case. In the following, fix $\theta\in\Theta$

\begin{itemize}
    \item {0: $\phi_{i,t}(\nu_t)\ge1$.} By Definition~\ref{def:nuphi},
    
    \begin{equation*}
        \phi_{i,t}(\nu_t) =\frac{1}{\psi_{i,t}\bigl(1/\nu_{1t},\dots,1/\nu_{nt}\bigr)} =\frac{1}{\psi_{i,t}\bigl(\mu_t(\theta^*)\bigr)},
    \end{equation*}
    
    since $\nu_{jt}=1/\mu_{jt}(\theta^*)$ by construction. Now the beliefs are probabilities, so $\mu_{jt}(\theta^*)\le1$ for every $j$, as $\psi_{i,t}$ is weakly increasing in each argument and $\psi_{i,t}(\mathbf 1)=1$ (Hypothesis~\ref{hyp:H1}), we obtain
    
    \begin{equation*}
        \psi_{i,t}\bigl(\mu_t(\theta^*)\bigr)\;\le\;\psi_{i,t}(\mathbf 1)\;=\;1, \qquad\text{hence}\qquad \phi_{i,t}(\nu_t)\;\ge\;1 .
    \end{equation*}
    
    Moreover, $\phi_{i,t}$ is itself weakly increasing with $\phi_{i,t}(\mathbf 1)=1$, and $\nu_t\ge\mathbf 1$ componentwise.
    
    \item {1: bound \eqref{eq:sumS}.} One has $\Delta_{2,it}=(1-S_{it})\,\phi_{i,t}(\nu_t)$. Since $S_{it}\le1$ by Lemma~\ref{lem:onestep}, the factor $1-S_{it}$ is non-negative, so point~0 gives
    
    \begin{equation*}
        1-S_{it}\;\le\;(1-S_{it})\,\phi_{i,t}(\nu_t)\;=\;\Delta_{2,it}.
    \end{equation*}
    
    In the decomposition \eqref{eq:Ddecomp} every term in the sum is positive, then, only keeping the one indexed by $i$ and using $v_{i,t+2}\ge c_v$ yields
    
    \begin{equation*}
        c_v\,(1-S_{it})\;\le\;c_v\,\Delta_{2,it}\;\le\;v_{i,t+2}\,\Delta_{2,it}\;\le\;D_t .
    \end{equation*}
    
    Taking expectations and summing over $t$, Lemma~\ref{lem:sumD} gives $\sum_t\Estar[1-S_{it}]\le Y_0/c_v$.
    
    \item{2: bound \eqref{eq:sumdelta}.} By Lemma~\ref{lem:delta}, $\delta_{it}(\theta)$ is non-negative and the total sums to $\sum_{\theta\in\Theta}\delta_{it}(\theta)=1-S_{it}$. A single non-negative term of a sum is bounded by the sum, so $0\le\delta_{it}(\theta)\le1-S_{it}$ for each $\theta$, and \eqref{eq:sumdelta} follows from \eqref{eq:sumS}.
    
    \item{3: bound \eqref{eq:sumeps}.} By \eqref{eq:deltaeps} and \eqref{eq:interim} one has the identity
    
    \begin{equation*}
        \epsilon_{it}(\theta)=\zeta_{it}(\theta)\,(1-S_{it})-\delta_{it}(\theta),
    \end{equation*}
    
    obtained by writing $\sum_j a_{ij,t+1}\mu_{jt}(\theta)=\psi_{i,t}(\mu_t(\theta))+\delta_{it}(\theta)$ and $\psi_{i,t}(\mu_t(\theta))=\zeta_{it}(\theta)S_{it}$. The two terms on the right have opposite signs and are both bounded by $1-S_{it}$: indeed $0\le\zeta_{it}(\theta)\le1$ because $\zeta_{it}$ is a probability distribution, and $0\le\delta_{it}(\theta)\le1-S_{it}$. Hence $-(1-S_{it})\le\epsilon_{it}(\theta)\le1-S_{it}$, that is $|\epsilon_{it}(\theta)|\le1-S_{it}$, and \eqref{eq:sumeps} again follows from \eqref{eq:sumS}.
\end{itemize}
\end{proof}

We denote by $\chi_i^2(t)$ the Pearson divergence between $m_{i,t}$ and $l_i^{\theta^*}$:

\begin{equation}\label{eq:chi2}
    \chi^2_i(t):=\sum_{\omega\in\Omega_i}\frac{\bigl(\ell^{\theta^*}_i(\omega)-m_{it}(\omega)\bigr)^2}{m_{it}(\omega)} .
\end{equation}

It appears naturally in the next Proposition.

\begin{proposition}\label{prop:chi}
Denote

\begin{equation*}
    X_t:=\sum_{i=1}^{n}v_{i,t+1}\mu_{it}(\theta^*),
    \qquad
    V_t:=1-X_t\ge0.
\end{equation*}

Then $V_t\to V_\infty$ $\mathbb{P}^{\theta^*}$-a.s.\ and

\begin{equation*}
    \sum_{t=0}^{\infty}\mathbb E^{\theta^*}\!\left[\sum_i \zeta_{it}(\theta^*)\chi_i^2(t)\right]<+\infty.
\end{equation*}
\end{proposition}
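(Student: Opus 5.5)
The plan is to compute the one-step conditional drift of $X_t$ exactly, and to show that it consists of two parts: a nonnegative term $\sum_i v_{i,t+2}\zeta_{it}(\theta^*)\chi_i^2(t)$, and an error term controlled by the remainders $\epsilon_{it}(\theta^*)$, which are summable in expectation by Lemma~\ref{lem:remainders}. Since $0\le X_t\le 1$, both claims then follow from telescoping and a submartingale convergence argument.

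\emph{Step 1 (conditional expectation of the true-state belief).} From \eqref{eq:bayes},
\begin{equation*}
\mathbb E_t\bigl[\mu_{i,t+1}(\theta^*)\bigr]=\zeta_{it}(\theta^*)\sum_{\omega\in\Omega_i}\frac{\ell_i^{\theta^*}(\omega)^2}{m_{it}(\omega)}=\zeta_{it}(\theta^*)\bigl(1+\chi_i^2(t)\bigr).
\end{equation*}
This uses the identity $\sum_\omega \ell^2/m = 1+\sum_\omega(\ell-m)^2/m$, which holds because both $\ell_i^{\theta^*}$ and $m_{it}$ are probability vectors. Only agent $i$'s own signal enters this computation, so independence across agents is not needed here.

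\emph{Step 2 (drift of $X_t$).} Write $\zeta_{it}(\theta^*)=\sum_j a_{ij,t+1}\mu_{jt}(\theta^*)+\epsilon_{it}(\theta^*)$ using \eqref{eq:deltaeps}. Multiply Step 1 by $v_{i,t+2}$, sum over $i$, and apply the left-eigenvector identity \eqref{eq:tracking}. This gives
\begin{equation*}
\mathbb E_t[X_{t+1}]=X_t+\sum_i v_{i,t+2}\,\epsilon_{it}(\theta^*)+\sum_i v_{i,t+2}\,\zeta_{it}(\theta^*)\chi_i^2(t).
\end{equation*}

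\emph{Step 3 (summability).} Take expectations in Step 2 and telescope from $0$ to $N-1$. Since $X_N\le 1$, $v_{i,t+2}\le 1$, and \eqref{eq:sumeps} holds,
\begin{equation*}
\sum_{t<N}\Estar\Bigl[\sum_i v_{i,t+2}\zeta_{it}(\theta^*)\chi_i^2(t)\Bigr]\le 1-X_0+\frac{nY_0}{c_v}.
\end{equation*}
Dividing by the floor $v_{i,t+2}\ge c_v$ and letting $N\to\infty$ gives the stated finite sum.

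\emph{Step 4 (a.s.\ convergence).} Set $A_t:=\sum_{s<t}\sum_i v_{i,s+2}|\epsilon_{is}(\theta^*)|$. By Step 2, $X_t+A_t$ is an $\mathcal F_t$-adapted submartingale. Its expectation is bounded uniformly in $t$, since $X_t\le1$ and $\Estar[A_\infty]\le nY_0/c_v$. Hence $X_t+A_t$ converges a.s. The process $A_t$ is nondecreasing with integrable limit, so it also converges a.s. Therefore $X_t$ converges a.s., and so does $V_t=1-X_t$; we denote the limit $V_\infty$.

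The only delicate point is Step 2. The time indices of the centrality weights must match \eqref{eq:tracking} exactly, with $v_{\cdot,t+2}$ paired against $a_{\cdot\cdot,t+1}$. The sign of $\epsilon_{it}(\theta^*)$ is indefinite, which is why we compensate with $|\epsilon|$ and not with $\epsilon$ itself. Everything else is routine bookkeeping.
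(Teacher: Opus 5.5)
Your proof is correct and follows the paper's route. It uses the same exact identity $\mathbb E_t[\mu_{i,t+1}(\theta^*)]=\zeta_{it}(\theta^*)(1+\chi_i^2(t))$, the same drift computation via \eqref{eq:tracking} (the paper's bracket $\delta_{it}(\theta^*)/S_{it}-\frac{1-S_{it}}{S_{it}}\sum_j a_{ij,t+1}\mu_{jt}(\theta^*)$ is exactly $-\epsilon_{it}(\theta^*)$), and the same telescoping of expectations. The differences are cosmetic: the paper keeps only the one-sided part $\gamma_t=\sum_i v_{i,t+2}\delta_{it}(\theta^*)/S_{it}\le D_t$ and invokes Robbins--Siegmund, whereas you bound $|\epsilon_{it}(\theta^*)|$ via \eqref{eq:sumeps} and prove a.s.\ convergence directly with the compensated submartingale $X_t+A_t$, which is essentially the argument behind Robbins--Siegmund.
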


\begin{proof}
First, one has

\begin{equation*}
    \Et\bigl[\mu_{i,t+1}(\tstar)\bigr] =\sum_{\omega\in\Omega_i}\ell^{\tstar}_i(\omega)\,\zeta_{it}(\tstar)\,\frac{\ell^{\tstar}_i(\omega)}{m_{it}(\omega)} =\zeta_{it}(\tstar)\sum_{\omega\in\Omega_i}\frac{\ell^{\tstar}_i(\omega)^2}{m_{it}(\omega)} .
\end{equation*}

Adding and subtracting $m_{it}(\omega)$ in the numerator,

\begin{equation*}
    \frac{\ell^{\tstar}_i(\omega)^2}{m_{it}(\omega)} =\frac{(\ell^{\tstar}_i(\omega)-m_{it}(\omega))^2}{m_{it}(\omega)} +2\bigl(\ell^{\tstar}_i(\omega)-m_{it}(\omega)\bigr)+m_{it}(\omega),
\end{equation*}

and summing over $\omega$ (both $\ell^{\tstar}_i$ and $m_{it}$ sum to one), we obtain by \eqref{eq:chi2}

\begin{equation}\label{eq:exactdrift}
    \Et\bigl[\mu_{i,t+1}(\tstar)\bigr]
    =\zeta_{it}(\tstar)\bigl(1+\chi^2_i(t)\bigr).
\end{equation}

Then

\begin{equation*}
    \mathbb E_t\bigl(X_{t+1}\bigr) =\sum_{i=1}^{n}v_{i,t+2}\,\mathbb E_t\bigl(\mu_{i,t+1}(\theta^*)\bigr) =\sum_i v_{i,t+2}\zeta_{it}(\theta^*)+\sum_i v_{i,t+2}\zeta_{it}(\theta^*)\chi_i^2(t),
\end{equation*}

so that

\begin{equation*}
    \mathbb{E}_t\bigl(X_{t+1}\bigr)=X_t-\sum_{i=1}^{n}v_{i,t+2}\Bigl[\delta_{it}(\theta^*)\Bigl(1+\frac{1-S_{i,t}}{S_{i,t}}\Bigr) -\frac{1-S_{i,t}}{S_{i,t}}\sum_{j}a_{ij,t+1}\mu_{jt}(\theta^*)\Bigr]
\end{equation*}

\begin{equation*}
    \qquad\qquad+\sum_{i}v_{i,t+2}\zeta_{it}(\theta^*)\chi_i^2(t).
\end{equation*}

Since $X_t$ takes values between $0$ and $1$, so does $V_t$. Define $\gamma_t:=\sum_{i\ge1}v_{i,t+2}\frac{\delta_{it}(\theta^*)}{S_{it}}\ge0$, one then obtains,

\begin{equation*}
    0\le\mathbb E_t\bigl(V_{t+1}\bigr)\le V_t+\gamma_t-\sum_{i=1}^{n}v_{i,t+2}\zeta_{it}(\theta^*)\chi_i^2(t).
\end{equation*}

Using $\delta_{it}(\tstar)\le1-S_{it}$ and $\psi_{i,t}(\mu_t(\tstar))\le S_{it}$

\begin{equation*}
    \frac{\delta_{it}(\tstar)}{S_{it}} \le\frac{1-S_{it}}{S_{it}} \le\frac{1-S_{it}}{\psi_{i,t}(\mu_t(\tstar))} =(1-S_{it})\,\phi_{i,t}(\nu_t) =\Delta_{2,it}.
\end{equation*}

Multiplying by $v_{i,t+2}>0$ and summing, $\gamma_t\le\sum_iv_{i,t+2}\Delta_{2,it}\le D_t$. Finally, by lemma \ref{lem:sumD} and Tonelli's Theorem, $\Estar[\sum_t\gamma_t]=\sum_t\Estar[\gamma_t]\le\sum_t\Estar[D_t]\le Y_0<\infty$, so $\sum_t\gamma_t<\infty$ $\mathbb{P}^{\tstar}$-almost surely. We can then use the Robbins--Siegmund Theorem \citep{robbins1971convergence} and obtain

\begin{equation*}
    \sum_{t=0}^{\infty}\sum_{i=1}^{n}v_{i,t+2}\zeta_{it}(\theta^*)\chi_i^2(t)<+\infty,
\end{equation*}

together with the convergence of $V_t$ to some random variable $V_{\infty}$; hence $X_t$ converges almost surely to $X_{\infty}:=1-V_{\infty}$.

Denote by $K_t$ the term on which we focus:

\begin{equation*}
    K_t:=\sum_{i=1}^{n}v_{i,t+2}\zeta_{it}(\theta^*)\chi_i^2(t).
\end{equation*}

One has

\begin{equation*}
    0\le K_t\le V_t-\mathbb E_t(V_{t+1})+\gamma_t,
\end{equation*}

so that

\begin{equation*}
    0\le\mathbb E^{\theta^*}(K_t)\le\mathbb E^{\theta^*}(V_t)-\mathbb E^{\theta^*}(V_{t+1})+\mathbb E^{\theta^*}(\gamma_t).
\end{equation*}

Therefore, for $N\in\mathbb N$,

\begin{equation*}
    0\le\sum_{t=0}^{N-1}\mathbb E^{\theta^*}(K_t) \le\mathbb E^{\theta^*}(V_0)-\mathbb E^{\theta^*}(V_N)+\sum_{t=0}^{N-1}\mathbb E^{\theta^*}(\gamma_t) \le V_0+\sum_{t=0}^{\infty}\mathbb E^{\theta^*}(\gamma_t)<+\infty,
\end{equation*}

which gives the result.
\end{proof}

\begin{lemma}[Weight transport]\label{lem:transport}
Let

\begin{equation*}
    \kappa:=\bigl((1-\abar)c_\ell\bigr)^{-1}.
\end{equation*}

Then, for all $i$ and $t$,

\begin{equation}\label{eq:transportpath}
    \zeta_{it}(\tstar)
    \le
    \kappa\Bigl(\zeta_{i,t+1}(\tstar)+\bigl|\epsilon_{i,t+1}(\tstar)\bigr|\Bigr).
\end{equation}
\end{lemma}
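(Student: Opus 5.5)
The plan is to chain three elementary inequalities: one from the definition of $\epsilon_{i,t+1}$, one from the self-trust floor in Hypothesis~\ref{hyp:H2}, and one from the Bayes step \eqref{eq:bayes} together with full support (Hypothesis~\ref{hyp:H3}). The bound holds pathwise, for every realization of the signals, so no probabilistic argument is needed.

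\emph{Step 1 (remove the error term).} By the definition \eqref{eq:deltaeps} at time $t+1$,
\begin{equation*}
\sum_{j} a_{ij,t+2}\,\mu_{j,t+1}(\tstar)=\zeta_{i,t+1}(\tstar)-\epsilon_{i,t+1}(\tstar).
\end{equation*}
Hence
\begin{equation*}
\zeta_{i,t+1}(\tstar)+\bigl|\epsilon_{i,t+1}(\tstar)\bigr|\ge \sum_j a_{ij,t+2}\,\mu_{j,t+1}(\tstar).
\end{equation*}

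\emph{Step 2 (keep only the self term).} All weights $a_{ij,t+2}$ and all beliefs are non-negative, so the sum is at least its $j=i$ term. Hypothesis~\ref{hyp:H2} gives $a_{ii,t+2}=1-\sum_{k\neq i}a_{ik,t+2}\ge 1-\abar\lambda_{t+1}\ge 1-\abar>0$, using $\lambda_{t+1}<1$ and row-stochasticity from homogeneity. This yields
\begin{equation*}
\zeta_{i,t+1}(\tstar)+\bigl|\epsilon_{i,t+1}(\tstar)\bigr|\ge (1-\abar)\,\mu_{i,t+1}(\tstar).
\end{equation*}

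\emph{Step 3 (undo the Bayes step).} By \eqref{eq:bayes},
\begin{equation*}
\mu_{i,t+1}(\tstar)=\zeta_{it}(\tstar)\,\frac{\ell_i^{\tstar}(\omega_{i,t+1})}{m_{it}(\omega_{i,t+1})}.
\end{equation*}
The denominator satisfies $m_{it}(\omega)=\sum_{\tilde\theta}\zeta_{it}(\tilde\theta)\ell_i^{\tilde\theta}(\omega)\le 1$, since it is a convex combination of numbers in $(0,1]$. The numerator satisfies $\ell_i^{\tstar}(\omega)\ge c_\ell$. Therefore $\mu_{i,t+1}(\tstar)\ge c_\ell\,\zeta_{it}(\tstar)$. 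Combining this with Step 2 and dividing by $(1-\abar)c_\ell>0$ gives \eqref{eq:transportpath} with $\kappa=\bigl((1-\abar)c_\ell\bigr)^{-1}$.

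None of the steps is a real obstacle. The only point requiring care is Step 2: the constant must be uniform in $t$. This is why one uses $\lambda_{t+1}<1$ to replace $1-\abar\lambda_{t+1}$ by $1-\abar$, and uses the exact identity $\sum_j a_{ij,t+2}=1$ from Hypothesis~\ref{hyp:H1}(ii)--(iii) via Euler's formula. One must also note the index shift: the weights used at time $t+1$ are $a_{ij,t+2}$.
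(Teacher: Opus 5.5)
Your proof is correct and follows essentially the same route as the paper's: write $\zeta_{i,t+1}(\tstar)$ via the definition of $\epsilon_{i,t+1}$, keep only the self-weight term with $a_{ii,t+2}\ge 1-\abar$, then invert the Bayes step using $m_{it}\le 1$ and $\ell_i^{\tstar}\ge c_\ell$. Your explicit derivation of $a_{ii,t+2}\ge 1-\abar$ from row-stochasticity (via Euler's formula) and $\lambda_{t+1}<1$ is a welcome detail that the paper only attributes to Hypothesis~\ref{hyp:H2}.
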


\begin{proof}
By the definition of $\epsilon_{i,t+1}$ in \eqref{eq:deltaeps},

\begin{equation*}
    \zeta_{i,t+1}(\tstar) = \sum_j a_{ij,t+2}\mu_{j,t+1}(\tstar) +\epsilon_{i,t+1}(\tstar).
\end{equation*}

Since the coefficients $a_{ij,t+2}$ are non-negative,

\begin{equation*}
    \zeta_{i,t+1}(\tstar) \ge a_{ii,t+2}\mu_{i,t+1}(\tstar) - \bigl|\epsilon_{i,t+1}(\tstar)\bigr|.
\end{equation*}

Using the Bayesian update \eqref{eq:bayes},

\begin{equation*}
    \mu_{i,t+1}(\tstar) = \zeta_{it}(\tstar) \frac{\ell_i^{\tstar}(\omega_{i,t+1})} {m_{it}(\omega_{i,t+1})},
\end{equation*}

combined with the bounds

\begin{equation*}
    m_{it}(\omega)\le1,\qquad \ell_i^{\tstar}(\omega)\ge c_\ell,\qquad a_{ii,t+2}\ge1-\abar,
\end{equation*}

the last one following from Hypothesis~\ref{hyp:H2}, we obtain

\begin{equation*}
    a_{ii,t+2}\mu_{i,t+1}(\tstar) \ge (1-\abar)c_\ell\,\zeta_{it}(\tstar) = \kappa^{-1}\zeta_{it}(\tstar).
\end{equation*}

Therefore,

\begin{equation*}
    \zeta_{i,t+1}(\tstar)
    \ge
    \kappa^{-1}\zeta_{it}(\tstar)
    -
    \bigl|\epsilon_{i,t+1}(\tstar)\bigr|,
\end{equation*}

which is equivalent to \eqref{eq:transportpath}.
\end{proof}

We then define the following terms:

\begin{equation}\label{eq:Zdef}
    Z^{(k)}_{it}(\omega^k):=\sum_{\that\in\Theta}\zeta_{it}(\that)\,\Pi^{(k)}_i(\that)-1,
    \qquad
    \Pi^{(k)}_i(\that):=\prod_{r=1}^{k}\frac{\ell^{\that}_i(\omega_r)}{\ell^{\tstar}_i(\omega_r)} .
\end{equation}

\begin{proposition}\label{prop:Zsum}
\footnote{Our proof is very similar to the proof of Lemma A.5 of \citet{molavi2018theory}. We have to deal with an extra error term and we use the ordinary expectation rather than the conditional one. Moreover, we believe that there is a mistake in their base case, because they remove the term $\zeta_{i,t}(\theta^*)$ from the conditional expectation (their equations (39) and (40)), arguing that this term is bounded from below, which is the case, but it is bounded from below by a random variable and not by a deterministic constant.} For every $k\ge1$, all $i\in\mathcal N$, and for any $(\omega_1,\dots,\omega_k)\in\Omega_i^k$,

\begin{equation*}
    \sum_{t=0}^{\infty}\mathbb E^{\theta^*}\!\left[\Bigl(\sum_{\theta\in\Theta}\zeta_{it}(\theta)\prod_{r=1}^{k}\frac{\ell_i^{\theta}(\omega_r)}{\ell_i^{\theta^*}(\omega_r)}-1\Bigr)^{2}\zeta_{it}(\theta^*)\right]<+\infty.
\end{equation*}
\end{proposition}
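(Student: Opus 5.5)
The plan is to argue by induction on $k$, following the scheme of Lemma~A.5 of \citet{molavi2018theory}. Throughout we work with unconditional expectations $\Estar$ and keep the random factor $\zeta_{it}(\tstar)$ inside the expectation, which avoids the issue raised in the footnote. Write $Z^{(k)}_{it}$ as in \eqref{eq:Zdef}. The basic tool is that $\Pi^{(k)}_i(\that)\le c_\ell^{-k}$ is a deterministic constant. Consequently $|Z^{(k)}_{it}|\le c_\ell^{-k}+1$ uniformly in $t$ and in the sample path.

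\emph{Base case $k=1$.} By the definition of $m_{it}$ in \eqref{eq:bayes}, $Z^{(1)}_{it}(\omega)=m_{it}(\omega)/\ell^{\tstar}_i(\omega)-1$. Hence
\begin{equation*}
\bigl(Z^{(1)}_{it}(\omega)\bigr)^2\zeta_{it}(\tstar)=\frac{(m_{it}(\omega)-\ell^{\tstar}_i(\omega))^2}{\ell^{\tstar}_i(\omega)^2}\,\zeta_{it}(\tstar)\le c_\ell^{-2}\,\frac{(m_{it}(\omega)-\ell^{\tstar}_i(\omega))^2}{m_{it}(\omega)}\,\zeta_{it}(\tstar)\le c_\ell^{-2}\zeta_{it}(\tstar)\chi^2_i(t),
\end{equation*}
using $m_{it}\le1$ and $\ell^{\tstar}_i\ge c_\ell$. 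Since $v_{i,t+2}\ge c_v$, Proposition~\ref{prop:chi} gives $\sum_t\Estar[\zeta_{it}(\tstar)\chi^2_i(t)]\le c_v^{-1}\sum_t\Estar[K_t]<\infty$, which proves the base case.

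\emph{Inductive step $k\to k+1$.} Set $\omega=\omega_{k+1}$ and let $W_{i,t+1}:=\sum_{\that}\mu_{i,t+1}(\that)\Pi^{(k)}_i(\that)-1$, evaluated on the event $\{\omega_{i,t+1}=\omega\}$. Substituting \eqref{eq:bayes} gives the algebraic identity
\begin{equation*}
Z^{(k+1)}_{it}=\frac{m_{it}(\omega)}{\ell^{\tstar}_i(\omega)}\bigl(W_{i,t+1}+1\bigr)-1=\Bigl(\frac{m_{it}(\omega)}{\ell^{\tstar}_i(\omega)}-1\Bigr)\bigl(W_{i,t+1}+1\bigr)+W_{i,t+1},
\end{equation*}
where $W$ is computed with the signal fixed to $\omega$. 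By $(a+b)^2\le2a^2+2b^2$ and the deterministic bound $|W+1|\le c_\ell^{-k}$, the first term, squared and multiplied by $\zeta_{it}(\tstar)$, is a constant times the base-case quantity, so it is summable. For the second term, note that $Z^{(k+1)}_{it}$ is $\mathcal F_t$-measurable. We can therefore replace $\zeta_{it}(\tstar)W_{i,t+1}^2$ by its value on $\{\omega_{i,t+1}=\omega\}$ and control it through $\Estar[\,\cdot\,\mathbf 1_{\{\omega_{i,t+1}=\omega\}}]\ge c_\ell\,\Estar[\,\cdot\,]$, using the conditional probability $\ell^{\tstar}_i(\omega)\ge c_\ell$. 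We then transfer the weight $\zeta_{it}(\tstar)$ to time $t+1$ via Lemma~\ref{lem:transport}: $\zeta_{it}(\tstar)\le\kappa(\zeta_{i,t+1}(\tstar)+|\epsilon_{i,t+1}(\tstar)|)$. The $|\epsilon|$ part is summable in expectation by \eqref{eq:sumeps}, since $W^2$ is bounded by a constant. This leaves us to show $\sum_t\Estar[W_{i,t+1}^2\zeta_{i,t+1}(\tstar)]<\infty$.

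\emph{Main obstacle.} The remaining difficulty is to compare $W_{i,t+1}$, which is built from $\mu_{i,t+1}$, with $Z^{(k)}_{i,t+1}$, which is built from $\zeta_{i,t+1}$ and is covered by the induction hypothesis. Since $\Pi^{(k)}$ is linear in the belief vector, \eqref{eq:deltaeps} gives
\begin{equation*}
Z^{(k)}_{i,t+1}=\sum_j a_{ij,t+2}W_{j,t+1}+\sum_{\that}\epsilon_{i,t+1}(\that)\Pi^{(k)}_i(\that).
\end{equation*}
The error term is again summable by \eqref{eq:sumeps}. The difficulty is that the neighbours' $W_{j,t+1}$ are mixed in. We first use $a_{ii,t+2}\ge1-\abar$ together with the nonnegativity of each $W_j+1$, which yields only a one-sided bound. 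To obtain the other side, we plan to run the induction simultaneously for all agents and all $k$-tuples. We would then use a Jensen-type inequality, $(\sum_ja_{ij}W_j)^2\le\sum_ja_{ij}W_j^2$, read backwards along the centrality weights $v_{t}$ via \eqref{eq:tracking}, and use the uniform floor $v_{i,t}\ge c_v$ to isolate agent $i$. If this isolation fails, the fallback is to strengthen the inductive statement so that it controls $\sum_t\Estar[W_{i,t+1}^2\zeta_{i,t+1}(\tstar)]$ directly, i.e.\ to phrase the induction in terms of posterior beliefs $\mu$ rather than interim beliefs. In that formulation the step from $k$ to $k+1$ uses only the Bayes identity above and Lemma~\ref{lem:transport}.
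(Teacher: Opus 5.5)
\paragraph{Verdict.} Your proposal has a genuine gap: the step you call the main obstacle is never closed, and neither of your two plans closes it.

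\paragraph{What is correct.} Your base case is fine. So is the first half of the inductive step:
\begin{itemize}
\item the identity $Z^{(k+1)}_{it}=Z^{(1)}_{it}(\omega)\,(W_{i,t+1}+1)+W_{i,t+1}$;
\item the change of measure $\Estar[X\mathbf 1_{\{\omega_{i,t+1}=\omega\}}]=\Estar[X\,\ell^{\tstar}_i(\omega)]\ge c_\ell\Estar[X]$ for $\cF_t$-measurable $X\ge0$;
\item the weight transport through Lemma~\ref{lem:transport}.
\end{itemize}

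\paragraph{Why the pointwise comparison fails.} Set $R^\epsilon_{t+1}:=\sum_{\that}\epsilon_{i,t+1}(\that)\Pi^{(k)}_i(\that)$. Then $Z^{(k)}_{i,t+1}-R^\epsilon_{t+1}=\sum_j a_{ij,t+2}W_{j,t+1}$ is an average in which the $W_{j,t+1}$ can cancel. Its smallness therefore says nothing about $W_{i,t+1}$.
\begin{itemize}
\item The inequality $(\sum_j a_{ij}W_j)^2\le\sum_j a_{ij}W_j^2$ goes the wrong way. It bounds the quantity you control by the ones you do not, and no weighting by $v_t$ reverses it.
\item The $W_{j,t+1}$ appearing in row $i$ use agent $i$'s test function $\Pi^{(k)}_i$. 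Agent $j$'s Bayes update gives no reduction in $k$ for that test function.
\end{itemize}

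\paragraph{Why the fallback does not help.} It is not a strengthening. By your own identity and $\ell^{\tstar}_i\ge c_\ell$, the bound $\sum_t\Estar[\zeta_{it}(\tstar)W_{i,t+1}^2]<\infty$ for all $k$-tuples is equivalent, up to base-case terms, to the proposition at level $k+1$. Inducting on posteriors only shifts the index by one, and you still have to pass from $\zeta_{i,t+1}$ to $\mu_{i,t+1}$ through the aggregation step.

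\paragraph{The missing idea.} The paper isolates agent $i$ by conditional centering rather than pointwise, with respect to $\cF_{-i,t+1}=\sigma(\cF_t\cup\sigma(\omega_{j,t+1}:j\neq i))$. The neighbours' posteriors $\mu_{j,t+1}$, $j\neq i$, are $\cF_{-i,t+1}$-measurable and cancel. With $W_{i,t+1}$ now built from the realized signal, this gives
\[
a_{ii,t+2}\bigl(W_{i,t+1}-\Emi[W_{i,t+1}]\bigr)=\bigl(Z^{(k)}_{i,t+1}-\Emi[Z^{(k)}_{i,t+1}]\bigr)-\bigl(R^\epsilon_{t+1}-\Emi[R^\epsilon_{t+1}]\bigr).
\]
\begin{itemize}
\item \emph{Centred part.} Since $\zeta_{it}(\tstar)$ is $\cF_{-i,t+1}$-measurable, the weighted square of the right-hand side is bounded by the second moments $\Estar[\zeta_{it}(\tstar)(Z^{(k)}_{i,t+1})^2]$ and $\Estar[(R^\epsilon_{t+1})^2]$. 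The first is handled by your transport step plus the induction hypothesis at $t+1$; the second by \eqref{eq:sumeps}.
\item \emph{Conditional mean.} By independence across agents, $\omega_{i,t+1}$ has law $\ell^{\tstar}_i$ given $\cF_{-i,t+1}$, so $\Emi[W_{i,t+1}]$ is $\cF_t$-measurable. It differs from $Z^{(k)}_{it}$ by a term whose square times $\zeta_{it}(\tstar)$ is $O(\zeta_{it}(\tstar)\chi^2_i(t))$. It is therefore covered by the induction hypothesis at time $t$ and the base case.
\end{itemize}
In the paper's notation, $a_{ii,t+2}\Emi[W_{i,t+1}]=T_3-T_2$ and the centred part is $T_4-T_5$.

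A telltale sign of the gap is that your argument never uses independence of signals across agents. That is exactly the one place where the paper needs it.
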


\begin{proof}
We proceed by induction on $k$.

\noindent$\bullet$\ \textbf{Base case ($k=1$).}
One has

\begin{equation*}
    Z_{it}^{(1)}(\omega_1)=\frac{m_{it}(\omega_1)-\ell_i^{\theta^*}(\omega_1)}{\ell_i^{\theta^*}(\omega_1)},
\end{equation*}

so that, by \eqref{eq:chi2},

\begin{equation*}
    \bigl(Z_{it}^{(1)}(\omega_1)\bigr)^2\le\frac{\chi_{i}^2(t)}{c_\ell^2},
\end{equation*}

and therefore

\begin{equation*}
    \zeta_{it}(\theta^*)\bigl(Z_{it}^{(1)}(\omega_1)\bigr)^2\le\frac{1}{c_\ell^2}\zeta_{it}(\theta^*)\chi_i^2(t).
\end{equation*}

We then use Proposition~\ref{prop:chi} to conclude for the base case.

Introduce the following $\sigma$-algebra containing everything except agent $i$'s fresh signal,

\begin{equation*}
    \cF_{-i,t+1}:=\sigma\Bigl(\cF_t\cup\sigma\bigl(\omega_{j,t+1}:j\neq i\bigr)\Bigr),
    \qquad
    \Emi[\,\cdot\,]:=\Estar[\,\cdot\mid\cF_{-i,t+1}] .
\end{equation*}

\noindent$\bullet$\ \textbf{Induction step.} Suppose the result holds for $k\ge1$. Take $(\omega_1,\dots,\omega_{k+1})\in\Omega_i^{k+1}$. One has

\begin{equation*}
    \Emi[\zeta_{i,t+1}(\theta)\bigr]
    =\sum_{j=1}^{n}a_{ij,t+2}\zeta_{jt}(\theta)\,\Emi\left[\frac{\ell_j^{\theta}(\omega_{j,t+1})}{m_{jt}(\omega_{j,t+1})}\right]
    +\Emi\bigl[\epsilon_{i,t+1}\bigr],
\end{equation*}

so that, using the Hypothesis of independence between agents and the $\cF_{-i,t+1}$-measurability of the terms where $j\neq i$,

\begin{equation*}
    a_{ii,t+2}\zeta_{it}(\theta)\!\left(\frac{\ell_i^{\theta}(\omega_{i,t+1})}{m_{it}(\omega_{i,t+1})}
    -\Emi\!\left[\frac{\ell_i^{\theta}(\omega_{i,t+1})}{m_{it}(\omega_{i,t+1})}\right]\right)
\end{equation*}

\begin{equation*}
    =\Bigl(\zeta_{i,t+1}(\theta)-\Emi\bigl[\zeta_{i,t+1}(\theta)\bigr]\Bigr)
    -\Bigl(\epsilon_{i,t+1}(\theta)-\Emi\bigl[\epsilon_{i,t+1}(\theta)\bigr]\Bigr).
\end{equation*}

Define

\begin{equation*}
    R^{\epsilon}_{t+1}:=\sum_{\that\in\Theta}\epsilon_{i,t+1}(\that)\,\Pi^{(k)}_i(\that),
    \qquad
    \Upsilon_{it}:=\sum_{\that}\zeta_{it}(\that)\,\Pi^{(k)}_i(\that)\,\frac{\ell^{\that}_i(\omega_{i,t+1})}{\ell^{\tstar}_i(\omega_{i,t+1})}-1 ,
\end{equation*}

and

\begin{align}
    T_1&:=a_{ii,t+2}\sum_{\that\in\Theta}\zeta_{it}(\that)\,\Pi^{(k)}_i(\that) \left(\frac{\ell^{\that}_i(\omega_{i,t+1})}{\ell^{\tstar}_i(\omega_{i,t+1})} -\frac{\ell^{\that}_i(\omega_{i,t+1})}{m_{it}(\omega_{i,t+1})}\right), \label{eq:T1def}\\
    T_2&:=\Emi[T_1], \\
    T_3&:=a_{ii,t+2}\,Z^{(k)}_{it}(\omega^k), \label{eq:T23def}\\
    T_4&:=Z^{(k)}_{i,t+1}(\omega^k)-\Emi\bigl[Z^{(k)}_{i,t+1}(\omega^k)\bigr], \\
    T_5&:=R^{\epsilon}_{t+1}-\Emi\bigl[R^{\epsilon}_{t+1}\bigr]. \label{eq:T45def}
\end{align}

Then one has

\begin{equation*}
    a_{ii,t+2}\,\Upsilon_{it}=T_1-T_2+T_3+T_4-T_5 .
\end{equation*}

We now square this equality, use the inequality $(x_1+x_2+x_3+x_4+x_5)^2\le5(x_1^2+x_2^2+x_3^2+x_4^2+x_5^2)$, multiply the result by $\zeta_{i,t}(\theta^*)$ and take the expectation to obtain

\begin{equation}\label{eq:synth}
    a_{ii,t+2}^2\,\Estar\bigl[\zeta_{it}(\tstar)\,\Upsilon_{it}^2\bigr]
    \;\le\;5\sum_{m=1}^{5}\Estar\bigl[\zeta_{it}(\tstar)\,T_m^2\bigr].
\end{equation}

We show that the five terms on the right-hand side are summable.

\begin{itemize}
    \item \emph{Term $T_1$.} One has
    
    \begin{equation*}
        \frac{\ell_i^{\hat{\theta}}(\omega_{i,t+1})} {\ell_i^{\tstar}(\omega_{i,t+1})} - \frac{\ell_i^{\hat{\theta}}(\omega_{i,t+1})} {m_{it}(\omega_{i,t+1})} = \frac{\ell_i^{\hat{\theta}}(\omega_{i,t+1})} {m_{it}(\omega_{i,t+1})} \left( \sum_{\tilde{\theta}\in\Theta} \zeta_{it}(\tilde{\theta}) \frac{\ell_i^{\tilde{\theta}}(\omega_{i,t+1})} {\ell_i^{\tstar}(\omega_{i,t+1})} -1 \right).
    \end{equation*}
    
    Using the tower property and the $\mathcal{F}_t$-measurability of $\zeta_{i,t}(\theta^*)$,
    
    \begin{equation*}
        \Estar\bigl[\zeta_{it}(\tstar)\,T_1^2\bigr]=\Estar\bigl[\zeta_{it}(\tstar)\,\Emi\bigl[ T_1^2\bigr]\bigr]
    \end{equation*}
    
    \begin{equation*}
        = \Estar\Biggl[ \zeta_{it}(\tstar)\, \sum_{\omega\in\Omega_i}\Biggl[l_i^{\theta^*}(\omega) \Biggl( a_{ii,t+2} \sum_{\that\in\Theta} \zeta_{it}(\that)\, \Pi_i^{(k)}(\that) \frac{\ell_i^{\hat{\theta}}(\omega)} {m_{it}(\omega)}\left( Z_{i,t}^{(1)}(\omega) \right) \Biggr)^2 \Biggr] \Biggr].
    \end{equation*}
    
    Since $l_i^{\theta}$ is bounded above by $1$ for all $\theta\in\Theta$, $a_{ii,t+2}$ is bounded above by $1$, $\zeta_{i,t}(\theta)$ is bounded above by $1$ for all $\theta\in\Theta$, and $\Pi_i^{(k)}(\that)$ is also bounded and $m_{i,t}$ is also bounded from below by $c_l$. Therefore, there exists a constant $\tilde{c}>0$ such that
    
    \begin{equation*}
        \Estar\bigl[\zeta_{it}(\tstar)\,T_1^2\bigr]\le \tilde{c}\sum_{\omega\in\Omega_i}\Estar\bigl[\zeta_{it}(\tstar)\,Z_{i,t}^{(1)}(\omega)^2\bigr].
    \end{equation*}
    
    We can then conclude using the base case.
    
    \item \emph{Term $T_2$.} By conditional Jensen, $T_2^2=(\Emi[T_1])^2\le\Emi[T_1^2]$. Since $\zeta_{it}(\tstar)$ is $\cF_{-i,t+1}$-measurable, the tower property gives
    
    \begin{equation*}
        \Estar\bigl[\zeta_{it}(\tstar)T_2^2\bigr]
        \le\Estar\bigl[\zeta_{it}(\tstar)\,\Emi[T_1^2]\bigr]
        =\Estar\bigl[\Emi[\zeta_{it}(\tstar)T_1^2]\bigr]
        =\Estar\bigl[\zeta_{it}(\tstar)T_1^2\bigr],
    \end{equation*}
    
    and we conclude by the case of $T_1$.
    
    \item \emph{Term $T_3$.} Since $a_{ii,t+2}$ is bounded above by $1$, we conclude immediately using the induction Hypothesis.
    
    \item \emph{Term $T_4$.} The conditional variance is dominated by the conditional second moment: $\Emi[T_4^2]\le\Emi[(Z^{(k)}_{i,t+1}(\omega^k))^2]$. By the tower property, as for $T_2$,
    
    \begin{equation*}
        \Estar\bigl[\zeta_{it}(\tstar)T_4^2\bigr]
        \le\Estar\Bigl[\zeta_{it}(\tstar)\bigl(Z^{(k)}_{i,t+1}(\omega^k)\bigr)^2\Bigr].
    \end{equation*}
    
    Moreover, by Lemma~\ref{lem:transport}, $\zeta_{it}(\tstar)\le\kappa\bigl(\zeta_{i,t+1}(\tstar)+|\epsilon_{i,t+1}(\tstar)|\bigr)$. Then
    
    \begin{equation*}
        \Estar\bigl[\zeta_{it}(\tstar)T_4^2\bigr]
        \le\kappa\Estar\Bigl[\zeta_{i,t+1}(\tstar)\bigl(Z^{(k)}_{i,t+1}(\omega^k)\bigr)^2\Bigr]+\kappa\Estar\Bigl[\bigl|\epsilon_{i,t+1}(\tstar)\bigr|\bigl(Z^{(k)}_{i,t+1}(\omega^k)\bigr)^2\Bigr].
    \end{equation*}
    
    The first term on the right-hand side is summable by the induction Hypothesis, and the second term is summable because $\bigl|\epsilon_{i,t+1}(\tstar)\bigr|$ is summable by Lemma~\ref{lem:remainders}, and $\bigl(Z^{(k)}_{i,t+1}(\omega^k)\bigr)^2$ is bounded uniformly in time.
    
    \item \emph{Term $T_5$.} As for $T_4$, $\Emi[T_5^2]\le\Emi[(R^{\epsilon}_{t+1})^2]$. The tower property then gives
    
    \begin{equation*}
        \Estar\bigl[\zeta_{it}(\tstar)T_5^2\bigr]
        =\Estar\bigl[\zeta_{it}(\tstar)\,\Emi[T_5^2]\bigr]
        \le\Estar\bigl[\Emi[(R^{\epsilon}_{t+1})^2]\bigr]
        =\Estar\bigl[(R^{\epsilon}_{t+1})^2\bigr].
    \end{equation*}
    
    The factor $\Pi_i^{(k)}(\that)$ is bounded, and since $\Estar\bigl[\bigl|\epsilon_{i,t+1}(\tstar)\bigr|\bigr]$ is summable by Lemma~\ref{lem:remainders}, the same holds for $\Estar\bigl[\epsilon_{i,t+1}(\tstar)^2\bigr]$, because $\epsilon_{i,t+1}(\theta)^2\le\bigl|\epsilon_{i,t+1}(\theta)\bigr|$; for all state $\theta$ we can then conclude.
\end{itemize}

Since $a_{ii,t+1}\ge(1-\bar{a})$ by Hypothesis~\ref{hyp:H2}, \eqref{eq:synth} gives $\sum_{t=1}^\infty\Estar\bigl[\zeta_{it}(\tstar)\,\Upsilon_{it}^2\bigr]<\infty$. Using the tower property and the $\mathcal{F}_t$-measurability of $\zeta_{i,t}(\theta^*)$, we end up with

\begin{equation*}
    \sum_{t=1}^\infty\Estar\bigl[\zeta_{it}(\tstar)\,\E_t[\Upsilon_{it}^2]\bigr]=\sum_{t=1}^\infty\sum_{\omega\in\Omega_i}l_i^{\tstar}(\omega)\,\Estar\bigl[\zeta_{it}(\tstar)Z_{i,t}^{(k+1)}(\omega^k,\omega)^2\bigr]<\infty.
\end{equation*}

By Hypothesis~\ref{hyp:H3}, we obtain that for all $\omega\in\Omega_i$,

\begin{equation*}
    \sum_{t=1}^\infty\Estar\bigl[\zeta_{it}(\tstar)Z_{i,t}^{(k+1)}(\omega^k,\omega)^2\bigr]<\infty,
\end{equation*}

which concludes the induction step.
\end{proof}

We are now able to prove Proposition~\ref{prop:vanish}.

\begin{proof}[Proof of Proposition~\ref{prop:vanish}]
We use Lemma 4 of \citet{jadbabaie2012non}: there exist an integer $k$, a tuple $(\omega_1,\omega_2,\dots,\omega_k)\in\Omega_i^k$ and a $\beta\in(0,1)$ such that $\prod_{r=1}^k\frac{l_i^{\theta}(\omega_r)}{l_i^{\theta^*}(\omega_r)}-1<-\beta$ for all $\theta\in\Theta_i$. Therefore, using Proposition~\ref{prop:Zsum} with $(\omega_1,\omega_2,\dots,\omega_k)$, we obtain

\begin{equation*}
    \beta^2\Estar\Bigl[\zeta_{it}(\tstar)\Bigl(\sum_{\that\in\Theta_i}\zeta_{it}(\that)\Bigr)^2\Bigr]\le\Estar\Bigl[\zeta_{it}(\tstar)\Bigl(\sum_{\that\in\Theta}\zeta_{it}(\that)\,(\Pi^{(k)}_i(\that)-1)\Bigr)^2\Bigr].
\end{equation*}

Since $\beta>0$, for all $\that\in\Theta_i$ we have

\begin{equation*}
    \sum_{t=1}^{\infty}\Estar\bigl[\zeta_{it}(\tstar)\zeta_{it}(\that)^2\bigr]<\infty .
\end{equation*}

As the terms inside the expectations are non-negative, this is equivalent, by Tonelli's Theorem, to

\begin{equation*}
    \Estar\Bigl[\sum_{t=1}^{\infty}\zeta_{it}(\tstar)\zeta_{it}(\that)^2\Bigr]<\infty,
\end{equation*}

which in turn implies

\begin{equation*}
    \sum_{t=1}^{\infty}\zeta_{it}(\tstar)\zeta_{it}(\that)^2<\infty
\end{equation*}

$\mathbb{P}^{\tstar}$-almost surely. By Proposition~\ref{prop:smg}, we know that $\zeta_{i,t}(\tstar)$ is bounded from below by $C>0$, because 

\begin{equation*}
    \zeta_{it}(\tstar)=\frac{\psi_{i,t}\bigl(\mu_t(\tstar)\bigr)}{S_{it}}
    \;\ge\;\psi_{i,t}\bigl(\mu_t(\tstar)\bigr)
    \;\ge\;\psi_{i,t}\bigl(C\mathbf{1}\bigr)
    \;=\;C\,\psi_{i,t}(\mathbf{1})\;=\;C\;>\;0
    \qquad \Pstar\text{-a.s.}
\end{equation*}

$\mathbb{P}^{\tstar}$-almost surely. Then

\begin{equation*}
    \sum_{t=1}^{\infty}\zeta_{it}(\that)^2<\infty,
\end{equation*}

which concludes the proof of Proposition~\ref{prop:vanish}.
\end{proof}

The second part of the proof consists in using this fact, together with the strong connectivity of the communication network, to make this result propagate. To this end we shall need some lemmas. First, we recall the definition of the elasticity matrices given in Definition~\ref{def:curv} (taken at the point $\mu_t(\tstar)$. For agents $i,k$ and time $t$, set

\begin{equation*}
    e_{ik,t}:=e_{ik,t}\bigl(\mu_t(\tstar)\bigr)
    =\frac{\partial\log\psi_{i,t}}{\partial\log x_k}\bigl(\mu_t(\tstar)\bigr)=\frac{\mu_{kt}(\tstar)\,\partial_{x_k}\psi_{i,t}\bigl(\mu_t(\tstar)\bigr)}{\psi_{i,t}\bigl(\mu_t(\tstar)\bigr)}\;\ge\;0,
    \qquad
    E_t:=[e_{ik,t}]_{i,k\in\cN},
\end{equation*}

and recall

\begin{equation*}
    H^{i,t}_{jk}(x):=\frac{\partial^2\log\psi_{i,t}}{\partial\log x_j\,\partial\log x_k}(x).
\end{equation*}

These objects satisfy the following properties.

\begin{lemma}\label{lem:elast}
For every $x>0$, $\sum_{k}e_{ik,t}(x)=1$, in particular $E_t$ is row-stochastic, $e_{ik,t}(\bone)=a_{ik,t+1}$, and $\sum_j H^{i,t}_{jk}(x)=0$ for every $k$ and every $x>0$.
\end{lemma}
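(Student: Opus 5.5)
The plan is to derive all three claims from Euler's identity for functions homogeneous of degree one, combined with the normalization and the definition of the trust weights in Hypothesis~\ref{hyp:H1}.

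\emph{Row sums.} By Hypothesis~\ref{hyp:H1}(ii), $\psi_{i,t}(\lambda x)=\lambda\psi_{i,t}(x)$. Differentiating in $\lambda$ at $\lambda=1$ gives Euler's identity
\[
\sum_k x_k\,\partial_{x_k}\psi_{i,t}(x)=\psi_{i,t}(x).
\]
Dividing by $\psi_{i,t}(x)>0$ gives $\sum_k e_{ik,t}(x)=1$. Each $e_{ik,t}(x)$ is non-negative by Hypothesis~\ref{hyp:H1}(i): it is positive for neighbors and vanishes identically otherwise. Hence $E_t$, evaluated at $\mu_t(\tstar)$, is non-negative with unit row sums, i.e.\ row-stochastic.

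\emph{Value at the consensus profile.} At $x=\bone$ the normalization $\psi_{i,t}(\bone)=1$ from Hypothesis~\ref{hyp:H1}(iii) gives
\[
e_{ik,t}(\bone)=1\cdot\partial_{x_k}\psi_{i,t}(\bone)/1=a_{ik,t+1},
\]
by Definition~\ref{def:weights}.

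\emph{Zero column sums of $H$.} Pass to logarithmic coordinates $u_k=\log x_k$ and set $g(u)=\log\psi_{i,t}(e^u)$. Homogeneity becomes the translation property $g(u+s\bone)=g(u)+s$. Differentiating in $s$ gives $\sum_j\partial_{u_j}g(u)=1$, which is the first identity again. Differentiating this identity in $u_k$ gives $\sum_j\partial_{u_j}\partial_{u_k}g(u)=0$, that is, $\sum_j H^{i,t}_{jk}(x)=0$.

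This differentiation is legitimate because $\psi_{i,t}$ is $C^2$ and positive (Definition~\ref{def:agg}), so $g$ is $C^2$. Equivalently, each $e_{ij,t}$ is homogeneous of degree zero, and Euler's identity for degree zero yields the same conclusion.

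None of these steps is a real obstacle. The only point needing care is the last one: justifying that the derivatives may be exchanged and taken in log coordinates, which holds since $\psi_{i,t}$ is $C^2$ on the open orthant.
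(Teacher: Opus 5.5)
Your proof is correct and follows the paper's own argument. You apply Euler's identity for the degree-one homogeneous $\psi_{i,t}$ and divide by $\psi_{i,t}(x)$, evaluate at $\bone$ using the normalization, and differentiate the row-sum identity with respect to $\log x_k$ to get $\sum_j H^{i,t}_{jk}(x)=0$. You also state explicitly the non-negativity of $e_{ik,t}$ needed for row-stochasticity, which the paper leaves implicit at this point.
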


\begin{proof}
Euler's identity for the degree-one homogeneous function $\psi_{i,t}$ reads $\sum_kx_k\partial_{x_k}\psi_{i,t}(x)=\psi_{i,t}(x)$; dividing by $\psi_{i,t}(x)>0$ we obtain that the sum equals $1$. Differentiating this expression with respect to $\log(x_k)$ immediately gives $\sum_j H^{i,t}_{jk}(x)=0$. Evaluating the elasticity at $x=\bone$ and using $\psi_{i,t}(\bone)=1$ from Hypothesis~\ref{hyp:H1} gives $e_{ik,t}(\bone)=\partial_{x_k}\psi_{i,t}(\bone)=a_{ik,t+1}$ by Definition~\ref{def:weights}.
\end{proof}

\begin{lemma}\label{lem:comparison}
Almost surely, for all $i,k,t$,

\begin{equation}\label{eq:comparison}
    C^2\,a_{ik,t+1}\;\le\;e_{ik,t}\;\le\;C^{-2}\,a_{ik,t+1},
\end{equation}

where $C$ is the random constant of Proposition~\textup{\ref{prop:smg}}.
\end{lemma}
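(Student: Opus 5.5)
The plan is to compare the elasticity $e_{ik,t}(x)$ at $x=\mu_t(\tstar)$ with its value at $\bone$. By Lemma~\ref{lem:elast}, that value is exactly $a_{ik,t+1}$. We control the variation of $\log e_{ik,t}$ along the log-segment joining $\bone$ to $\mu_t(\tstar)$, using the bounded-curvature condition (iv) of Hypothesis~\ref{hyp:H1}. The key input is Proposition~\ref{prop:smg}: almost surely $C\le\mu_{jt}(\tstar)\le1$ for all $j,t$, so every coordinate of $\log\mu_t(\tstar)$ lies in $[\log C,0]$.

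First dispose of the trivial case. If $k\notin\cN_i$, then $\partial_{x_k}\psi_{i,t}\equiv0$, so $e_{ik,t}\equiv0=a_{ik,t+1}$ and \eqref{eq:comparison} reads $0\le0\le0$. For $k\in\cN_i$, monotonicity gives $e_{ik,t}(x)>0$ everywhere, so $\log e_{ik,t}$ is a $C^1$ function of $z=\log x$. Its partial derivatives are $\partial\log e_{ik,t}/\partial z_l=H^{i,t}_{kl}/e_{ik,t}$.

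\textbf{Main step: the uniform derivative bound.} I claim that $|H^{i,t}_{kl}(x)|\le e_{ik,t}(x)\,e_{il,t}(x)$ for \emph{all} $l$, including $l=k$.
\begin{itemize}
\item For $l\neq k$ this is exactly \eqref{eq:constraint}. It holds trivially when $l\notin\cN_i$, since then both sides vanish.
\item For $l=k$, Lemma~\ref{lem:elast} gives $\sum_jH^{i,t}_{jk}=0$. Hence $|H^{i,t}_{kk}|\le\sum_{j\neq k}|H^{i,t}_{jk}|\le e_{ik,t}\sum_{j\neq k}e_{ij,t}=e_{ik,t}(1-e_{ik,t})\le e_{ik,t}\,e_{ik,t}$ whenever $e_{ik,t}\ge 1/2$. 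This is not always enough, so I would instead use the cruder bound $|H^{i,t}_{kk}|\le e_{ik,t}(1-e_{ik,t})$ directly.
\end{itemize}
Combining the two cases,
\begin{equation*}
\sum_l\bigl|\partial_{z_l}\log e_{ik,t}\bigr|\,|w_l|\le\Bigl(\sum_{l\ne k}e_{il,t}+(1-e_{ik,t})\Bigr)\max_l|w_l|\le2\max_l|w_l|
\end{equation*}
for any direction $w$, using row-stochasticity of $(e_{il,t})_l$ from Lemma~\ref{lem:elast}.

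\textbf{Integration along the segment.} Set $z(s)=s\log\mu_t(\tstar)$ for $s\in[0,1]$. Then
\begin{equation*}
\bigl|\log e_{ik,t}(\mu_t(\tstar))-\log e_{ik,t}(\bone)\bigr|\le\int_0^1\sum_l\bigl|\partial_{z_l}\log e_{ik,t}(e^{z(s)})\bigr|\,|\log\mu_{lt}(\tstar)|\,ds\le2\log(1/C).
\end{equation*}
Exponentiating and using $e_{ik,t}(\bone)=a_{ik,t+1}$ yields $C^2a_{ik,t+1}\le e_{ik,t}\le C^{-2}a_{ik,t+1}$, which is \eqref{eq:comparison}. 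This holds on the almost-sure event of Proposition~\ref{prop:smg}, simultaneously for all $i,k,t$.

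The only delicate point is the diagonal term $H^{i,t}_{kk}$, which is not covered by \eqref{eq:constraint}. Handling it through the zero-column-sum identity of Lemma~\ref{lem:elast} is what produces the factor $2$, and hence the exponent $2$ on $C$. Everything else is a direct mean-value estimate. Note that $C\le1$ is automatic, because $\mu_{it}(\tstar)\le1$.
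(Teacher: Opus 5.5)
Your proof is correct and is essentially the paper's argument: the same log-segment from $\bone$ to $\mu_t(\tstar)$, the off-diagonal terms bounded by the curvature condition, the diagonal term controlled through the zero column-sum identity to give $|H^{i,t}_{kk}|\le e_{ik,t}(1-e_{ik,t})$, a total gradient bound $2(1-e_{ik,t})\le 2$, and integration to obtain $2\log(1/C)$ (the paper uses the mean value theorem instead of integrating). Drop the opening claim that $|H^{i,t}_{kk}|\le e_{ik,t}^2$, which is false in general (the linear rule has $H^{i,t}_{kk}=e_{ik,t}(1-e_{ik,t})$), and state the diagonal bound $e_{ik,t}(1-e_{ik,t})$ directly, as you in fact end up doing.
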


\begin{proof}
Fix $i,k,t$ and abbreviate $e(x):=e_{ik,t}(x)$. Wherever $e(x)>0$, $\log e$ is continuously differentiable with

\begin{equation*}
    \frac{\partial\log e}{\partial\log x_j}(x)=\frac{H^{i,t}_{jk}(x)}{e(x)} .
\end{equation*}

For $j\neq k$ the curvature bound of Hypothesis~\ref{hyp:H1} gives $|H^{i,t}_{jk}|/e\le e_{ij,t}(x)$; for $j=k$, Lemma~\ref{lem:elast} gives $H^{i,t}_{kk}=-\sum_{j\neq k}H^{i,t}_{jk}$, so $|H^{i,t}_{kk}|\le e(x)\sum_{j\neq k}e_{ij,t}(x)=e(x)(1-e(x))$, again by Lemma~\ref{lem:elast}. Summing,

\begin{equation}\label{eq:gradbound}
    \sum_{j=1}^n\left|\frac{\partial\log e}{\partial\log x_j}(x)\right|
    \;\le\;\sum_{j\neq k}e_{ij,t}(x)+\bigl(1-e(x)\bigr)
    =2\bigl(1-e(x)\bigr)\;\le\;2 .
\end{equation}

Work on the almost-sure event $\{\mu_t(\tstar)\in[C,1]^n\}$ given by Proposition~\ref{prop:smg}, and join $\bone$ to $\mu_t(\tstar)$ by the log-segment

\begin{equation*}
    x(s):=\bigl(\mu_{jt}(\tstar)^s\bigr)_{j=1}^n,\qquad s\in[0,1],
\end{equation*}

so that $x(0)=\bone$, $x(1)=\mu_t(\tstar)$, and $\frac{d}{ds}\log x_j(s)=\log\mu_{jt}(\tstar)$ with $|\log\mu_{jt}(\tstar)|\le\log(1/C)$. Wherever $e(x(s))>0$, the function $f(s):=\log e(x(s))$ satisfies, by \eqref{eq:gradbound},

\begin{equation}\label{eq:fprime}
    |f'(s)|=\Bigl|\sum_j\frac{\partial\log e}{\partial\log x_j}\bigl(x(s)\bigr)\log\mu_{jt}(\tstar)\Bigr|
    \;\le\;2\log(1/C) .
\end{equation}

We now distinguish two cases.

\begin{itemize}
    \item $a_{ik,t+1}>0$ (meaning $k\in\mathcal{N}_i$). In this case $\log(a_{i,k,t+1})$ is well defined and $e(x(s))>0$ by strict monotonicity. The mean value Theorem gives the existence of $s\in(0,1)$ such that, by \eqref{eq:fprime},
    
    \begin{equation*}
        |f(1)-f(0)| \le (1-0)\times|f'(s)| \le2\log(1/C).
    \end{equation*}
    
    And exponentiating allows us to conclude in this case.
    
    \item $a_{ik,t+1}=0 $ (meaning $k \notin \mathcal{N}_i$), in this case, $e_{ik,t}=0$ since $\psi_{i,t}$ does not depend on $x_k$ by Hypothesis \ref{hyp:H1} and the inequalities simply read $0\le0\le0$ which is true.
\end{itemize}
\end{proof}

\begin{corollary}[Support equivalence]\label{cor:support}
Almost surely, for all $i,k,t$: $a_{ik,t+1}=0\iff e_{ik,t}=0$ by Hypothesis \ref{hyp:H2}. In particular, at every $t$ the matrix $E_t$ has the same zero pattern as the weights matrix $A_{t+1}$.
\end{corollary}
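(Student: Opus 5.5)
This is a direct consequence of Lemma~\ref{lem:comparison}, so the plan is simply to read the two-sided bound \eqref{eq:comparison} on the almost-sure event where Proposition~\ref{prop:smg} holds. On that event the random constant $C$ is strictly positive and finite, so $C^2>0$ and $C^{-2}<\infty$. Fix $i,k,t$; the equivalence is then proved in two directions.

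\emph{First direction.} If $a_{ik,t+1}=0$, the upper bound $e_{ik,t}\le C^{-2}a_{ik,t+1}=0$ forces $e_{ik,t}=0$. Here we use that $e_{ik,t}\ge 0$, which holds by the monotonicity in Hypothesis~\ref{hyp:H1}(i). Alternatively, this is exactly the second case in the proof of Lemma~\ref{lem:comparison}: $\psi_{i,t}$ does not depend on $x_k$ when $k\notin\cN_i$.

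\emph{Second direction.} If $a_{ik,t+1}>0$, the lower bound gives $e_{ik,t}\ge C^2a_{ik,t+1}>0$.

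To connect the zero pattern with the graph, we invoke Hypothesis~\ref{hyp:H2} together with Hypothesis~\ref{hyp:H1}(i). For $j\in\cN_i\setminus\{i\}$ we have $a_{ij,t+1}\ge\aund\lambda_t>0$. For the diagonal, $a_{ii,t+1}\ge 1-\abar\lambda_t>0$. Outside $\cN_i$ we have $a_{ij,t+1}=0$. Hence the support of $A_{t+1}$ is exactly the neighborhood structure of $\cG$ (with self-loops), at every $t$.

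Combining this with the equivalence, $E_t$ has the same zero pattern as $A_{t+1}$ for all $t$ simultaneously. This holds on a single almost-sure event, because $C$ does not depend on $t$.

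There is no real obstacle. The only point needing care is that the quantifier ``for all $i,k,t$'' sits inside one almost-sure event, namely the event of Proposition~\ref{prop:smg}, rather than a countable intersection of $t$-dependent events. Even if it were the latter, the intersection would still have full measure.
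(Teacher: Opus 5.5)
Your proof is correct and follows the paper's own argument: both directions come from the two-sided bound of Lemma~\ref{lem:comparison} on the single almost-sure event where $C>0$, and your direction ``$a_{ik,t+1}>0\Rightarrow e_{ik,t}\ge C^2a_{ik,t+1}>0$'' is just the contrapositive of the paper's ``$e_{ik,t}=0\Rightarrow C^2a_{ik,t+1}\le 0$''. Your extra step identifying the support of $A_{t+1}$ with the neighborhoods $\cN_i$ via Hypotheses~\ref{hyp:H1}(i) and~\ref{hyp:H2} is also correct, but the statement does not need it.
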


\begin{proof}
By Lemma~\ref{lem:comparison}: if $a_{ik,t+1}=0$ then $0\le e_{ik,t}\le C^{-2}\cdot0=0$; if $e_{ik,t}=0$ then $C^2a_{ik,t+1}\le0$, and since $C>0$ and $a_{ik,t+1}\ge0$ this forces $a_{ik,t+1}=0$.
\end{proof}

Let us now fix $\that\in\Theta$ different from the true state $\theta^*$, and define

\begin{equation*}
    r_{it}:=\frac{\mu_{it}(\that)}{\mu_{it}(\tstar)}\;\ge\;0,
    \qquad
    D:=\{i\in\cN:\that\in\Theta_i\},
    \qquad
    D^c:=\cN\setminus D .
\end{equation*}

By the identifiability part of Hypothesis~\ref{hyp:H3}, $D\neq\emptyset$. Let

\begin{equation*}
    \beta_t:=\max_{i\in D}r_{it},
    \qquad
    u_t:=(r_{it})_{i\in D^c} .
\end{equation*}

Using Hypothesis~\ref{hyp:H2} together with Lemma~\ref{lem:comparison}, we define the following quantities:

\begin{equation}\label{eq:elastbounds}
    \begin{gathered}
        e_{ik,t}\ \ge\ \alpha_t:=C^2\,\aund\,\lambda_t \quad\text{for }(i,k)\in\cE,\ i\neq k,
        \qquad
        \sum_{k\neq i}e_{ik,t}\ \le\ K\lambda_t,\quad K:=C^{-2}\,\abar,\\
        e_{ii,t}\ \ge\ h_1:=C^2(1-\abar)>0,
        \qquad
        e_{ii,t}\ \ge\ 1-K\lambda_t .
    \end{gathered}
\end{equation} 

Note that since $t\lambda_t\to\infty$, eventually $\lambda_t\ge1/t$, so $\sum_t\lambda_t=\infty$ and hence $\sum_t\alpha_t=\infty$ $\mathbb{P}^{\theta^*}$-almost surely; also $0<\alpha_t\le\lambda_t\le1$.

\begin{lemma}\label{lem:recursion}
One has

\begin{equation}\label{eq:recursion}
    u_{t+1}\ \le\ P_t\,u_t+\beta_t\,g_t,
    \qquad
    P_t:=[e_{ik,t}]_{i,k\in D^c}\ \ge\ 0,
    \qquad
    g_{it}:=\sum_{k\in D}e_{ik,t}\ \ge\ 0 ,
\end{equation} 

and $P_t\bone=\bone-g_t$.
\end{lemma}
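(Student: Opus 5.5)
The plan is to show that, for an agent who cannot distinguish $\that$ from $\tstar$, the Bayesian step leaves the ratio $r_{it}$ unchanged. The whole inequality then reduces to a tangent-hyperplane (concavity) bound on the aggregation step, evaluated at the point $\mu_t(\tstar)$.

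\emph{Step 1: the Bayes step is neutral on $D^c$.} Fix $i\in D^c$. Then $\ell_i^{\that}=\ell_i^{\tstar}$, so by \eqref{eq:bayes} the likelihood factor and the normaliser $m_{it}(\omega_{i,t+1})$ cancel in the ratio. By \eqref{eq:interim} the factor $S_{it}$ also cancels, giving
\begin{equation*}
r_{i,t+1}=\frac{\zeta_{it}(\that)}{\zeta_{it}(\tstar)}=\frac{\psi_{i,t}\bigl(\mu_t(\that)\bigr)}{\psi_{i,t}\bigl(\mu_t(\tstar)\bigr)} .
\end{equation*}

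\emph{Step 2: tangent-plane bound at $\mu_t(\tstar)$.} By Hypothesis~\ref{hyp:H1}(iv), $\psi_{i,t}$ is concave; this is the fact quoted in the footnote after Hypothesis~\ref{hyp:H1} and used via Lemma~A.9 of \citet{molavi2018theory} in Lemma~\ref{lem:delta}. Concavity combined with degree-one homogeneity and Euler's identity (Lemma~\ref{lem:elast}) gives, for all $y,z>0$,
\begin{equation*}
\psi_{i,t}(y)\le \psi_{i,t}(z)+\nabla\psi_{i,t}(z)\cdot(y-z)=\sum_k \partial_{x_k}\psi_{i,t}(z)\,y_k .
\end{equation*}
Here I take $y=\mu_t(\that)$ and $z=\mu_t(\tstar)$. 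Dividing by $\psi_{i,t}(z)>0$ and writing $y_k=r_{kt}z_k$ yields
\begin{equation*}
r_{i,t+1}\le\sum_{k}\frac{z_k\,\partial_{x_k}\psi_{i,t}(z)}{\psi_{i,t}(z)}\,r_{kt}=\sum_{k\in\cN}e_{ik,t}\,r_{kt},
\end{equation*}
where $e_{ik,t}$ is the elasticity evaluated at $\mu_t(\tstar)$, exactly as defined before Lemma~\ref{lem:elast}.

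\emph{Step 3: split over $D$ and $D^c$.} I split the sum into $k\in D^c$, which contributes $(P_tu_t)_i$, and $k\in D$. For $k\in D$ I use $r_{kt}\le\beta_t$ together with $e_{ik,t}\ge0$ (Hypothesis~\ref{hyp:H1}(i)), which bounds that part by $\beta_t g_{it}$. This gives \eqref{eq:recursion}, and non-negativity of $P_t$ and $g_t$ is immediate. The identity $P_t\bone=\bone-g_t$ follows from row-stochasticity $\sum_k e_{ik,t}=1$ (Lemma~\ref{lem:elast}), splitting the index set the same way.

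The only step that is not bookkeeping is Step 2. Concavity of $\psi_{i,t}$ on $(0,\infty)^n$ must be available, which I take from the cited Lemma~A.9 under the curvature bound. The tangent plane must also be taken at the random point $\mu_t(\tstar)$ rather than at $\bone$, so that the weights appearing are exactly the elasticities $e_{ik,t}$ that enter $P_t$ and $g_t$.
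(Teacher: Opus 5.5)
Your proposal is correct and follows the paper's proof essentially step for step. As in the paper, you cancel the likelihood factor for $i\in D^c$, apply the concavity-plus-Euler tangent-plane bound at $\mu_t(\tstar)$ to get $r_{i,t+1}\le\sum_k e_{ik,t}r_{kt}$, then split over $D$ and $D^c$ (bounding $r_{kt}\le\beta_t$ on $D$) and use row-stochasticity of $E_t$ for $P_t\mathbf{1}=\mathbf{1}-g_t$.
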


\begin{proof}
We divide the Bayes update \eqref{eq:bayes} for $\that$ by the Bayes update for $\tstar$ to obtain

\begin{equation}\label{eq:ratioupdate}
    r_{i,t+1}
    =L_{i,t+1}\cdot\frac{\zeta_{it}(\that)}{\zeta_{it}(\tstar)},
    \qquad
    L_{i,t+1}:=\frac{\ell^{\that}_i(\omega_{i,t+1})}{\ell^{\tstar}_i(\omega_{i,t+1})} .
\end{equation}

For $i\in D^c$ one has $L_{i,t+1}=1$ by definition of $D^c$ and Hypothesis~\ref{hyp:H3}, therefore

\begin{equation*}
    r_{i,t+1}=\frac{\zeta_{it}(\that)}{\zeta_{it}(\tstar)}
    =\frac{\psi_{i,t}\bigl(\mu_t(\that)\bigr)}{\psi_{i,t}\bigl(\mu_t(\tstar)\bigr)} .
\end{equation*}

Using the concavity and the degree-one homogeneity of $\psi_{i,t}$, together with Definition~\ref{def:curv}, we obtain

\begin{equation*}
    r_{i,t+1}
    \;\le\;\sum_{k=1}^n\frac{\mu_{kt}(\tstar)\,\partial_{x_k}\psi_{i,t}\bigl(\mu_t(\tstar)\bigr)}{\psi_{i,t}\bigl(\mu_t(\tstar)\bigr)}\cdot\frac{\mu_{kt}(\that)}{\mu_{kt}(\tstar)}
    =\sum_{k=1}^n e_{ik,t}\,r_{kt} .
\end{equation*}

We separate the sum and bound $r_{kt}$ from above by $\beta_t$ for $k\in D$, we obtain the first assertion. Since $E_t$ is row-stochastic by Lemma~\ref{lem:elast}, we immediately get $P_t\bone=\bone-g_t$.
\end{proof}

\begin{lemma}\label{lem:telescope}
Let $\Phi(t,\tau):=P_{t-1}P_{t-2}\cdots P_{\tau}$ (with $\Phi(t,t):=I$). For $\tau\le t$,

\begin{equation*}
    \sum_{s=\tau}^{t-1}\Phi(t,s+1)\,\beta_s\,g_s\;\le\;\Bigl(\sup_{s\ge\tau}\beta_s\Bigr)\,\bone .
\end{equation*}
\end{lemma}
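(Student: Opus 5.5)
The plan is to exploit the identity $P_s\bone=\bone-g_s$ from Lemma~\ref{lem:recursion} to write each $g_s$ as a difference, and then telescope. First I bound every coefficient by its supremum. Set $M:=\sup_{s\ge\tau}\beta_s$; if $M=\infty$ there is nothing to prove. The entries of $\Phi(t,s+1)$ are products of non-negative matrices, hence non-negative, and $g_s\ge0$. Therefore, componentwise,
\begin{equation*}
    \sum_{s=\tau}^{t-1}\Phi(t,s+1)\,\beta_s\,g_s\;\le\;M\sum_{s=\tau}^{t-1}\Phi(t,s+1)\,g_s .
\end{equation*}

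Next I substitute $g_s=\bone-P_s\bone$. Since $\Phi(t,s+1)P_s=\Phi(t,s)$ by the definition of $\Phi$, each summand becomes
\begin{equation*}
    \Phi(t,s+1)\,g_s=\Phi(t,s+1)\bone-\Phi(t,s)\bone .
\end{equation*}
Summing over $s$ from $\tau$ to $t-1$ telescopes to $\Phi(t,t)\bone-\Phi(t,\tau)\bone=\bone-\Phi(t,\tau)\bone$.

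The last step is to note that $\Phi(t,\tau)\bone\ge0$, because $\Phi(t,\tau)$ is non-negative. Hence the telescoped sum is at most $\bone$. Multiplying by $M\ge0$ gives the claim.

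There is no real obstacle here. The only points needing care are the order of the product in $\Phi$, since $P_s$ multiplies on the right so that $\Phi(t,s+1)P_s=\Phi(t,s)$, and the degenerate case $\tau=t$, where the sum is empty and the bound is trivial.
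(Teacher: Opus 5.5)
Your proof is correct and follows essentially the same route as the paper. You substitute $g_s=\bone-P_s\bone$, use $\Phi(t,s+1)P_s=\Phi(t,s)$ to telescope to $\bone-\Phi(t,\tau)\bone\le\bone$, and pull out $\sup_{s\ge\tau}\beta_s$ using the non-negativity of each $\Phi(t,s+1)g_s$. The only differences are cosmetic: you bound $\beta_s$ before telescoping rather than after, and you note explicitly that $M\ge0$ (since $\beta_s\ge0$), which the paper leaves implicit.
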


\begin{proof}
Because $g_s=\bone-P_s\bone$ by Lemma~\ref{lem:recursion}, and by the regrouping $\Phi(t,s+1)P_s=\Phi(t,s)$,

\begin{equation*}
    \Phi(t,s+1)\,g_s=\Phi(t,s+1)\bone-\Phi(t,s)\bone=\varphi_{s+1}-\varphi_s,
    \qquad
    \varphi_s:=\Phi(t,s)\bone .
\end{equation*}

Moreover, each $\Phi(t,s+1)$ has positive entries and $g_s\ge0$, so $\varphi_{s+1}-\varphi_s$ is componentwise non-negative and the sum telescopes:

\begin{equation*}
    \sum_{s=\tau}^{t-1}\bigl(\varphi_{s+1}-\varphi_s\bigr)=\varphi_t-\varphi_\tau=\bone-\Phi(t,\tau)\bone\;\le\;\bone .
\end{equation*}

Therefore $\sum_s\beta_s(\varphi_{s+1}-\varphi_s)\le(\sup_{s\ge\tau}\beta_s)\sum_s(\varphi_{s+1}-\varphi_s)\le(\sup_{s\ge\tau}\beta_s)\bone$.
\end{proof}

\begin{lemma}\label{lem:contraction}
Assume $D^c\neq\emptyset$. Then, almost surely, for every fixed $\tau$,

\begin{equation*}
    \norminf{\Phi(t,\tau)}\;\xrightarrow[t\to\infty]{}\;0 ,
\end{equation*}

where $\norminf{\cdot}$ denotes the maximum row sum.
\end{lemma}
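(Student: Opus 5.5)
The plan is to show that, on the almost-sure event of Proposition~\ref{prop:smg}, the substochastic products $\Phi(t,\tau)$ lose a fixed fraction $\delta>0$ of their row sums over infinitely many disjoint time windows. I will read $P_t$ as the transition kernel of a sub-Markov chain on $D^c$ that is killed when it jumps into $D$. Since $P_t\bone=\bone-g_t$ (Lemma~\ref{lem:recursion}), $P_t$ is non-negative with row sums at most $1$. Hence $\norminf{\Phi(t,\tau)}$ is non-increasing in $t$. Moreover, $(\Phi(t,\tau)\bone)_i$ is the probability that the chain started at $i$ at time $\tau$ has not been absorbed in $D$ by time $t$. It therefore suffices to find disjoint windows $[s_k,s_{k+1})$ with $s_0\ge\tau$ and a constant $\delta>0$, depending only on $C,n,\aund,\abar$, such that
\begin{equation*}
\norminf{\Phi(s_{k+1},s_k)}\le 1-\delta \quad\text{for every } k .
\end{equation*}
Submultiplicativity of $\norminf{\cdot}$ then gives $\norminf{\Phi(s_K,\tau)}\le(1-\delta)^K\to0$.

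\emph{Choice of windows.} Since $\sum_t\lambda_t=\infty$ and $\lambda_t\le1$, I can cut $[\tau,\infty)$ into consecutive finite windows $W_k=[s_k,s_{k+1})$ with
\begin{equation*}
1\le\sum_{s\in W_k}\lambda_s\le2 .
\end{equation*}
Here $m:=|D^c|\le n-1$. By strong connectivity of $\cG$ and $D\neq\emptyset$, every $i\in D^c$ has a directed path $i=q_0\to q_1\to\cdots\to q_p\in D$ with $p\le m$ and $q_0,\dots,q_{p-1}\in D^c$. By Corollary~\ref{cor:support} and \eqref{eq:elastbounds}, each edge of this path carries weight $e_{q_rq_{r+1},s}\ge\alpha_s=C^2\aund\lambda_s$.

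\emph{Escape probability in one window.} I lower bound the killing probability by the probability of the following event: the chain waits on the diagonal, and it jumps along the path exactly at times $s_{(1)}<\dots<s_{(p)}$ in $W_k$. The diagonal retention is controlled as follows. From $e_{ii,s}\ge\max\{h_1,1-K\lambda_s\}$, and since $\lambda_s\le1$, there is a constant $c'$ (depending on $h_1,K$) with
\begin{equation*}
\log e_{ii,s}\ge-c'\lambda_s .
\end{equation*}
Hence every product of diagonal entries over $W_k$ is at least $e^{-2c'}$. Summing over increasing $p$-tuples of jump times gives
\begin{equation*}
1-(\Phi(s_{k+1},s_k)\bone)_i\;\ge\;e^{-2c'}(C^2\aund)^p\sum_{s_{(1)}<\cdots<s_{(p)}}\prod_{r}\lambda_{s_{(r)}} .
\end{equation*}

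\emph{Main obstacle.} The hard step is lower bounding this elementary symmetric sum uniformly over windows: there is no uniform lower bound on the individual $\lambda_s$, and a single-time estimate would only give $\lambda_s^p$, which need not sum to infinity. I would first try the bound
\begin{equation*}
\sum_{s_{(1)}<\cdots<s_{(p)}}\prod_r\lambda_{s_{(r)}}\;\ge\;\frac{1}{p!}\Bigl(\textstyle\sum_{W_k}\lambda_s\Bigr)^p-\binom{p}{2}\Bigl(\textstyle\sum_{W_k}\lambda_s^2\Bigr)\Bigl(\textstyle\sum_{W_k}\lambda_s\Bigr)^{p-2} .
\end{equation*}
This works only if the $\lambda_s$ are small, which is exactly what $\lambda_t\to0$ or $t\lambda_t\to\infty$ suggest. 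If that fails, I would change the window rule: take windows with $\sum\lambda_s\ge p$, and split each window into $p$ consecutive sub-windows each of $\lambda$-mass at least $1$. I then require one jump in each sub-window. That sum factorises as $\prod_r\sum_{\text{sub}_r}\lambda_s\ge1$. The diagonal factor becomes $e^{-c'(2p+1)}$, still a constant, since the window mass is at most $p+1$. With this rule,
\begin{equation*}
\delta:=e^{-c'(2m+1)}(C^2\aund)^m>0 .
\end{equation*}
This $\delta$ is uniform in $i$ and $k$, though random through $C$; that is harmless because the statement is almost sure. This closes the argument.
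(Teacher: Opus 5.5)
Your final argument, the fallback window rule, is essentially the paper's proof. Both cut time into blocks of fixed listening mass, split each block into consecutive sub-blocks of mass at least one, force one jump per sub-block along a shortest path into $D$, and bound diagonal retention uniformly to get a per-block contraction $1-\delta$. The differences are cosmetic: you measure windows in $\lambda$ rather than $\alpha$, use the single bound $\log e_{ii,s}\ge -c'\lambda_s$ instead of the small/large-step split, and describe the chain as starting at $\tau$, whereas the chain behind $\Phi(t,\tau)=P_{t-1}\cdots P_\tau$ starts at $i$ at time $t$ and runs backward (your order-symmetric bound is unaffected).

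The one slip is the window rule itself. A window of mass in $[p,p+1]$ need not split into $p$ consecutive pieces of mass at least $1$; for example, masses $0.6,0.6,0.9$ with $p=2$ cannot be split. Do as the paper does with mass $2(n-1)$: cut greedy sub-windows of mass in $[1,2)$ and group $m$ of them into a window of mass below $2m$. Use the same $m$ for every row, since the window must not depend on $i$. With that fix, your $\delta=e^{-c'(2m+1)}(C^2\aund)^m$ stands.
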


\begin{proof}
All matrices have positive terms, also, $\norminf{\cdot}$ is submultiplicative and one has $\norminf{P_s}\le1$ since $\sum_{k\in D^c}e_{ik,s}\le\sum_{k\in\cN}e_{ik,s}=1$ by Lemma~\ref{lem:elast}. Also,

\begin{equation*}
    \Phi(t,\tau)_{ij}=\sum_{(m_s)}\ \prod_{s=\tau}^{t-1}e_{m_{s+1}m_s,\,s},
    \qquad m_t=i,\ m_\tau=j,
\end{equation*}

the sum is defined over paths $m_s$ having the following form: they start at $j$, finish at $i$, and remain in $D^c$ at each step, but in the reversed graph where all the directions are swapped (because in the matrix product defining $\Phi(t,\tau)$ the matrices with latest times are in the left of the product), or, equivalently, we say that it starts at $i$ and finishes at $j$ but running backward in time. Since $\sum_s\alpha_s=\infty$ $\mathbb{P}^{\tstar}$-almost surely and $0<\alpha_s<1$, one can partition $[\tau,\infty)$ into consecutive blocks $J_0,J_1,\dots$ with

\begin{equation*}
    \sum_{s\in J_q}\alpha_s\;\in\;\bigl[\,2(n-1),\,2(n-1)+1\,\bigr]
\end{equation*}

and there are infinitely many such blocks. Now, let us show that $\bigl\lVert\prod_{s\in J}P_s\bigr\rVert_\infty\le1-p_\star$ for each of these blocks, with $p_\star>0$ defined below, independent of the block. Fix a block $J$. Since each $\alpha_s<1$ and $\sum_{J}\alpha_s\ge2(n-1)$, a segment of $J$ can be partitioned into $n-1$ consecutive sub-blocks $I_1,\dots,I_{n-1}$ verifying

\begin{equation*}
    \sum_{s\in I_\ell}\alpha_s\in[1,2)\qquad(\ell=1,\dots,n-1)
\end{equation*}

(at most, after $m\le n-2$ sub-blocks $2m$ of the budget has been used, so at least $2$ remains in the total sum); let $W:=I_1\cup\dots\cup I_{n-1}$, with $I_1$ containing the latest times then $I_2$ the latest times of what remains, up until $I_{n-1}$ containing the earliest times. Note that $W$ depends only on the block $J$, and not on any row index. By submultiplicativity and $\norminf{P_s}\le1$, $\bigl\lVert\prod_{s\in J}P_s\bigr\rVert_\infty\le\bigl\lVert\prod_{s\in W}P_s\bigr\rVert_\infty$, so it is sufficient to bound the norm of $\prod_{s\in W}P_s$. Now fix a row index $i\in D^c$. By the strong connectivity of the graph and the fact that $D\neq\emptyset$, choose a shortest directed path $i=p_0\to p_1\to\cdots\to p_L=q\in D$, so that $L=\mathrm{dist}_{\cG}(i,D)\le n-1$, $p_0,\dots,p_{L-1}\in D^c$, and $e_{p_{\ell-1}p_\ell,s}\ge\alpha_s$ by \eqref{eq:elastbounds}.

Since $\prod_{s\in W}E_s$ is row-stochastic by Lemma~\ref{lem:elast}, the total weight of all walks starting from $i$ (in the latest time and moving backward in time) over the horizon $W$ equals one. Splitting the walks according to whether or not they stay in $D^c$ or visit $D$ at least once,

\begin{equation}\label{eq:Qsplit}
    \sum_{j\in D^c}\Bigl(\prod_{s\in W}P_s\Bigr)_{ij}=1-Q_i,
    \qquad
    Q_i:=\ \text{total weight of walks from $i$ that visit $D$}\ \ge0 .
\end{equation}

Restrict $Q_i$ to the following walks: within each sub-block $I_\ell$ with $\ell\le L$, the walk stays at $p_{\ell-1}$ until a chosen period $s_\ell\in I_\ell$, steps to $p_\ell$ at $s_\ell$, and stays at $p_\ell$ for the remainder times of $I_\ell$; after reaching $q=p_L$ it stays at $q$ during the whole of $I_{L+1},\dots,I_{n-1}$. Distinct choices of jump times $(s_1,\dots,s_L)$ give distinct walks, each of which visits $D$; since the position at the boundaries of the sub-blocks is pinned, the total weight factorizes over the sub-blocks:

\begin{equation}\label{eq:Sfactor}
    Q_i\;\ge\;\prod_{\ell=1}^{L}S_\ell\;\prod_{m=L+1}^{n-1}\;\prod_{s\in I_m}e_{qq,s},
    \qquad
    S_\ell:=\sum_{s\in I_\ell}
    \Bigl(\prod_{\substack{s'\in I_\ell\\ s'>s}}e_{p_{\ell-1}p_{\ell-1},s'}\Bigr)
    e_{p_{\ell-1}p_\ell,s}
    \Bigl(\prod_{\substack{s'\in I_\ell\\ s'<s}}e_{p_\ell p_\ell,s'}\Bigr).
\end{equation}

Put $\sigma_s:=\max\{h_1,\,1-K\lambda_s\}\le1$. By \eqref{eq:elastbounds}, $e_{jj,s}\ge\sigma_s$ (for all $j\in\mathcal{N}$, in particular for $q\in D$). Therefore, since $e_{p_{\ell-1}p_\ell,s}\ge\alpha_s$ and $\sum_{s\in I_\ell}\alpha_s\ge1$,

\begin{equation}\label{eq:Sbound}
    S_\ell\;\ge\;\Bigl(\prod_{s\in I_\ell}\sigma_s\Bigr)\sum_{s\in I_\ell}\alpha_s
    \;\ge\;\prod_{s\in I_\ell}\sigma_s ,
    \qquad
    \prod_{s\in I_m}e_{qq,s}\;\ge\;\prod_{s\in I_m}\sigma_s ,
\end{equation}

using $\sigma_s\le1$ to obtain the missing factor $\sigma_{s}$ at the jump time. We next obtain a lower bound on $\prod_{s\in I}\sigma_s$, for any of the sub-blocks $I=I_1,\dots,I_{n-1}$. To do so, we split $I$ into small steps $\{s:K\lambda_s\le\frac12\}$ and large steps $\{s:K\lambda_s>\frac12\}$.

\begin{itemize}
    \item On a small step, $\sigma_s\ge1-K\lambda_s\ge e^{-2K\lambda_s}$ (since $1-x\ge e^{-2x}$ for $x\in[0,\tfrac12]$), so that, using $\lambda_s=\alpha_s/(C^2\aund)$ and $\sum_{I}\alpha_s<2$,
    
    \begin{equation*}
        \prod_{\text{small}}\sigma_s
        \;\ge\;\exp\Bigl(-2K\sum_{s\in I}\lambda_s\Bigr)
        \;\ge\;\exp\Bigl(-\frac{4K}{C^2\,\aund}\Bigr)
        =\exp\Bigl(-\frac{4\abar}{C^4\,\aund}\Bigr)=:h_0>0 .
    \end{equation*}
    
    \item A large step has $\lambda_s>1/(2K)$, so it carries $\alpha_s=C^2\aund\lambda_s>C^2\aund/(2K)=C^4\aund/(2\abar)=:\mu_0$. Since $\sum_{I}\alpha_s<2$, there are necessarily fewer than $2/\mu_0=4\abar/(C^4\aund)=:N_0$ of them, and $\sigma_s\ge h_1$ on each, so $\prod_{\text{large}}\sigma_s\ge h_1^{N_0}$ (as $h_1\le1$).
\end{itemize}

Thus $\prod_{s\in I}\sigma_s\ge h_0h_1^{N_0}=:h_\star\in(0,1]$, a constant depending only on $(C,\aund,\abar)$ (recall that $C>0$ $\mathbb{P}^{\tstar}$-almost surely by Proposition~\ref{prop:smg}, and is defined on the sequence of signals), and by \eqref{eq:Sbound} each of the $n-1$ factors in \eqref{eq:Sfactor} is at least $h_\star$. Therefore

\begin{equation*}
    Q_i\;\ge\;h_\star^{\,n-1}=:p_\star>0 ,
\end{equation*}

uniformly in $i\in D^c$ and in the window. Since $W$ is the same for every row, \eqref{eq:Qsplit} gives $\bigl\lVert\prod_{s\in W}P_s\bigr\rVert_\infty=\max_{i\in D^c}(1-Q_i)\le1-p_\star$, and therefore $\bigl\lVert\prod_{s\in J}P_s\bigr\rVert_\infty\le1-p_\star$. Note that $p_\star$ is a random variable, this argument is used pathwise, $p_\star>0$ $\mathbb{P}^{\tstar}$-almost surely, depending on $C>0$ $\mathbb{P}^{\tstar}$-almost surely.

\emph{Conclusion.} Group the factors of $\Phi(t,\tau)$ into the $K(t)$ complete windows it spans (any leftover factors form products having norm at most one). By submultiplicativity, $\norminf{\Phi(t,\tau)}\le(1-p_\star)^{K(t)}$, and $K(t)\to\infty$ as $t\to\infty$, so $\norminf{\Phi(t,\tau)}\to0$.
\end{proof}

We can now prove Theorem~\ref{thm:main}.

\begin{proof}[Proof of Theorem~\ref{thm:main}]
Suppose first that $D^c$ is non empty. Iterating the recurrence of Lemma~\ref{lem:recursion} and using Lemma~\ref{lem:telescope}, one has, for all $\tau$ and all $t\ge\tau$,

\begin{equation*}
    u_t\;\le\;\Phi(t,\tau)\,u_\tau+\sum_{s=\tau}^{t-1}\Phi(t,s+1)\,\beta_s\,g_s
    \;\le\;\Phi(t,\tau)\,u_\tau+\Bigl(\sup_{s\ge\tau}\beta_s\Bigr)\bone .
\end{equation*}

Fix $\epsilon>0$. Since $\beta_s$ converges to $0$ $\mathbb{P}^{\tstar}$-almost surely by Proposition~\ref{prop:vanish} together with Proposition~\ref{prop:smg} and the Bayes' rule, it implies than $r_{i,t}$ converges to 0 for all $i\in D$. Consider $\tau_{\epsilon}$ such that $\sup_{s\ge\tau_{\epsilon}}\beta_s\le\epsilon$. Moreover, since $C>0$ $\mathbb{P}^{\tstar}$-almost surely by Proposition~\ref{prop:smg}, we also know from Lemma~\ref{lem:contraction} that, if $D^c$ is not empty,

\begin{equation*}
    \norminf{\Phi(t,\tau)u_\tau}\le\frac{\norminf{\Phi(t,\tau)}}{C}\xrightarrow[t\to\infty]{}0 .
\end{equation*}

Therefore $\limsup_t\norminf{u_t}\le\epsilon$; since this holds for all $\epsilon>0$, it implies that

\begin{equation*}
    \lim_t\norminf{u_t}=0,\qquad\text{and then}\qquad \lim_t u_t=0 .
\end{equation*}

This proves the Theorem, because, $\forall i\in D^c$ $0<r_{i,t}\le\norminf{u_t}$ with $r_{i,t}=\frac{\mu_{it}(\that)}{\mu_{it}(\tstar)}$ and $\mu_{i,t}(\tstar)\ge C$ by Proposition~\ref{prop:smg}, the beliefs of all agents for this false state converge to 0 $\mathbb{P}^{\tstar}$-almost surely. combined with Proposition \ref{prop:vanish} this is equivalent to Theorem~\ref{thm:main}, because $\sum_{\theta\in\Theta}\mu_{i,t}(\theta)=1$ by the definition of beliefs. If $D^c$ is empty, then all agents are capable of distinguishing the true state from $\that$ and we can therefore conclude by Proposition \ref{prop:vanish} 
\end{proof}

\section{Appendix}\label{app:two}

\subsection{Proof of Theorem~\ref{thm:linearrep}}\label{app:T2}

\begin{proof}[Proof of Theorem~\ref{thm:linearrep}]
(i) Fix a state $\that\ne\tstar$. Define $F_{i,t}(u)=\log\psi_{i,t}(e^u)$, which is $C^1$ since $\psi_{i,t}$ is $C^2$ along the segment $\{u(s):=(1-s)\log\mu_t(\that)+s\log\mu_t(\tstar):s\in[0,1]\}$ of Definition~\ref{def:logweights}. Therefore, by \eqref{eq:interim} and \eqref{eq:bayes},

\begin{equation*}
    x_{i,t+1}(\that)
    =\log\frac{\zeta_{it}(\tstar)}{\zeta_{it}(\that)}+y_{i,t+1}(\that)
    =\Bigl[F_{i,t}\bigl(\log\mu_t(\tstar)\bigr)-F_{i,t}\bigl(\log\mu_t(\that)\bigr)\Bigr]+y_{i,t+1}(\that).
\end{equation*}

We can now use the fundamental Theorem of calculus to obtain the integral form

\begin{equation*}
    F_{i,t}\bigl(u(1)\bigr)-F_{i,t}\bigl(u(0)\bigr)
    =\int_0^1\nabla F_{i,t}\bigl(u(s)\bigr)\,ds\cdot\bigl(u(1)-u(0)\bigr)
    =\sum_{j=1}^n\tilde e_{ij,t}\,x_{jt},
\end{equation*}

where the first equality comes from the chain rule $\frac{d}{ds}u(s)=u(1)-u(0)$ and the second equality uses Definition~\ref{def:logweights}. Indeed, $\partial F_{i,t}/\partial u_j$ evaluated at $u$ equals $e_{ij,t}(e^u)$ by Definition~\ref{def:curv}, and $u_j(1)-u_j(0)=x_{jt}(\that)$.

(ii) The monotonicity of $\psi_{i,t}$ assumed in Hypothesis~\ref{hyp:H1} gives $e_{ij,t}\ge0$, and Euler's identity for the degree-one homogeneous function $\psi_{i,t}$ gives $\sum_jx_j\partial_{x_j}\psi_{i,t}=\psi_{i,t}$, that is, $\sum_je_{ij,t}(x)=1$ at every $x$, as recorded in Lemma~\ref{lem:elast}; integrating over $s$ preserves both properties. Measurability holds because $u(\cdot)$ is a deterministic function of $\mu_t$.
\end{proof}

\subsection{Proof of Theorem~\ref{thm:upperbound}}\label{app:T3}.

\begin{proof}[Proof of Theorem~\ref{thm:upperbound}]
Iterating part (i) of Theorem~\ref{thm:linearrep}, we obtain, for all $\that\ne\tstar$,

\begin{equation}\label{eq:variation}
    x_t(\that)=\tilde{\Phi}(t,0)\,x_0(\that)+\sum_{\tau=1}^{t}\tilde{\Phi}(t,\tau)\,y_\tau(\that),
    \qquad
    \tilde{\Phi}(t,\tau):=\tilde{E}_{t-1}(\that)\cdots\tilde{E}_\tau(\that) .
\end{equation}

A product of row-stochastic matrices is itself row-stochastic, by part (ii) of Theorem~\ref{thm:linearrep}, therefore $\tilde\Phi(t,\tau)$ is row-stochastic, and the first term on the right-hand side is  bounded (in absolute value) by some constant $\tilde{c}$. Moreover,  $[\tilde{\Phi}(t,\tau)y_\tau(\that)]_i\le\max_jy_{j\tau}(\that)$, again because $\tilde\Phi(t,\tau)$ is row-stochastic. This leads to $x_{it}(\that)\le \tilde{c}+\sum_{\tau=1}^t\xi_\tau$ with $\xi_\tau:=\max_jy_{j\tau}(\that)$. The signals are independent and identically distributed by Hypothesis~\ref{hyp:H3}; thus, by the strong law of large numbers, $\frac1t\sum_{\tau\le t}\xi_\tau\to\bar h(\that)$, where $\bar h$ is as in Definition~\ref{def:max_inform}. Hence

\begin{equation*}
    \limsup_{t\to\infty}\frac{x_{it}(\that)}{t}\;\le\;\bar h(\that).
\end{equation*}

Finally, since $|\log\mu_{it}(\that)|\le |x_{it}(\that)|+|\log\mu_{it}(\tstar)|$, and by Theorem~\ref{thm:main} the second term on the right-hand side converges to $0$, and the first term will be eventually positive $\mathbb{P}^{\tstar}$-almost surely, we therefore reach the conclusion.
\end{proof}

\subsection{Sufficient hypotheses}\label{app:H}

Before proving Theorems \ref{thm:lower} and \ref{thm:upper} and Proposition \ref{prop:S=R}, we present a set of constraints which are sufficient to prove these Theorems and including Hypotheses \ref{hyp:H1} to \ref{hyp:H4} and including.

\begin{hypothesis}[Rate Hypothesis]\label{hyp:R}
Consists of Hypotheses \ref{hyp:H1} to \ref{hyp:H4}, from which we remove the assumption of independence across agents from Hypothesis \ref{hyp:H3} and we replace inequality \ref{eq:constraint} by the concavity of $\psi_{i,t}$ for all $i\in \mathcal{N}$. Moreover, we assume $\lim_{t\to\infty}\mu_{it}(\theta^*)=1$ $\Pstar\text{-almost surely, for every } i\in\cN$
\end{hypothesis}

\begin{remark}
    Because of Theorem $\ref{thm:main}$ and Lemma A.9 from \cite{molavi2018theory}, Hypothesis $\ref{hyp:R}$ is implied by hypotheses \ref{hyp:H1} to \ref{hyp:H4}. This makes the results of this Section stronger. In particular, we only need convergence on the true state and we remove the independence of signals across agents assumption. Moreover, it would actually still works if the aggregators were themselves stochastic.
\end{remark}

\subsection{Proof of Theorem~\ref{thm:lower}}\label{app:T4}.

\begin{proof}[Proof of Theorem~\ref{thm:lower}]
Fix $c>0$, $\that$ and an agent $i\in\mathcal{N}$ let $j^\star\in R_c(i)$, with $h_j$ as in Definition~\ref{def:max_inform}. Let $r_{it}:=\frac{\mu_{it}(\that)}{\mu_{it}(\tstar)}\ge0$. Then, by \eqref{eq:interim} and \eqref{eq:bayes},

\begin{equation*}
    r_{i,t+1}=L_{i,t+1}\,\frac{\psi_{i,t}\bigl(\mu_t(\that)\bigr)}{\psi_{i,t}\bigl(\mu_t(\tstar)\bigr)},
    \qquad
    L_{i,t+1}:=\frac{\ell^{\that}_i(\omega_{i,t+1})}{\ell^{\tstar}_i(\omega_{i,t+1})} .
\end{equation*}

Applying the definition of $R_c(i)$ to $\psi_{j^\star,t}$ with $j=j^\star$ (since $j^*$ is self dominating), and using Hypothesis~\ref{hyp:H2}, we obtain

\begin{equation*}
    \psi_{j^\star,t}\bigl(\mu_t(\that)\bigr)\le\bigl(1-\bar a\lambda_t\bigr)^{-1/c}\mu_{j^\star t}(\that)
    \;=\;e^{\,\frac{1}{c}|\log(1-\bar a\lambda_t)|}\,\mu_{j^\star t}(\that).
\end{equation*}

Moreover $\mu_{j^\star t}(\that)\le r_{j^\star,t}$, so

\begin{equation*}
    r_{j^\star,t+1}\;\le\;\frac{e^{\frac{1}{c}|\log(1-\bar a\lambda_t)|}}{\psi_{j^\star,t}(\mu_t(\tstar))}\,L_{j^\star,t+1}\,r_{j^\star,t}.
\end{equation*}

Taking logarithms on both sides, one obtains

\begin{equation*}
    \log r_{j^\star,t}\le\log r_{j^\star,0}
    +\sum_{\tau=1}^{t}\log L_{j^\star,\tau}
    +\frac{1}{c}\sum_{\tau=0}^{t-1}\bigl|\log(1-\bar a\lambda_\tau)\bigr|
    +\sum_{\tau=0}^{t-1}\log\frac1{\psi_{j^\star,\tau}(\mu_\tau(\tstar))}.
\end{equation*}

We can then divide by $t$ and use the strong law of large numbers, together with the definition of $j^\star$ and Cesàro's Theorem: by Hypothesis \ref{hyp:R}, $\mu_t(\tstar)$ converges to $\bone$, and $\psi_{j^\star,t}(\mu_t(\tstar))$ is lower bounded by the minimum of the elements of $\mu_t(\theta^*)$ therefore, $\psi_{j^\star,t}(\mu_t(\tstar))$ also converges to $1$. The result then follows, $\mathbb{P}^{\tstar}$-almost surely, by taking the limit inferior (note the sign flip due to the eventual negativity of $\log(r_{i,t})$ due to the convergence of $\mu_{i,t}(\tstar)$ to 1):

\begin{equation*}
    \liminf_{t\to\infty}\frac1t\bigl|\log r_{j^\star t}(\that)\bigr|\;\ge\;h_{j^{\star}}(\that)-\frac{\beta_\lambda}{c}.
\end{equation*}

We now use the strong connectivity of Hypothesis~\ref{hyp:H2} to propagate this result along the communication graph. Fix $i\neq j$ and a directed dominating path $i=q_0\to q_1\to\dots\to q_L=j^\star$. Using the definition of such a path, since $(q_l,q_{l+1})$ is $c$-dominating, one obtains

\begin{equation*}
    r_{q_\ell,t+1}\le L_{q_\ell,t+1}\,\frac{\bigl(\underline a\lambda_t\bigr)^{-1/c}}{\psi_{q_\ell,t}(\mu_t(\tstar))}\,r_{q_{\ell+1},t} .
\end{equation*}

Taking logarithms and using Hypothesis~\ref{hyp:R}, we get

\begin{equation*}
    \log(r_{q_\ell,t+1})\le\log(L_{q_\ell,t+1})-\log(\psi_{q_\ell,t}(\mu_t(\tstar)))+\frac{1}{c}\log\Bigl(\frac{1}{\underline a \lambda_t}\Bigr)+\log(r_{q_{\ell+1},t}).
\end{equation*}

Since $t\lambda_t$ tends to $+\infty$ by Hypothesis~\ref{hyp:H2}, there exists a time $T$ such that $\lambda_t\ge\frac{1}{t}$ for all $t\ge T$. Dividing the previous inequality by $t$, we therefore obtain

\begin{equation*}
    \liminf_{t\to\infty}\frac1t\bigl|\log r_{q_\ell,t}(\that)\bigr|\;\ge\;\liminf_{t\to\infty}\frac1t\bigl|\log r_{q_{\ell+1},t}(\that)\bigr| .
\end{equation*}

We prove the Theorem by iterating along the path, using the definition of $r_{i,t}$ and the fact that $\frac{\log(\mu_{i,t}(\tstar))}{t}$ converges to $0$ $\mathbb{P}^{\tstar}$-almost surely, which follows from the convergence of $\mu_{i,t}(\tstar)$ to 1. We can therefore finish the proof, taking the maximum over $j^*\in R_c(i)$ and the supremum over $c>0$.
\end{proof}

\subsection{Proof of Theorem~\ref{thm:upper}}\label{app:T5}.

\begin{proof}[proof of Theorem~\ref{thm:upper}].
Fix $c>0$ and $j\in S_c(i)$ and a $c$-dominated path $i=q_0\to\cdots\to q_L=j$. By definition, at $x=\mu_t(\that)$, together with $a_{jj,t+1}\ge1-\abar\lambda_t$ and $a\mapsto a^{1/c}$ increasing, $\psi_{j,t}(\mu_t(\that))\ge e^{-\frac1c|\log(1-\abar\lambda_t)|}\mu_{jt}(\that)$. The Bayesian update gives $\mu_{j,t+1}(\that)=\zeta_{jt}(\that)\ell^{\that}_j(\omega_{j,t+1})/m_{jt}(\omega_{j,t+1})$, and equation \ref{eq:interim} together with $S_{jt}\le1$ (obtained from the concavity of $\psi_{j,t}$) gives $\zeta_{jt}(\that)=\psi_{j,t}(\mu_t(\that))/S_{jt}\ge\psi_{j,t}(\mu_t(\that))$. Hence

\begin{equation*}
    \mu_{j,t+1}(\that)\;\ge\;e^{-\frac1c|\log(1-\abar\lambda_t)|}\,
    \frac{\ell^{\that}_j(\omega_{j,t+1})}{m_{jt}(\omega_{j,t+1})}\,\mu_{jt}(\that).
\end{equation*}

Taking logarithms, iterating, and splitting $\ell^{\that}_j/m_{jt}=(\ell^{\that}_j/\ell^{\tstar}_j)\cdot(\ell^{\tstar}_j/m_{jt})$,

\begin{equation*}
    \frac1t\log\mu_{jt}(\that)\ge\frac{\log\mu_{j0}(\that)}{t}
    -\frac1{ct}\sum_{\tau<t}\bigl|\log(1-\abar\lambda_\tau)\bigr|
    +\frac1t\sum_{\tau\le t}\log\frac{\ell^{\that}_j(\omega_{j\tau})}{\ell^{\tstar}_j(\omega_{j\tau})}
    +\frac1t\sum_{\tau\le t}\log\frac{\ell^{\tstar}_j(\omega_{j\tau})}{m_{j,\tau-1}(\omega_{j\tau})} .
\end{equation*}

The third term tends to $-h_j(\that)$ by the strong law. The fourth vanishes: by the convergence of $\mu_{k,t}(\tstar)$ to 1 for all $k\in\mathcal{N}$ and 

\begin{equation*}
    \min_{k\in\cN}\mu_{kt}(\tstar)\;\le\;\psi_{j,t}\bigl(\mu_t(\tstar)\bigr)\;\le\zeta_{j,t}(\tstar)\le\;1 .
\end{equation*}

$\zeta_{jt}(\tstar)\to1$, hence $m_{jt}(\omega)=\sum_\theta\zeta_{jt}(\theta)\ell^\theta_j(\omega)\to\ell^{\tstar}_j(\omega)$ for each $\omega$; the sum is finite and the likelihoods lie in $[c_\ell,1]$, so the logarithms are bounded and Ces\`aro applies. Hence (note the sign flip due to the eventual negativity of $log(r_{i,t})$ due to the convergence of $\mu_{i,t}(\tstar)$ to 1)

\begin{equation}\label{eq:ustep1}
  \limsup_{t\to\infty}\frac1t\bigl|\log\mu_{jt}(\that)\bigr|\le h_j(\that)+\frac{\beta_\lambda}{c}.
\end{equation}

By definition applied to $\psi_{q_\ell,t}$ with $q_{\ell+1}$ and $a_{q_\ell q_{\ell+1},t+1}\ge\aund\lambda_t$, then Bayes and $S_{q_\ell t}\le1$ as in the first part of proof,

\begin{equation*}
    \mu_{q_\ell,t+1}(\that)\;\ge\;(\aund\lambda_t)^{1/c}\,
    \frac{\ell^{\that}_{q_\ell}(\omega_{q_\ell,t+1})}{m_{q_\ell t}(\omega_{q_\ell,t+1})}\,
    \mu_{q_{\ell+1}t}(\that).
\end{equation*}

Taking logarithms and dividing by $t$: the likelihood factor is a single term bounded by $\log(1/c_\ell)$ in absolute value, hence $O(1/t)$; and $\frac1{ct}\log\frac1{\aund\lambda_t}\to0$ as before. Hence

\begin{equation}\label{eq:ustep2}
  \limsup_{t\to\infty}\frac1t\bigl|\log\mu_{q_\ell t}(\that)\bigr|
  \le\limsup_{t\to\infty}\frac1t\bigl|\log\mu_{q_{\ell+1}t}(\that)\bigr| .
\end{equation}

Iterating \eqref{eq:ustep2} along the path and applying \eqref{eq:ustep1} at $j$ gives the bound with $h_j(\that)$; taking the minimum over $j\in S_c(i)$ and the infimum over $c>0$ completes the proof.
\end{proof}

\subsection{Proof of Proposition \ref{prop:S=R}}\label{app:P1}

\begin{proof}[proof of Proposition~\ref{prop:S=R}]
\textbf{Point (i)} Suppose there exist $c_1>0$ and $c_2>0$ such that, for an agent $i\in\mathcal{N}$ if, for some $j$ and $k$, different neighbors of $i$:

\begin{equation*}
    a_{ik,t+1}^{\;-1/{c_1}}\;x_k\ge \psi_{i,t}(x)\;\ge\;a_{ij,t+1}^{\;1/{c_2}}\;x_j \qquad\text{for every } x\in(0,\infty)^n.
\end{equation*}

For example, taking $x_j=1$ and $x_k$ converging to 0 gives a contradiction. Therefore, if $\psi_{i,t}$ is $c_1$-dominating $n_1\in\mathcal{N}_i$, and $c_2-$dominated in $n_2\in\mathcal{N}_i$ neighbors, then $n_1=n_2$.

Now, let $i=p_0\to\cdots\to p_M=r_1$ a $c_1$-dominating path with $r_1\in R_{c_1}(i)$ and $i=q_0\to\cdots\to q_N=r_2$ be a $c_2$-dominated path with $r_2\in S_{c_2}(i)$ (remember that we suppose both $R_{c_1}(i)\neq\emptyset$ and $S_{c_2}(i)\neq\emptyset$). We consider that these are the smallest possible paths (i.e not repeating the final loops, the only $c_1$-self dominating or $c_2$-self dominated agent is the last). Then, by induction, we show $N=M$ and $p_l=q_l$ for $0\le l\le N$. The base case is direct since $p_0=q_0=i$. The induction step relies on the preceding point, if $p_l=q_l$ for some $0\le l\le \min(N,M)-1$, for all $t\ge0$, $\psi_{{p_l},t}=\psi_{{q_l},t}$ is $c_1$-dominating in $p_{l+1}$ and $c_2$-dominated in $q_{l+1}$ therefore $p_{l+1}=q_{l+1}$. Moreover, both paths have to finish in the same agent, let us suppose $M\le N$, $p_M=q_M$, since the $c_1$-dominating path terminates at $p_M$, it means that this agent is $c_1$-self-dominating, and $q_M$ is therefore $c_2$-self-dominated, using the preceding argument, and the $c_2$-dominated path terminates too and $N=M$. Symmetrically, the same argument holds if we suppose $N\le M$.

Thus, $p_N=q_M$ implies $r_1=r_2$, $r_1$ is equal to all the elements of $S_{c_2}(i)$ and $r_2$ is equal to all the elements of $R_{c_1}(i)$, therefore, these sets are equal and contain one element $S_{c_2}(i)=R_{c_1}(i)=\{r\}$ with $r=r_1=r_2$. \textbf{Point (ii)} This is a direct corollary of (i) combining with Theorems \ref{thm:lower} and \ref{thm:upper}.
\end{proof}

\subsection{Proof of Proposition \ref{prop:curv}}\label{app:P2}

\begin{proof}[proof of Proposition~\ref{prop:curv}] Fix $i$, an agent of the graph and $j$ one of her neighbors, satisfying the Hypothesis of the Proposition with a sequence $\psi_{it}$ of aggregation functions verifying H1.

(i): Define $x'$ the vector of $(0,\infty)^n$ such that $x'_l=\max(x_j,x_l)$. Moreover, define the continuous line connecting $x_j\mathbf{1}$ to $x'$:

\begin{equation}\label{eq:path}
  x(s):=\Bigl(x_j\,\bigl(x'_l/x_j\bigr)^{s}\Bigr)_{l=1}^{n},\qquad s\in[0,1],
\end{equation}

note that $x(0)=x_j\bone$ and $x(1)=x'$. Moreover, since $x_j\mathbf{1}\le x'$, the monotonicity of $\psi_{it}$ implies $\psi_{i,t}(x)\le\psi_{i,t}(x')$.

Also define $d_l:=\log\frac{x'_l}{x_j}\ge0$, one has $\frac{d}{ds}\log x_l(s)=d_l$, and, using the chain rule and  definition~\ref{def:curv},

\begin{equation}\label{eq:Ddef}
  \frac{d}{ds}\log\psi_{i,t}\bigl(x(s)\bigr)
  =\sum_l\frac{\partial\log\psi_{i,t}}{\partial\log x_l}\bigl(x(s)\bigr)\,d_l
  =\sum_le_{il,t}\bigl(x(s)\bigr)\,d_l\;=:\;D(s)\;\ge\;0 .
\end{equation}

Therefore, 

\begin{equation}\label{eq:FD}
  \log\psi_{i,t}(x')-\log x_j=\int_0^1D(s)\,ds .
\end{equation}

Moreover, using the chain rule again and $d_j=0$, one has,

\begin{equation*}
    \frac{d}{ds}e_{ij,t}\bigl(x(s)\bigr)
    =\sum_l\frac{\partial e_{ij,t}}{\partial\log x_l}\bigl(x(s)\bigr)\,d_l
    =\sum_lH^{i,t}_{jl}\bigl(x(s)\bigr)\,d_l
    =\sum_{l\neq j}H^{i,t}_{jl}\bigl(x(s)\bigr)\,d_l .
\end{equation*}

The Hypothesis of the Proposition then gives immediately 

\begin{equation*}
    \frac{d}{ds}e_{ij,t}\bigl(x(s)\bigr)
    \;\ge\;c\,e_{ij,t}\bigl(x(s)\bigr)\sum_{l\neq j}e_{il,t}\bigl(x(s)\bigr)d_l
    \;=\;c\,e_{ij,t}\bigl(x(s)\bigr)\,D(s).
\end{equation*} 

We can now divide by $e_{i,j,t}>0$ since $j$ is a neighbor of $i$, and integrate between $0$ and $1$, and note that $e_{ij,t}$ is homogeneous of degree 0, so $e_{ij,t}\bigl(x_j\bone\bigr)=e_{ij,t}\bigl(\bone\bigr)=a_{ij,t+1}$, resulting in the following identity,
\begin{equation*}
    c\int_0^1D(s)\,ds\;\le\;\log e_{ij,t}(x')-\log e_{ij,t}\bigl(x_j\bone\bigr)
    \;=\;\log e_{ij,t}(x')-\log a_{ij,t+1}\;\le\;\bigl|\log a_{ij,t+1}\bigr| .
\end{equation*}

Combining with \eqref{eq:FD}, and Lemma~\ref{lem:elast} (implying $\log(e_{ij,t}(x'))\le0$)

\begin{equation*}
    \log\psi_{i,t}(x)\;\le\;\log\psi_{i,t}(x')\;\le\;\log x_j+\frac1c\bigl|\log a_{ij,t+1}\bigr| ,
\end{equation*}

which is \eqref{eq:dominating} after exponentiating. 

(ii): We simply have to apply (i) to the dual functions of the sequence $(\psi_{i,t})_t$. We know from lemma \eqref{lem:phi} that $\phi_{i,t}(x):=\bigl(\psi_{i,t}(1/x_1,\dots,1/x_n)\bigr)^{-1}$ also satisfies Hypothesis 1 and has the same weights as $\psi_{i,t}$ and 

\begin{equation*}
    e^{\phi(i)}_{ij,t}(x)=e_{ij,t}(1/x),
    \qquad
    H^{\phi(i),t}_{jl}(x)=-H^{i,t}_{jl}(1/x) .
\end{equation*}

Therefore, applying (i) to this dual sequence gives $\phi_{i,t}(y)\le a_{ij,t+1}^{-1/c}y_j$ for every $y>0$ with $y=1/x$, giving $\psi_{i,t}(x)\ge a_{ij,t+1}^{1/c}x_j$ finishing the proof.
\end{proof}

\newpage
\bibliographystyle{abbrvnat}
\bibliography{bibliography}

\end{document}